\newcommand{\longversion}[1]{}
\newcommand{\shortversion}[1]{#1}
\title{Knapsack on Graphs with Relaxed Neighborhood Constraints} 
 \author{Palash Dey}{Indian Institute of Technology Kharagpur, India \and \url{https://cse.iitkgp.ac.in/~palash/} }{palash.dey@cse.iitkgp.ac.in}{https://orcid.org/0000-0003-0071-9464}{}
\author{Ashlesha Hota}{Indian Institute of Technology Kharagpur, India}{ashleshahota@gmail.com}{https://orcid.org/0009-0009-8805-4583}{}
\author{Sudeshna Kolay}{Indian Institute of Technology Kharagpur, India \and \url{https://cse.iitkgp.ac.in/~skolay/} }{skolay@cse.iitkgp.ac.in}{https://orcid.org/0000-0002-2975-4856}{}
\authorrunning{Anonymous authors} 
\keywords{Knapsack, neighboring constraints, parameterized complexity} 
\newcommand{\tabitem}{~~\llap{\textbullet}~~}
\newcommand{\defproblem}[3]{
  \vspace{1mm}
\begin{center}
\noindent\fbox{

  \begin{minipage}{\textwidth}
  \begin{tabular*}{\textwidth}{@{\extracolsep{\fill}}l} \textsc{\underline{#1}} \\ \end{tabular*}\vspace{1ex}
  {\bf{Input:}} #2  \\
  {\bf{Question:}} #3
  \end{minipage}

  }
\end{center}
  \vspace{1mm}
}
\newdimen\prevdp
\def\leftlabel#1{\noalign{\prevdp=\prevdepth
   \kern-\prevdp\nointerlineskip\vbox to0pt{\vss\hbox{\ensuremath{#1}}}\kern\prevdp}}
\newcommand{\NP}{\ensuremath{\mathsf{NP}}\xspace}
\newcommand{\NPC}{\ensuremath{\mathsf{NP}}\text{-complete}\xspace}
\newcommand{\NPH}{\ensuremath{\mathsf{NP}}\text{-hard}\xspace}
\newcommand{\el}{\ensuremath{\ell}\xspace}
\newcommand{\WOH}{\ensuremath{\mathsf{W[1]}}-hard\xspace}
\newcommand{\WTH}{\ensuremath{\mathsf{W[2]}}-hard\xspace}
\newcommand{\WO}{\ensuremath{\mathsf{W[1]}}\xspace}
\newcommand{\WT}{\ensuremath{\mathsf{W[2]}}\xspace}
\newcommand{\FPT}{\ensuremath{\mathsf{FPT}}\xspace}
\newcommand{\OPT}{\ensuremath{\mathsf{OPT}}\xspace}
\newcommand{\Pb}{\ensuremath{\mathsf{P}}\xspace}
\newcommand{\tw}{\text{tw}\xspace}
\newcommand{\clique}{{\sc Clique}\xspace}
\newcommand{\suk}{{\sc set-union knapsack problem}\xspace}
\newcommand{\sor}{{\sc Relaxed 1-Neighborhood Knapsack}\xspace}
\newcommand{\sand}{{\sc Relaxed All-Neighborhood Knapsack}\xspace}
\newcommand{\neighboror}{{\sc 1-Neighborhood Knapsack}\xspace}
\newcommand{\neighborand}{{\sc All-Neighborhood Knapsack}\xspace}
\newcommand{\hor}{{\sc 1-Neighborhood Knapsack}\xspace}
\newcommand{\hand}{{\sc All-Neighborhood Knapsack}\xspace}
\newcommand{\setc}{{\sc Set Cover}\xspace}
\let\oldlambda\lambda
\renewcommand{\lambda}{\ensuremath{\oldlambda}\xspace}
\let\oldalpha\alpha
\renewcommand{\alpha}{\ensuremath{\oldalpha}\xspace}
\let\oldDelta\Delta
\renewcommand{\Delta}{\ensuremath{\oldDelta}\xspace}
\newcommand{\mc}{{\sc Maximum Coverage}\xspace}
\newcommand{\clv}{{\sc Cutting $\el$ vertices}\xspace}
\newcommand{\yes}{{\sc yes}\xspace}
\newcommand{\kp}{\text{\sc Knapsack}\xspace}
\renewcommand{\AA}{\ensuremath{\mathcal A}\xspace}
\newcommand{\BB}{\ensuremath{\mathcal B}\xspace}
\newcommand{\CC}{\ensuremath{\mathcal C}\xspace}
\newcommand{\EE}{\ensuremath{\mathcal E}\xspace}
\newcommand{\FF}{\ensuremath{\mathcal F}\xspace}
\newcommand{\GG}{\ensuremath{\mathcal G}\xspace}
\newcommand{\II}{\ensuremath{\mathcal I}\xspace}
\newcommand{\JJ}{\ensuremath{\mathcal J}\xspace}
\newcommand{\OO}{\ensuremath{\mathcal O}\xspace}
\newcommand{\PP}{\ensuremath{\mathcal P}\xspace}
\renewcommand{\SS}{\ensuremath{\mathcal S}\xspace}
\newcommand{\UU}{\ensuremath{\mathcal U}\xspace}
\newcommand{\VV}{\ensuremath{\mathcal V}\xspace}
\newcommand{\XX}{\ensuremath{\mathcal X}\xspace}
\newcommand{\YY}{\ensuremath{\mathcal Y}\xspace}
\newcommand{\NB}{\ensuremath{\mathbb N}\xspace}
\newcommand{\eps}{\ensuremath{\varepsilon}\xspace}
\renewcommand{\epsilon}{\eps}
\newcommand{\ignore}[1]{}
\newcommand{\pr}{\ensuremath{\prime}}
\renewcommand{\leq}{\leqslant}
\renewcommand{\geq}{\geqslant}
\setlist[enumerate]{labelwidth=!, labelindent=0pt}
\algnewcommand\algorithmicinput{\textbf{Input:}}
\algnewcommand\INPUT{\item[\algorithmicinput]}
\algnewcommand\algorithmicoutput{\textbf{Output:}}
\algnewcommand\OUTPUT{\item[\algorithmicoutput]}
\algnewcommand{\LineComment}[1]{\State \(\triangleright\) #1}
\crefname{theorem}{Theorem}{\bf Theorems}
\crefname{observation}{Observation}{\bf Observations}
\crefname{corollary}{Corollary}{\bf Corollary}
\crefname{lemma}{Lemma}{\bf Lemmata}
\crefname{corollary}{Corollary}{\bf Corollaries}
\crefname{proposition}{Proposition}{\bf Propositions}
\crefname{definition}{Definition}{\bf Definitions}
\crefname{claim}{Claim}{\bf Claims}
\crefname{reductionrule}{Reduction rule}{\bf Reduction rules}
\begin{document}

\maketitle

\begin{abstract}
In the knapsack problems with neighborhood constraints that were studied before, the input is a graph \GG on a set \VV of items, each item $v\in\VV$ has a weight $w_v$ and profit $p_v$, the size $s$ of the knapsack, and the demand $d$. The goal is to compute if there exists a feasible solution whose total weight is at most $s$ and total profit is at most $d$. Here, feasible solutions are all subsets \SS of the items such that, for every item in $\SS$, at least one of its neighbors in \GG is also in \SS for \hor, and all its neighbors in \GG are also in \SS for \hand~\cite{borradaile2012knapsack}. In this paper, we study a "relaxation" of the above problems. Specifically, we allow all possible subsets of items to be feasible solutions. However, only those items for which we pick at least one or all of its neighbor (out-neighbor for directed graph) contribute to profit whereas every item picked contribute to the weight; we call the corresponding problems \sor and \sand.

We fill some of the important research gaps that were present for \hor and \hand and perform an extensive study of the classical and parameterized complexity of \sor and \sand for both directed and undirected graphs. In particular, we show that both \sor and \sand are strongly \NPC even on undirected graphs. Regarding parameterized complexity, we show both \sor and \hor are \WTH parameterized by the size $s$ of the knapsack size. Interestingly, both \sand and \hand are \WOH parameterized by (i) knapsack size, \(s\) plus profit demand, \(d\) and also (ii) parameterized by solution size, $b$. For \sor and \hor, we present a randomized color-coding-based pseudo-\FPT algorithm with running time \(\OO^\star\left(e^b\cdot b^b\cdot 2^{b^2}\cdot {\sf min}\{s^2,d^2\}\right)\), parameterized by the solution size \(b\), and consequently by the demand \(d\). This indicates that 1-neighborhood constraints may be easier to handle algorithmically than all-neighborhood constraints in knapsack. We then consider the treewidth of the input graph (for directed graphs, treewidth of the undirected graph obtained by ignoring the directions) as our parameter and design an $\OO^\star\left(2^{\tw}\cdot {\sf min}\{s^2,d^2\}\right)$ time pseudo fixed-parameter tractable (\FPT) algorithm parameterized by treewidth, \(\tw\) for \hor and \hand, and an $\OO^\star\left(4^{\tw}\cdot {\sf min}\{s^2,d^2\}\right)$ time algorithm for both the relaxed variants. Note that, assuming $\Pb\ne\NP$, we cannot have an \FPT algorithm parameterized by the treewidth of the input graph for any of the four problems studied here since all of them are \NPC even for trees. Finally, we present an additive $1$ approximation for \sor when both the weight and profit of every vertex is $1$.

\end{abstract}

\newpage

\section{Introduction}
We study two new variants of the classical \kp problem. In our problems, we have a graph structure on the set of items of the \kp problem. Formally, the input to our problems is a directed graph $\GG = (\VV,\EE)$ on the set \VV of vertices (these are the items in the classical \kp problem) where every vertex $v \in \VV$ has a non-negative weight $w_v$ and a non-negative profit $p_v$. We are also given the size $s$ of the knapsack and a target profit or demand $d$, both of which are non-negative integers. The computational task is to find a subset $\SS\subseteq\VV$ whose profit is at least $d$ and weight at most $s$. The weight of \SS is the sum of the weights of the vertices in \SS. It is the profit of \SS where our two problems differ. In \sor, the profit of a subset \SS is the sum of the profits of the vertices $v$ in \SS for which at least one of its out-neighbors (if any exist in $\GG$) is included in \SS. Whereas, in \sand, the profit of \SS is the sum of the profits of the vertices $v$ in \SS for which all its neighbors (if any exist in $\GG$) are included in \SS.

To imagine a concrete potential application of our problems, let us think of a city planner who needs to choose a set of public projects to fund from an extensive list of proposals. Each project has a fixed cost. The natural interdependency among the project proposals makes a project valuable only if one/some/all of the other proposals on which it depends materialize. For example, a public gym is useful only if a residential complex is nearby. A natural objective of the planner is to maximize the total value of the projects funded, subject to some budget constraint.

Motivated by these applications, a line of research work, initiated by Borradaile~\cite{borradaile2012knapsack}, studies the \kp problem with neighborhood constraints and develops many interesting algorithms and other results. They call these problems \neighboror and \neighborand. In this setting, unlike ours, not all subsets of vertices constitute a valid solution. In \neighboror and \neighborand, the valid solutions are only those subsets of vertices which include, for every vertex in the subset, 
respectively at least one out-neighbor and all out-neighbors of that vertex. The goal in these problems is to check if there exists a solution which meets the target demand and weight limit.

The stringent requirement to pick at least one or all neighbors of any vertex which is already picked can sometime limit the number of possible solutions in a given instance. For example, there are only two valid solutions in any path graph for both \neighboror and \neighborand. However, in many real-life situations, we may want to choose some vertex so that another vertex is useful. Indeed, this could be the case in our earlier example of selecting a subset of public projects to fund. We remedy this drawback of the existing model by allowing all possible subsets of vertices to be valid solutions, and use the neighborhood constraint to decide which of the selected vertices contribute to the profit. Notice that this relaxation makes all possible subsets of the vertices valid solutions for our problems, thereby increasing, often exponentially, the number of valid solutions from the existing problems. Consequently, we often need different techniques to find the computational complexity of our problems.

\subsection{Contribution}

Our primary contribution is to propose two new problems, namely \sor and \sand. These problems eliminate the main drawback of \neighboror and \neighborand which severely restricted their use in practical applications. We perform a comprehensive study of these two proposed problems using the framework of parameterized complexity. Along the way, we also fill some of the important research gaps that were present in the literature on \neighboror and \neighborand. While \hor and \hand have been well-studied in classical settings, their parameterized aspects have remained unexplored. We fill this gap in this paper. We summarize our main results in \Cref{tab:contributions}.

\setlength\LTleft{-1cm}
\begin{longtable}{c|c}\hline 
    \textbf{Knapsack variant} & \textbf{Results}\\\hline\hline

    \makecell[c]{\textsc{1-Neighborhood}\\\textsc{Knapsack}} & \makecell[l]{
        \tabitem strongly \NPC even for uniform, directed graphs (Borradaile et al. \cite{borradaile2012knapsack}) \\
        \tabitem \textbf{W[2]-Hard parameterized by $s$ for directed graphs (\Cref{hor-woh})} \\
        \tabitem  $\OO\left(n+m\right)$ time algorithm for uniform, undirected graphs (Borradaile et al. \cite{borradaile2012knapsack}) \\
        \tabitem $\OO\left(n\cdot P + n^2\right)$ time algorithm for 
        trees (Goebbels et al. \cite{goebbels2022knapsack}) \\        
        \tabitem \textbf{$\OO^\star\left(2^{\tw}\cdot {\sf min}\{s^2,d^2\}\right)$ parameterized by \tw (\Cref{hor-di-fpt})} \\
       \tabitem {\textbf{$\OO^\star\left(e^b\cdot b^b\cdot 2^{b^2}\cdot {\sf min}\{s^2,d^2\}\right)$ randomized algo. parameterized by $b$}} \\ \textbf{(\Cref{hor-fpt-cc})} \\
        \tabitem \textbf{$\OO^\star\left(e^b \cdot b^{\OO(\log b)} \cdot b^b \cdot 2^{b^2} \cdot {\sf min}\{s^2, d^2\}\right)$ deterministic algo. parameterized} \\ \textbf{by $b$} \textbf{(\Cref{hor-fpt-cc-det})} \\
        \tabitem \textbf{$\OO^\star\left(e^{2d} \cdot (2d)^{\OO(\log d)} \cdot (2d)^{2d} \cdot 2^{(2d)^2} \cdot {\sf min}\{s^2, d^2\}\right)$ parameterized by $d$} \\\textbf{(\Cref{hor-fpt-cc-det-profit})} \\
    } \\ \hline

    \makecell[c]{\textsc{All-Neighborhood}\\\textsc{Knapsack}} & \makecell[l]{
         \tabitem strongly \NPC even for uniform, directed graphs (Borradaile et al. \cite{borradaile2012knapsack}) \\
        \tabitem \textbf{W[1]-Hard parameterized by both $s$ and $d$ for directed graphs (\Cref{sand-whard})} \\
        \tabitem \textbf{W[1]-Hard parameterized by $b$ for directed graphs (\Cref{sand-whard-sol-size})} \\
        \tabitem  $\OO\left(n\cdot P + n^2\right)$ time algorithm for 
        undirected graphs (Goebbels et al. \cite{goebbels2022knapsack}) \\
        \tabitem $\OO\left(n\cdot (P+1)\cdot (P+n)\right)$ time algorithm for 
        trees (Goebbels et al. \cite{goebbels2022knapsack}) \\
        \tabitem \textbf{$\OO^\star\left(2^{\tw}\cdot {\sf min}\{s^2,d^2\}\right)$ parameterized by \tw (\Cref{hand-di-fpt})} \\
    } \\ \hline

    \makecell[c]{\textsc{Relaxed}\\\textsc{1-Neighborhood}\\\textsc{Knapsack}} & \makecell[l]{
        \tabitem \textbf{\NPC even for uniform, directed graph (\Cref{sor-npc-uni-dir})} \\
         \tabitem \textbf{strongly \NPC even for undirected bipartite graphs (\Cref{sor-npc-bi})} \\
        \tabitem \textbf{\NPC for trees (\Cref{sor-tree-npc})} \\
        \tabitem \textbf{W[2]-Hard parameterized by $s$ (\Cref{sor-whard})} \\
        \tabitem \textbf{$\OO\left(n+m\right)$ time algorithm for uniform, undirected graphs (\Cref{undi-sor-uni-linear time})} \\
        \tabitem \textbf{$\OO^\star\left(4^{\tw}\cdot {\sf min}\{s^2,d^2\}\right)$ algo. parameterized by \tw (\Cref{sor-di-fpt})} \\
        \tabitem {\textbf{$\OO^\star\left(e^b\cdot b^b\cdot 2^{b^2}\cdot {\sf min}\{s^2,d^2\}\right)$ randomized algo. parameterized by $b$}} \\ \textbf{(\Cref{sor-fpt-cc})} \\
        \tabitem \textbf{$\OO^\star\left(e^b \cdot b^{\OO(\log b)} \cdot b^b \cdot 2^{b^2} \cdot {\sf min}\{s^2, d^2\}\right)$ deterministic algo. parameterized }\\ \textbf{by} $b$ \textbf{(\Cref{sor-fpt-cc-det})} \\
        \tabitem \textbf{$\OO^\star\left(e^{2d} \cdot (2d)^{\OO(\log d)} \cdot (2d)^{2d} \cdot 2^{(2d)^2} \cdot {\sf min}\{s^2, d^2\}\right)$ parameterized by $d$} \\\textbf{(\Cref{sor-fpt-cc-det-profit})} \\
        \tabitem \textbf{additive 1 approximation algo. for uniform, directed graph (\Cref{di-sor-1-approx})} \\
    } \\ \hline

    \makecell[c]{\textsc{Relaxed}\\\textsc{All-Neighborhood}\\\textsc{Knapsack}} & \makecell[l]{
        \tabitem \textbf{strongly \NPC even for uniform, undirected graphs (\Cref{sand-npc})} \\
        \tabitem \textbf{\NPC for trees (\Cref{sand-tree-npc})} \\
        \tabitem \textbf{W[1]-Hard parameterized by both $s$ and $d$ (\Cref{sand-whard})} \\
        \tabitem \textbf{W[1]-Hard parameterized by $b$ (\Cref{sand-whard-sol-size})} \\
        \tabitem \textbf{$\OO^\star\left(4^{\tw}\cdot {\sf min}\{s^2,d^2\}\right)$ (\Cref{sand-di-fpt})} \\
     } \\ \hline  \multicolumn{2}{c}{}\\
\caption{Summary of results. Our results are written in \textbf{bold}. Existing results are shown in normal font. $b$: size of the smallest solution, \tw: treewidth, $s$: size of knapsack, $d$ is target $d$ of knapsack, \(P = \sum_{i=1}^{n} p_i \leq n \cdot \max_{1 \leq i \leq n} p_i\)}.
\label{tab:contributions}
\end{longtable}


In addition to the results listed in \Cref{tab:contributions}, we also show that the problems remain \NPC even on several restricted graph classes, including directed acyclic graphs (\Cref{sor-npc}), bipartite graphs (\Cref{sor-npc-bi}, \Cref{sand-npc}), trees (\Cref{sor-tree-npc}, \Cref{sand-tree-npc}), and general graphs, in both directed and undirected settings.


%


We now give a high-level overview of the techniques used in our results. For brevity, we refer to \neighboror and \neighborand as hard variants and \sor and \sand as soft variants. We show that the soft variants are strongly \NPC even for undirected graphs by reducing from \setc and \clv, respectively. However, \sor on directed graphs with unit weights and profits on vertices remains strongly \NPC. Interestingly, we are able to design an additive 1 approximation algorithm for \sor on directed graph when both the weight and profit of every vertex are 1. For the parameterized side, we prove \WT-hardness results for \sor and \hor parameterized by the size of the knapsack, and \WO-hardness for \sand and \hand parameterized by the size of the knapsack plus demand, and also parameterized by the size of a minimum solution. On algorithmic front, we observe that \sor and \sand admit pseudo polynomial time algorithm for trees. We design pseudo-\FPT algorithm for the directed \hor, \hand, \sor and \sand problems parameterized by the treewidth of the input graph (ignoring the directions of the edges for directed graphs), using dynamic programming over a nice tree decomposition of the input graph.\longversion{ For \hor and \hand, we guess the subset $\SS$ of vertices included in the solution and maintain a table of undominated (weight, profit) pairs for each node in a nice tree decomposition. The DP tracks which vertices in the bag are selected (i.e., part of $\SS$). For soft variants, we additionally track the subset $\PP \subseteq \SS$ of vertices contributing to profit, as selection no longer guarantees profit. The DP proceeds bottom-up through standard transitions, assuming a guessed vertex $v$ is in the solution to ensure feasibility. The final answer is computed at the root from states including $v$. The algorithm runs in time $\OO^\star\left(2^{\tw} \cdot {\sf min}\{s^2,d^2\}\right)$ for hard variants and $\OO^\star\left(4^{\tw} \cdot {\sf min}\{s^2,d^2\}\right)$ for soft variants.}

We next consider the size of a minimum solution as our parameter and develop a color-coding based algorithm. We color the edges randomly and guess a partition of the color set into subsets, each representing a connected component of the solution with at least two vertices in the underlying graph, ignoring the direction of edges. For each color subset, we enumerate all possible tree structures over those colors and perform dynamic programming to compute all undominated feasible (weight-profit) pairs for colorful trees matching that structure. Each DP state maintains a list of such pairs rooted at a specific vertex. We then merge the solutions from all components to obtain the global set of undominated weight-profit pairs. Finally, we output the best pair with weight at most the given budget. This algorithm can be derandomized using standard techniques involving splitters. The randomized version runs in time $\OO^\star\left(e^b\cdot b^b\cdot 2^{b^2}\cdot {\sf min}\{s^2,d^2\}\right)$, and the deterministic version adds only a polynomial overhead. Importantly, note that the size of any minimum solution is at most twice the demand, as each vertex of the solution that contributes to the profit may require at most one additional neighbor to become profitable. Hence, this algorithm also yields a pseudo-\FPT algorithm when parameterized by the demand.  We observe that undirected \sor with unit weights and profits on vertices is solvable in linear time using the algorithm proposed by Borradaile et al.\cite{borradaile2012knapsack}.
\subsection{Related Work}
The classical \kp problem on graphs has been studied widely. Knapsack with neighborhood constraints have a rich literature.  Borradaile et al. studied the hard versions of the problem in \cite{borradaile2012knapsack} under the name 1-neighbour knapsack problem and all-neighbour knapsack problem. They introduce the concept of \textit{viable sets}, which are structured subsets of vertices that maintain feasibility while adhering to a predefined \textit{viable family} (e.g., stars, arborescences). For the \textit{1-neighbour knapsack} problem, they develop a polynomial-time approximation scheme (PTAS) for uniform, directed graphs and show that the general, directed case is hard to approximate within a $\frac{1}{\Omega(\log^{1-\epsilon} n)}$ factor. In the undirected case, they prove that partitioning into stars enables an approximation ratio of $\frac{(1 - \epsilon)}{2}\cdot (1 - \frac{1}{e^{(1-\epsilon)}})$. For the \textit{all-neighbours knapsack} problem, they establish that the uniform, undirected case reduces to the classical \kp, while the uniform, directed case is \NPH but admits a PTAS. Their results provide a structured approach to solving constrained knapsack problems via viable sets.

Knapsack with neighborhood constraints were further explored for special kinds of graphs in \cite{goebbels2021knapsack, goebbels2022knapsack}. The authors analyze the computational complexity of the hard version problems across different graph classes- undirected, directed, trees, complement reducible graphs, and minimum series parallel digraphs. For undirected graphs, \hand admits a polynomial time approximation scheme (PTAS), whereas \hor is APX-hard, implying that no PTAS exists unless \Pb = \NP. The authors design efficient dynamic programming algorithms for solving special graph families such as directed trees, directed co-graphs, and minimal series-parallel digraphs, achieving pseudo-polynomial runtime.

Carraway et al. \cite{carraway1993algorithm} studied the Stochastic Linear Knapsack Problem. In this version, costs are fixed and certain, but the returns are uncertain and follow independent normal distributions. The primary goal of the problem is to maximize the probability that the total return from the selected items meets or exceeds a specified target value. The authors propose a hybrid algorithm that guarantees optimality. This new method combines Dynamic Programming with branch-and-bound techniques, which improves performance compared to traditional Pareto optimization approaches. Mougouei et al. \cite{mougouei2017integer} defined the Binary Knapsack Problem with Dependent Item Values (BKP-DIV) problem, in which the value of an item may depend on the presence or absence of other items in the knapsack. These dependencies can be positive or negative, meaning the value of an item may increase or decrease depending on the selection of other items. This dependency is modeled using fuzzy graphs, which help capture the imprecision in these relationships.

The authors in \cite{lalou2023pseudo} studied the Knapsack Problem with Setups expands on the classical Knapsack Problem (KP) by introducing a more complex variant where items are partitioned into distinct groups, each of which incurs a fixed setup cost if at least one item from that group is selected. The authors extend the dynamic programming solution for the classical knapsack problem to handle the added complexity of group setup costs.

\section{Preliminaries}

We denote the set $\{1,2,\ldots\}$ of natural numbers with \NB. For any integer \el, we denote the sets $\{1,\ldots,\el\}$ and $\{0,1,\ldots,\el\}$ by $[\el]$ and $[\el]_0$ respectively. We denote a graph \GG = (\VV, \EE), where \VV is a non-empty set of vertices and \EE is the set of edges in an undirected graph. In a directed graph \GG, each edge in the set \EE is associated with a direction. In this study, we allow self loops however, we don not allow parallel edges. In an undirected graph, $N_\GG(v) = \{v \mid \{u,v\} \in \EE\}$ denotes the set of open neighborhood of vertex $v$ and the set of closed neighborhood is denoted by $N_\GG[v] = N_\GG(v) \cup \{v\}$ in the graph \GG. For a vertex $v \in \VV$ in a directed graph, $N^+_\GG(v) = \{u \mid (v,u) \in \EE\}$ is the set of all out neighbors of $v$ and $N^-_\GG(v) = \{u \mid (u,v) \in \EE\}$ is the set of all in neighbors of $v$ in the digraph \GG. We omit the subscript~$\GG$ when the graph is clear from context. For a set of vertices $\SS \subseteq \VV$, the subgraph of \GG induced by \SS is denoted by $\GG[\SS]$. Let \( w, p :\VV \to \mathbb{N}_{\geq 0} \) be two functions that assign non-negative weights and profits to the vertices of a graph. We say that a graph is \emph{uniform} if the weight and profit of every vertex is 1, and \emph{general} if the weights and profits are arbitrary.

For the interest of space, we skip the basics of parameterized complexity. They are available in the Appendix.

We now define our problems formally.\longversion{ We start with the classical \kp problem.}

\begin{definition}[\kp]
Given a collection $\II = \{a_1, a_2, \dots, a_n\}$ of $[n]$ items such that each $a_i$ has non-negative size $s_i$ and non-negative profit $p_i$, $\forall i \in [n]$, a knapsack of capacity $c$, and a target profit $d$, does there exists $\JJ \subseteq \II$, such that $w(\JJ) = \sum_{i \in \JJ} s_i \leq c$ and $p(\JJ) = \sum_{i \in \JJ} p_i \geq d$?
\end{definition}

\begin{definition}[\hor] \label{hard-or}
    Given a graph $\GG = (\VV, \EE)$, where each vertex $v \in \VV$ has a weight $w_v$ and a profit $p_v$, a knapsack of size $s$, and a demand $d$, compute a subset $\SS \subseteq \VV$ if exists, such that \(\sum_{\substack{v \in \SS: \\ N(v) \cap \SS \neq \emptyset \ \text{or} \\ N(v) = \emptyset}} w_v \leq s 
    \quad \text{and} \quad 
    \sum_{\substack{v \in \SS: \\ N(v) \cap \SS \neq \emptyset \ \text{or} \\ N(v) = \emptyset}} p_v \geq d.\) For brevity, we call this problem the hard or variant.
\end{definition}

 
\begin{definition}[\sor]
    Given a graph $\GG = (\VV, \EE)$, where each vertex $u \in \VV$ has a weight $w_u$ and a profit $p_u$, a knapsack of size $s$, and a demand $d$, compute a subset $\SS \subseteq \VV$ if exists, such that \(\sum_{v \in \SS} w_v \leq s 
    \quad \text{and} \quad 
    \sum_{\substack{v \in \SS: \\ N(v) \cap \SS \neq \emptyset \ \text{or} \\ N(v) = \emptyset}} p_v \geq d.\) For brevity, we call this problem the soft or variant.
\end{definition}

\longversion{ The \hand variant enforces a strict selection condition, requiring that a vertex can be included in the solution only if all its neighbors are also selected. Both the weight and profit constraints are evaluated over the subset of vertices that satisfy this neighborhood condition.}

\begin{definition}[\hand]
    Given a graph $\GG = (\VV, \EE)$, where each vertex $v \in \VV$ has a weight $w_v$ and a profit $p_v$, a knapsack of size $s$, and a demand $d$, compute a subset $\SS \subseteq \VV$ if exists, such that \(\sum_{\substack{v \in \SS: \\ N(v) \subseteq \SS}} w_v \leq s 
    \quad \text{and} \quad 
    \sum_{\substack{v \in \SS: \\ N(v) \subseteq \SS}} p_v \geq d.\) For brevity, we call this problem the hard and variant.
\end{definition}

\longversion{On the other hand, in the \sand variant, vertices contribute to the total profit only if all their neighbors are also selected. Otherwise, if such vertices are included in the solution, they contribute only to the total weight but not to the profit. Consequently, in \sand, some vertices may be included in the solution not for their own profit contribution but to ensure that other vertices, which depend on them, can contribute to the total profit.}

\begin{definition}[\sand]
    Given a graph $\GG = (\VV, \EE)$, where each vertex $v \in \VV$ has a weight $w_v$ and a profit $p_v$, a knapsack of size $s$, and a demand $d$, compute a subset $\SS \subseteq \VV$ if exists, such that \(\sum_{v \in \SS} w_v \leq s 
    \quad \text{and} \quad 
    \sum_{\substack{v \in \SS: \\ N(v) \subseteq \SS}} p_v \geq d.\) For brevity, we call this problem the soft and variant.
\end{definition}

In all the definitions above, the neighborhood \( N(v) \) refers to $N^+(v)$ the out-neighbors of vertex \( v \) when the graph \( \GG \) is directed (i.e., a digraph). The optimization version of these problems seeks to maximize the total profit subject to the knapsack size constraint \( s \).

For the sake of completeness, we also include the definitions of classical problems such as \setc and \clv, which are used in the subsequent sections.

\begin{definition}[\setc]
  Given a universe \UU of $[n]$ elements and sets $\{S_i\}_{i = 1}^m$, such that $S_i \subseteq \UU$, $\forall i \in [m]$. Does there exists $\II \subseteq [m]$, $|\II| \leq k$, such that $\bigcup_{i\in\II} S_i = \UU$?
\end{definition}

\begin{definition}[\clv]
  Given a graph \GG= (\VV,\EE), integers $l$ and $k$. Does there exists a partition $\VV = \XX \cup \SS \cup \YY$ such that $|\XX| = l$, $|\SS| \leq k$ and there is no edge between \XX and \YY?
\end{definition}

\begin{proposition}\label{self-loop-un-sor}
In the undirected \sor problem, given an instance \GG= (\VV,\EE) if a vertex \( v \in \VV \) has a self-loop, we can eliminate the self-loop by introducing a dummy vertex \( v^\pr \) with zero weight and zero profit, and replacing the self-loop \( \{v, v\} \) with the edge \( \{v, v^\pr\} \), without changing the feasibility of the instance.
\end{proposition}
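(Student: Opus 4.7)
The plan is to show feasibility is preserved by exhibiting an explicit correspondence between feasible solutions of the two instances. The key observation is that in the original graph, the self-loop at $v$ places $v$ in $N(v)$; consequently, whenever $v \in \SS$ the intersection $N(v) \cap \SS$ contains $v$ and is nonempty, so $v$ always contributes its profit $p_v$ to the solution. After the transformation, the dummy neighbor $v^\pr$ (whenever included) plays exactly the same role of ``witnessing'' $v$'s profit contribution, while itself contributing no weight and no profit.

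For the forward direction, I would map a feasible $\SS$ of the original instance to $\SS^\pr = \SS \cup \{v^\pr\}$ in the transformed instance. Since $w_{v^\pr} = 0$, the total weight is unchanged and still at most $s$. Only $v$'s neighborhood is altered by the transformation, so for every $u \in \VV \setminus \{v\}$ the witness condition $N(u) \cap \SS \neq \emptyset$ (or $N(u) = \emptyset$) holds in the original iff it holds in the transformed instance. For $v$ itself, $v^\pr \in N(v) \cap \SS^\pr$ ensures $v$ still contributes $p_v$, while $v^\pr$ adds $0$ to the profit. Hence $\SS^\pr$ has the same total weight and profit and is feasible.

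For the reverse direction, I would set $\SS = \SS^\pr \setminus \{v^\pr\}$. The weight is unchanged, and since $v^\pr$ is adjacent only to $v$, deleting it from the chosen set can affect only $v$'s witness. The delicate case is $v \in \SS^\pr$ with $v^\pr$ serving as its sole witness in the transformed instance: in the original, the self-loop immediately supplies $v \in N(v) \cap \SS$, so $v$ still contributes $p_v$. In every other case, each vertex's contribution is identical in the two instances. Thus the profit of $\SS$ is at least that of $\SS^\pr$, and $\SS$ is feasible. The only mild obstacle is this case analysis around $v$'s witness; the rest follows by direct inspection.
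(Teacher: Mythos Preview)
Your proof is correct and follows essentially the same approach as the paper: both argue by constructing an explicit correspondence between feasible solutions, mapping $\SS \mapsto \SS \cup \{v^\pr\}$ in the forward direction and $\SS^\pr \mapsto \SS^\pr \setminus \{v^\pr\}$ in the reverse, and both hinge on the observation that the self-loop (respectively the dummy neighbor $v^\pr$) guarantees $v$'s profit contribution whenever $v$ is selected. The only cosmetic difference is that the paper conditions the forward map on whether $v \in \SS$, whereas you always add $v^\pr$; since $w_{v^\pr}=p_{v^\pr}=0$ this is immaterial.
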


\begin{proposition}\label{self-loop-di-sor}
In the directed \sor problem, given an instance \( \GG = (\VV, \EE) \) if a vertex \( v \in \VV \) has a self-loop, we can eliminate the self-loop by simply dropping all outgoing edges from \( v \), including the self-loop, without changing the feasibility of the instance.
\end{proposition}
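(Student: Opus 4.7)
The plan is to prove feasibility-preservation by fixing an arbitrary $\SS \subseteq \VV$ and showing that both its total weight and its total profit agree in the original instance and in the modified instance that removes every outgoing edge of $v$ (including the self-loop). Since weights are vertex-labels, the weight $\sum_{u \in \SS} w_u$ is trivially unchanged, so the whole argument reduces to the profit term.

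For each vertex $u \in \SS$ with $u \neq v$, I would first note that the transformation only deletes edges whose tail is $v$, so the out-neighborhood $N^+(u)$ is identical in the two instances. Consequently the profit-triggering condition ``$N^+(u) \cap \SS \neq \emptyset$ or $N^+(u) = \emptyset$'' evaluates the same way, and $u$'s contribution to the total profit is unchanged. The substantive step is to handle $v$ itself: in the original instance the self-loop puts $v \in N^+(v)$, so $v \in N^+(v) \cap \SS$ whenever $v \in \SS$, and $v$ therefore contributes $p_v$ by the first clause of the \sor definition. In the modified instance $N^+(v) = \emptyset$, so $v$ contributes $p_v$ whenever $v \in \SS$ by the second clause. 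Hence $v$'s profit contribution equals $p_v$ precisely when $v \in \SS$ in both instances, the total profits agree, and the feasibility conditions ``weight $\leq s$ and profit $\geq d$'' are equivalent.

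There is no real obstacle here; the result is essentially definitional. The only subtle point one must not miss is that deleting all other outgoing edges of $v$ (not just the self-loop) is also harmless, and this relies crucially on the ``$N(v) = \emptyset$'' clause of the \sor definition: once $v$ is made a sink, this clause keeps $v$ profitable as soon as it is selected, exactly matching the behaviour forced by the self-loop in the original instance. Without this second clause (or under a stricter variant that demanded a proper out-neighbor in $\SS$), the equivalence would fail, which also explains why the analogous reduction for \sand would require a different treatment.
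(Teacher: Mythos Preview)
Your proof is correct and follows essentially the same approach as the paper: fix an arbitrary subset $\SS$, observe that only $N^{+}(v)$ changes under the transformation, and verify that $v$ contributes $p_v$ in both instances (via the self-loop in $\GG$ and via the sink clause in $\GG'$). Your presentation is in fact slightly tighter, since you argue directly that weight and profit coincide for every $\SS$ rather than splitting into two feasibility directions.
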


\begin{proposition}\label{self-loop-sand}
In the directed (or undirected) \sand problem, given an instance \( \GG = (\VV, \EE) \) (\GG can be directed or undirected) if a vertex \( v \in \VV \) has a self-loop, we can simply drop the self loop, without changing the feasibility of the instance.
\end{proposition}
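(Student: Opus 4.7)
The plan is to argue that every subset $\SS \subseteq \VV$ has identical weight and identical profit in the original instance (with the self-loop at $v$) and in the modified instance (with the self-loop removed), which immediately implies that feasibility is preserved in both directions. Let $\GG' = (\VV, \EE \setminus \{(v,v)\})$ denote the graph obtained from $\GG$ by deleting the self-loop at $v$ (interpreting $(v,v)$ as the undirected self-loop $\{v,v\}$ when \GG is undirected). Note that $N_{\GG'}(u) = N_\GG(u)$ for every $u \neq v$, while $N_\GG(v) = N_{\GG'}(v) \cup \{v\}$ (and analogously for $N^+$ in the directed case).

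First I would observe that the weight of any $\SS \subseteq \VV$ in the \sand objective is $\sum_{u \in \SS} w_u$, which depends only on $\SS$ and not on the edge set. Hence the weight of $\SS$ is identical in the two instances, and the knapsack-size constraint is satisfied in one instance if and only if it is satisfied in the other.

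Next I would analyze the profit. For every vertex $u \neq v$ the profit-contribution condition $u \in \SS \land N(u) \subseteq \SS$ is unchanged, since the neighborhood of $u$ is the same in \GG and $\GG'$. For the vertex $v$ itself, $v$ contributes its profit in the original instance iff $v \in \SS$ and $N_\GG(v) \subseteq \SS$, i.e., iff $v \in \SS$ and $N_{\GG'}(v) \cup \{v\} \subseteq \SS$; but the condition $\{v\} \subseteq \SS$ is already guaranteed by $v \in \SS$, so this is equivalent to $v \in \SS \land N_{\GG'}(v) \subseteq \SS$, which is precisely the profit-contribution condition for $v$ in $\GG'$. Therefore the total profit of $\SS$ is identical in both instances, and the demand constraint $\geq d$ holds in one iff it holds in the other.

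There is no real obstacle here, since the argument is purely definitional: the self-loop at $v$ places $v$ in its own neighborhood, but the profit rule already restricts attention to $v \in \SS$, so the self-loop condition is vacuous. Combining the two observations above, $\SS$ is a feasible solution for the instance on \GG iff it is a feasible solution for the instance on $\GG'$, establishing the proposition.
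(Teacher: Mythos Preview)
Your proof is correct and follows essentially the same approach as the paper: both hinge on the observation that the self-loop places $v$ in its own neighborhood, but since the profit rule already requires $v\in\SS$, this extra condition is vacuous. Your presentation is somewhat more streamlined—you argue directly that every subset $\SS$ has identical weight and profit in the two instances, whereas the paper performs an explicit case analysis on whether $v\in\SS$ and whether $v$ contributes to profit—but the underlying idea is the same.
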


For the rest of the paper, we assume without loss of generality that \( \GG \) does not have self-loops. Unless stated otherwise, we use $n$ and $m$ to denote respectively the number of vertices in graph or the items and the number of edges in graph problems; \tw to denote the treewidth of a graph (ignoring directions for directed graphs); $s$ to indicate the knapsack size; $d$ to represent the target solution value; and $b$ to denote the minimum number of vertices in any feasible solution for \yes-instances of the decision version of the problem. In the optimization version, $b$ corresponds to the minimum number of vertices required in any optimal solution that satisfies the given weight constraint.

\section{Hardness Results}
\label{result:hardness}

We present our hardness results in this section. Notice that all the problems on undirected graphs can be reduced to directed graphs by replacing each edge say, $\{u,v\}$ by two directed edges \((u,v)\) and \((v,u)\) in $\GG^\pr$.  Hence, hardness results proved for undirected graphs implies the same for directed graphs. Also, algorithmic results proved for directed graphs applies ditto for undirected graphs too. 

In the interest
of space, we omit the proofs of a few of our results. They are marked ($\star$). We first show that \sor problem is strongly \NPH.

\begin{theorem}\label{sor-npc-bi}
  \sor is strongly \NPC even for undirected bipartite graphs.  
\end{theorem}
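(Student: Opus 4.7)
The plan is to reduce \setc (which is strongly \NPH) to \sor on an undirected bipartite graph. Given an instance of \setc with universe $\UU = \{e_1,\dots,e_n\}$, sets $S_1,\dots,S_m \subseteq \UU$, and budget $k$, I build a bipartite graph $\GG = (A \cup B, \EE)$ with $A=\{a_1,\dots,a_m\}$ representing the sets and $B=\{b_1,\dots,b_n\}$ representing the elements, adding an edge $\{a_i,b_j\}$ whenever $e_j \in S_i$. Each set-vertex $a_i$ is assigned weight $1$ and profit $0$; each element-vertex $b_j$ is assigned weight $0$ and profit $1$. I set the knapsack size $s := k$ and the demand $d := n$. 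The resulting graph is bipartite by construction, and all numerical parameters are bounded by $\max(n,m)$, so the reduction witnesses strong \NP-hardness provided the two directions hold; containment in \NP follows by guessing $\SS$ and evaluating the two sums in polynomial time.

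For the forward direction, given a set cover $\II \subseteq [m]$ with $|\II| \le k$, I take $\SS := \{a_i : i \in \II\} \cup B$. Only set-vertices have nonzero weight, so the weight of $\SS$ is $|\II| \le s$. Every $b_j \in B$ has some neighbor $a_i \in \SS$ because $\II$ covers $\UU$, so each $b_j$ contributes profit $1$, giving total profit $n = d$. Conversely, from a feasible $\SS$ for \sor, I set $\II := \{i : a_i \in \SS\}$; since only set-vertices carry weight, $|\II| \le k$. The only vertices that can contribute profit are element-vertices $b_j \in \SS$ that have at least one neighbor $a_i \in \SS$ (equivalently, $e_j \in S_i$ for some $i \in \II$). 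Attaining profit $n$ therefore forces every element of $\UU$ to be covered by $\II$.

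The main subtlety I need to handle is the "$N(v) = \emptyset$" clause in the definition of \sor, which in principle would let an isolated element-vertex contribute profit without actually being covered, potentially breaking the reverse direction. I dispose of this by a standard preprocessing step: any element $e_j$ appearing in no set makes the \setc instance a trivial \no-instance and can be removed beforehand, so without loss of generality every $b_j$ has at least one neighbor in $\GG$. With this in place the equivalence is tight, and since \setc is strongly \NPH, the same holds for \sor on undirected bipartite graphs.
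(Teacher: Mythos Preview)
Your proof is correct and takes essentially the same approach as the paper: the same \setc-to-bipartite-\sor reduction with element-vertices carrying profit $1$, weight $0$ and set-vertices carrying weight $1$, profit $0$, with $s=k$ and $d=n$ (your $A$ and $B$ are swapped relative to the paper's naming, but the construction is identical). Your explicit handling of the $N(v)=\emptyset$ clause via preprocessing is a detail the paper's proof glosses over, so your argument is in fact slightly more careful on that point.
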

\begin{proof}
\sor $\in$ \NP, because given a certificate i.e. a set of vertices $\VV^\pr \subseteq \VV$, it can be verified in polynomial time if  \(\sum_{v \in \VV^\prime} w_v \leq s 
    \quad \text{and} \quad 
    \sum_{\substack{v \in \VV^\prime: \\ N(v) \cap \VV^\prime \neq \emptyset \ \text{or} \\ N(v) = \emptyset}} p_v \geq d.\)

To show hardness, we reduce the \setc problem to \sor.
Given an instance of \setc as (\UU = [n], $\{S_i\}_{i = 1}^m$, $k$), we construct an instance of \sor as follows: Create a bipartite graph $\GG(\VV = \AA \cup \BB, \EE)$ where the vertex set is partitioned into two parts. For each element \( j \in [n] \) of the universe, create a vertex \( u_j \) in \AA , and for each set \( S_i \) where \( i \in [m] \), create a vertex \( v_i \) in \BB. The edge set \( \EE \) captures the membership relation between elements and sets, and is defined as \( \EE = \{\{u_j, v_i\} \mid j \in S_i, \forall j \in [n], i \in [m]\},\) so that there is an edge between \( u_j \in \AA \) and \( v_i \in \BB \) if and only if the element \( j \) belongs to the set \( S_i \). Assign the weights and profits such that each vertex $u \in \AA$ has weight $w_u = 0$, and profit $p_u = 1$. Similarly, each vertex $v \in \BB$ has weight $w_v = 1$ and profit $p_v = 0$.
Set the knapsack size as $s = k$ and the demand as $d = n$.

Now we claim that the \setc is an \yes instance if and only if the \sor instance is an \yes instance. To see this let us suppose \setc is an \yes instance. This means there exists a set $\II \subseteq [m]$ of size $k$ that covers all the $n$ elements of \UU. We pick the vertices from \BB corresponding to the sets in \II and their adjacent vertices from \AA in our knapsack. Notice that the weight of the items picked is exactly $k$, since each vertex in \BB has weight 1 and we have picked only $k$ vertices from \BB. Each vertex from \AA has weight 0 and profit 1 and has at least one neighbor in \BB. Hence the total sum of weights is $k$. Because \II is a set cover, it must cover all the $n$ elements of \UU i.e. all the vertices from \AA are added into the knapsack and total sum of profits is $n$. Therefore the \sor is also an \yes instance.

Conversely, let us assume \sor is an \yes instance. This means there is a set $\VV^\pr \subseteq \VV$ such that  \(\sum_{v \in \VV^\prime} w_v \leq k 
    \quad \text{and} \quad 
    \sum_{\substack{v \in \VV^\prime: \\ N(v) \cap \VV^\prime \neq \emptyset \ \text{or} \\ N(v) = \emptyset}} p_v = n\)
and each vertex has at least one of its out-neighbors in \BB included in the knapsack otherwise the profit cannot be $n$. Notice that the set $\II = \VV^\pr \cap \BB$ is a valid set cover because $\bigcup_{i: v_i \in \II}{S_i} = \UU$ covers all $n$ elements and has weight $k$.
\end{proof}

The same reduction in \Cref{sor-npc-bi} holds for directed graph and directed \hor when all edges, dictated by membership, are directed from elements in $\AA$ to sets in $\BB$, with the same weight and profit assignments. This yields the following corollaries.

\begin{corollary}\label{sor-npc}
 \sor is strongly \NPC for directed acyclic graphs. 
\end{corollary}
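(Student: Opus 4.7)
The plan is to reuse verbatim the construction from the proof of \Cref{sor-npc-bi}, but orient every edge from the element side $\AA$ to the set side $\BB$. That is, for each $j \in S_i$ we put the directed edge $(u_j, v_i)$ in $\EE$, with weights and profits unchanged: $w_{u_j} = 0, p_{u_j} = 1$ for elements, and $w_{v_i} = 1, p_{v_i} = 0$ for sets. Knapsack size $s = k$ and demand $d = n$ remain as before. Since all arcs go from one side of the bipartition to the other, the resulting digraph contains no directed cycles and is therefore a DAG, which is the extra property we need for the corollary.

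Next I would verify that the Set Cover $\Leftrightarrow$ \sor equivalence survives the switch to directed neighborhoods, recalling that in the directed setting $N(v)$ denotes $N^+(v)$. For an element vertex $u_j$ we have $N^+(u_j) = \{v_i : j \in S_i\}$, which is nonempty as long as every element lies in some set (a standard assumption for \setc; otherwise the instance is trivially \no). For a set vertex $v_i$ we have $N^+(v_i) = \emptyset$, so $v_i$ always contributes its profit $0$, which is harmless. Hence $u_j$ contributes its unit profit in a solution $\SS$ if and only if $\SS$ contains some $v_i$ with $j \in S_i$, exactly the Set Cover condition.

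The forward direction then goes as in \Cref{sor-npc-bi}: given a set cover $\II$ of size $k$, take $\SS = \{v_i : i \in \II\} \cup \AA$; the total weight is $k$ (only the $v_i$'s have positive weight), and every $u_j$ has an out-neighbor in $\SS$ so the total profit equals $n$. Conversely, from a feasible $\SS$ we set $\II = \{i : v_i \in \SS\}$; the weight bound forces $|\II| \le k$, and meeting the profit demand $n$ forces every $u_j$ to have an out-neighbor in $\SS \cap \BB$, i.e. $\II$ covers $\UU$. Since all numerical parameters $s, d$ and all weights/profits are polynomially (indeed constantly) bounded in the input, and the reduction is polynomial, strong \NP-completeness transfers.

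I do not expect a real obstacle here; the only subtlety worth double-checking is the role of the ``or $N(v) = \emptyset$'' clause in \Cref{sor}. In the directed construction this clause activates for every set vertex $v_i$, making them automatically ``profitable,'' but since $p_{v_i} = 0$ this does not inflate the attainable profit. The rest is bookkeeping identical to \Cref{sor-npc-bi}.
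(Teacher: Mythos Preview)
Your proposal is correct and follows exactly the approach the paper intends: the paper states just before the corollary that ``the same reduction in \Cref{sor-npc-bi} holds for directed graph \ldots\ when all edges, dictated by membership, are directed from elements in $\AA$ to sets in $\BB$, with the same weight and profit assignments,'' and you have spelled out precisely this. Your additional remarks about $N^+(v_i)=\emptyset$ triggering the ``or $N(v)=\emptyset$'' clause harmlessly, and about elements covered by no set, are helpful details the paper leaves implicit.
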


The reduction from \setc to \sor is parameter preserving.

\begin{corollary}\label{sor-whard}
  \sor parameterized by the size of the knapsack $s$, is W[2]-hard.
\end{corollary}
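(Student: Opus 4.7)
The plan is to show that the reduction constructed in the proof of \Cref{sor-npc-bi} is actually a parameterized reduction with respect to the desired parameter, from which the \WTH lower bound transfers immediately. I would first invoke the classical result that \setc parameterized by the solution size $k$ (the number of sets chosen) is \WTH, which is a standard starting point for \WT-hardness proofs.

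Next, I would replay the construction from \Cref{sor-npc-bi}: given an instance $(\UU=[n], \{S_i\}_{i=1}^m, k)$ of \setc, build the bipartite graph $\GG = (\AA \cup \BB, \EE)$ with weights and profits as defined there, and set $s = k$ and $d = n$. The correctness of the reduction (yes-instance to yes-instance and vice versa) has already been established, so I would just cite it and avoid repeating the argument. The only new thing to observe is the parameter dependence: the knapsack size $s$ in the produced \sor instance equals $k$, and in particular $s$ is a computable function of (indeed, equal to) the parameter $k$ of the source \setc instance. Hence the reduction is a valid parameterized reduction from \setc parameterized by $k$ to \sor parameterized by $s$, and \WT-hardness is preserved.

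There is essentially no substantive obstacle to overcome; the only thing to be careful about is to explicitly point out that $s$ is bounded by a function of $k$ (here just $s=k$) so that the reduction meets the definition of a parameterized reduction, and to check that the construction runs in time polynomial in the size of the input (which is clear from the construction, as $|\VV| = n + m$ and $|\EE| \le nm$). Together with the known \WTH of \setc[$k$], this yields the claimed \WT-hardness of \sor parameterized by $s$.
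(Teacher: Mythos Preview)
Your proposal is correct and essentially identical to the paper's own justification: the paper simply states that the reduction from \setc to \sor in \Cref{sor-npc-bi} is parameter preserving (since $s=k$), and you spell out exactly that observation together with the standard \WTH of \setc parameterized by $k$.
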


\begin{corollary}\label{hor-woh}
  \hor parameterized by the size of the knapsack $s$, is W[2]-hard for directed graphs.  
\end{corollary}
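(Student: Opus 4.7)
The plan is to reuse the reduction from \Cref{sor-npc-bi}, now interpreted as a directed acyclic graph with every arc oriented from an element-vertex to a set-vertex, exactly as suggested in the remark preceding this corollary, and to verify that the very same instance also certifies hardness of the \emph{hard} variant \hor. Starting from a \setc instance $(\UU=[n], \{S_i\}_{i=1}^m, k)$, I would build a digraph with element-vertices $u_j$ of weight $0$ and profit $1$, set-vertices $v_i$ of weight $1$ and profit $0$, and an arc $(u_j, v_i)$ whenever $j \in S_i$; then set $s=k$ and $d=n$. Because the \hor parameter $s$ equals the \setc parameter $k$, the reduction is parameter-preserving, so the \WTH-ness of \setc parameterized by $k$ will transfer to \hor parameterized by $s$.

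For the forward direction, given a cover $\II \subseteq [m]$ with $|\II| \leq k$, take $\SS = \{u_j : j \in [n]\} \cup \{v_i : i \in \II\}$. Every $u_j \in \SS$ has at least one out-neighbor $v_i \in \SS$ since $\II$ covers $\UU$, and every $v_i$ is a sink, so $N^+(v_i) = \emptyset$ triggers the vacuous clause of the \hor definition. Hence all vertices of $\SS$ qualify: the total qualifying weight is $|\II| \leq k = s$, and the total qualifying profit is $n = d$.

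For the converse, suppose $\SS$ is a feasible \hor solution. Profit can only come from $u_j$-vertices (each worth $1$), so meeting $d = n$ forces every $u_j$ into $\SS$ and forces each such $u_j$ to satisfy the neighborhood condition, which in turn places at least one $v_i$ with $j \in S_i$ in $\SS$ for every $j$. Thus $\{i : v_i \in \SS\}$ is a set cover of $\UU$; and, because every set-vertex is a sink and therefore always qualifies while element-vertices contribute weight $0$ whether they qualify or not, the cardinality of this cover equals the total qualifying weight, which is at most $s=k$. The only delicate point I anticipate is the vacuous clause $N(v)=\emptyset$ in the \hor definition: without it, the set-vertices would fail to satisfy the neighborhood requirement and the reduction would break. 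Once that clause is invoked, the verification is a routine unpacking of definitions, so I expect no real obstacle beyond this bookkeeping.
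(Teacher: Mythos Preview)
Your proposal is correct and follows exactly the approach the paper indicates: the paper's own justification for this corollary is simply the remark that the reduction in \Cref{sor-npc-bi} carries over to directed \hor when all edges are oriented from element-vertices to set-vertices, and you have faithfully unpacked that. Your explicit handling of the vacuous clause $N^+(v_i)=\emptyset$ for sinks and the observation that qualifying weight counts precisely the selected set-vertices are the right details to check, and they go through as you describe.
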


Interestingly, \sor is strongly \NPC even for uniform, directed graphs. The proof is similar to the proof of uniform, directed \hor proved in \cite{borradaile2012knapsack}. 

\begin{observation}\label{sor-npc-uni-dir}
   \sor is strongly \NPC even for uniform, directed graphs. 
\end{observation}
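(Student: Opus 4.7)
The plan is to mirror the reduction of Borradaile et al.~\cite{borradaile2012knapsack} from \setc to uniform directed \hor, adapted to the \sor setting. Given a \setc instance $(\UU, \{S_i\}_{i=1}^m, k)$ with $|\UU| = n$, I would build a uniform directed graph $\GG$ with three kinds of vertices: an element vertex $u_j$ for each $j \in \UU$, a set vertex $v_i$ for each $S_i$ (constructed as a sink so that it always contributes once selected), and a collection of auxiliary ``locking'' vertices attached to each $u_j$ (for example, a directed path of length $L$ polynomial in $n$ and $m$ ending at $u_j$) whose purpose is to force every $u_j$ to be selected in any valid solution. Directed edges $(u_j, v_i)$ for each $j \in S_i$ encode element-to-set incidence, so that $u_j$ contributes to profit iff some $v_i$ with $j \in S_i$ is also in $\SS$.

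The key choice of parameters is $s = d = nL + n + k$ with all weights and profits equal to $1$. In this uniform regime, $|\SS| \leq s$ together with $|\SS^{sat}| \geq d = s$ forces $|\SS^{sat}| = |\SS| = s$, i.e.\ every vertex of $\SS$ must be satisfying. This recovers exactly the ``self-sufficient'' structural condition exploited in the Borradaile et al.\ reduction for \hor, so the same construction essentially carries over. The forward direction is then immediate: a set cover $\II$ of size $k$ yields the \sor solution consisting of all auxiliary path vertices, all element vertices $u_j$, and the $k$ set vertices $\{v_i : i \in \II\}$, each of which satisfies the out-neighbor requirement.

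The main obstacle lies in the backward direction, where I must show that any valid \sor solution of size $nL + n + k$ corresponds to a set cover of size at most $k$. The argument is a careful counting based on the gadget structure: an auxiliary path vertex contributes to profit only if its successor is also in $\SS$, so contributing auxiliary vertices form a contiguous suffix ending at $u_j$, and any such suffix forces $u_j \in \SS$; dually, if $u_j \in \SS$ then $\SS$ must contain some $v_i$ covering $j$. Choosing $L$ sufficiently larger than $m$ (e.g., $L > m - k$) makes the budget too tight to compensate for missing any $u_j$ with extra set vertices, which pins down $|\SS \cap \{u_j\}_j| = n$ and then forces $|\SS \cap \{v_i\}_i| = k$, with the chosen $v_i$'s forming a cover of $\UU$.

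Since $|\UU|$, $|\FF|$, $L$, $s$, $d$, and all vertex weights and profits are polynomial in the input size of the \setc instance, the reduction is polynomial-time and numerically polynomial, and strong \NPC of \sor on uniform directed graphs follows. Membership in \NP is clear, as a candidate $\SS$ can be verified in polynomial time against the weight and profit constraints.
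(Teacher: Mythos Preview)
Your central observation is exactly right and is essentially the entire content of the paper's argument: in the uniform setting, choosing $s=d$ forces $|\SS^{sat}|=|\SS|$, so every selected vertex must satisfy the 1-neighbour condition, which collapses \sor to \hor on that instance. The paper's justification for the observation is precisely this, phrased as ``the proof is similar to the proof of uniform, directed \hor in~\cite{borradaile2012knapsack}''; so at the level of strategy your proposal matches the paper.

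There is, however, a real gap in the concrete gadget you sketch. With a directed path $p_1^{(j)}\to\cdots\to p_L^{(j)}\to u_j$, the all-vertices-must-contribute condition only forces the selected path vertices to form a \emph{suffix}; it does not force the whole path to be taken. Hence a valid solution may drop a prefix of $r$ path vertices and spend the freed budget on $r$ extra set vertices, so that $|\SS\cap\{v_i\}|=k+r$ rather than $k$. The selected sets still cover every $u_j$ that is present, but you only recover a cover of size $k+r$, not one of size $\le k$, and the backward direction fails. For a concrete counterexample take $n=2$, $\UU=\{1,2\}$, sets $S_1=\{1\}$, $S_2=\{2\}$, $S_3=\{1\}$, and $k=1$: there is no set cover of size $1$, yet dropping one prefix vertex from the path of $u_2$ lets you fit both $v_1$ and $v_2$ into $\SS$ while keeping $|\SS|=2L+3=s=d$ and every vertex satisfied. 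This is not a flaw in the $s=d$ idea but in your reconstruction of the underlying gadget; a correct construction must use an all-or-nothing structure (so that no auxiliary vertex can be dropped without cascading) or reduce from a different source problem. Once that is fixed, your $s=d$ observation carries the rest.
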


When the graph \GG is an edgeless graph, both the \sor and \sand problems reduce to the classical \kp problem itself and admit a pseudo-polynomial algorithm. We now show that \sor is \NPC for trees.

\begin{theorem}\label{sor-tree-npc}
 ($\star$) \sor is \NPC for trees.  
\end{theorem}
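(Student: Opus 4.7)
The plan is to prove NP-completeness of \sor on trees by (i) a trivial NP-membership argument and (ii) a polynomial-time reduction from the classical \kp problem (which is NP-complete) using a star gadget, since stars are themselves trees.

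For membership in NP, given a candidate subset $\SS \subseteq \VV$, we can in polynomial time compute $\sum_{v \in \SS} w_v$ and compare with $s$, and compute $\sum_{v \in \SS,\, N(v) \cap \SS \neq \emptyset \text{ or } N(v) = \emptyset} p_v$ and compare with $d$. So \sor $\in$ NP irrespective of graph class.

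For hardness, I would take an arbitrary instance $(\II = \{a_1,\ldots,a_n\}, \{s_i\}_{i\in[n]}, \{p_i\}_{i\in[n]}, c, d)$ of \kp and construct a star $\GG = (\VV, \EE)$ on the vertex set $\VV = \{r\} \cup \{v_1, \ldots, v_n\}$ with edges $\EE = \{\{r, v_i\} : i \in [n]\}$. Assign weights and profits by $w_r = 0$, $p_r = 0$, $w_{v_i} = s_i$ and $p_{v_i} = p_i$ for every $i \in [n]$. Set the knapsack size of the new instance to $s = c$ and keep the demand $d$ unchanged. This construction is clearly polynomial-time and produces a tree.

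The correctness argument has two directions. In the forward direction, if $\JJ \subseteq [n]$ solves the \kp instance, then $\SS := \{r\} \cup \{v_i : i \in \JJ\}$ has total weight $\sum_{i \in \JJ} s_i \leq c = s$, and every $v_i \in \SS$ has its unique neighbor $r$ in $\SS$, so the contributing profit is exactly $\sum_{i \in \JJ} p_i \geq d$. For the converse, given any feasible $\SS$ for the \sor instance with profit at least $d$ (assume $d \geq 1$, otherwise the instance is trivially yes on both sides), I would first argue we may assume $r \in \SS$: if $r \notin \SS$ then every $v_i \in \SS$ has $N(v_i) \cap \SS = \emptyset$ and $N(v_i) \neq \emptyset$, contributing zero profit, contradicting profit $\geq d \geq 1$. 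Once $r \in \SS$, setting $\JJ = \{i : v_i \in \SS\}$ yields $\sum_{i \in \JJ} s_i = \sum_{v \in \SS} w_v \leq s = c$ and $\sum_{i \in \JJ} p_i \geq d$, so $\JJ$ solves \kp.

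I do not expect any real obstacle: the construction is a star (hence a tree), the reduction preserves the budget and demand exactly, and both directions of the equivalence are immediate once one observes that the central vertex $r$ must be in any profitable solution. The only genuine subtlety to flag is the corner case $d = 0$, which is trivially a yes-instance on both sides (take $\SS = \emptyset$ and $\JJ = \emptyset$), so the reduction can assume $d \geq 1$ without loss of generality.
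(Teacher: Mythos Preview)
Your proposal is correct and takes essentially the same approach as the paper: both reduce from \kp by building a star with a zero-weight, zero-profit center adjacent to one leaf per item, so that including the center for free activates exactly the chosen items. Your write-up is in fact a bit more careful than the paper's (you explicitly handle the $d=0$ corner case and argue why the center must lie in any profitable solution), but the construction and equivalence argument are identical.
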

\longversion{\begin{proof}
    \sor for trees $\in$ \NP. To prove hardness, we reduce from \kp problem. 

    We denote an instance of \kp as $(\II, \{s_i\}_{i \in \II}, \{p_i\}_{i \in \II}, c, \alpha)$. Given, an instance of \kp problem, we construct an instance of \sor as follows: let the graph is $\GG = (\VV, \EE)$ with $\VV = \II \cup \{a_{n+1}\}$ where vertex $v_i$ corresponds to item $a_i$ has weight $w_i$ and profit $p_i$, and the vertex $v_{n+1}$ has weight 0 and profit 0. Define the edge set \( \EE = \{ \{v_i, v_{n+1}\} | 1 \leq i \leq n\}\).

Since the central vertex has weight and profit 0, it can always be taken in any feasible solution. Now, the \kp instance $(\II, \{s_i\}_{i \in \II}, \{p_i\}_{i \in \II}, c, \alpha)$ is an \yes instance if and only if the \sor instance $(\GG = (\VV,\EE), \{w_i\}_{i \in \VV}, \{p_i\}_{i \in \VV}, s = c, d = \alpha)$ is an \yes instance. 
\end{proof}
}

\hand for undirected graphs is solvable in pseudo-polynomial time (see \cite{goebbels2021knapsack}). However, it turns out \sand is strongly \NPC even for uniform, undirected graphs. Notice that any optimal solution for \sand is a set $\SS \cup \XX = \VV^\pr \subseteq \VV$ such that the subgraph $\GG[\VV^\pr]$ contains a set of vertices \XX whose neighborhoods are entirely contained in $\VV^\pr$, and hence they contribute to the total profit. There may exist a set \SS of vertices in $\VV^\pr$ which do not themselves contribute to the profit (because not all of their neighbors are selected in $\XX \cup \SS $), but are included in the solution to ensure that the vertices in \XX do satisfy the profit condition. These vertices in \SS can be seen as separating \XX and $\VV \setminus \VV^\pr$.

\begin{theorem}\label{sand-npc}
($\star$) \sand is strongly \NPC even for uniform, undirected graphs. 
\end{theorem}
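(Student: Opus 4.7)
The plan is to reduce from \clv, as hinted by the authors' remark in the overview that the soft variants reduce from \setc and \clv. Given an instance $(\GG=(\VV,\EE),l,k)$ of \clv, I construct a uniform undirected \sand instance on the same graph $\GG$, with $w_v=p_v=1$ for every $v\in\VV$, knapsack size $s:=l+k$, and demand $d:=l$. Membership in \NP is immediate: guess $\SS\subseteq\VV$ and verify the weight and profit conditions in polynomial time. Strong \NP-hardness will follow because the weights, profits, $s$, and $d$ are all polynomially bounded in the size of the \clv instance, and \clv is \NPH.

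For the forward direction, suppose $(\GG,l,k)$ is a \yes-instance with a partition $\VV=\XX\cup\SS_0\cup\YY$ such that $|\XX|=l$, $|\SS_0|\leq k$, and no edge of $\EE$ has one endpoint in $\XX$ and the other in $\YY$. I take $\SS:=\XX\cup\SS_0$. Then $|\SS|=l+|\SS_0|\leq l+k=s$, so the weight constraint holds (uniform weights). Moreover, for every $v\in\XX$, all neighbors of $v$ lie in $\XX\cup\SS_0=\SS$ because no neighbor of $v$ can lie in $\YY$. Hence every vertex of $\XX$ contributes to the \sand profit, giving total profit at least $|\XX|=l=d$.

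For the reverse direction, suppose $\SS\subseteq\VV$ is a feasible \sand solution, i.e.\ $|\SS|\leq s=l+k$ and at least $d=l$ vertices of $\SS$ have their entire $\GG$-neighborhood contained in $\SS$. Pick any such subset $\XX\subseteq\SS$ with $|\XX|=l$, and set $\SS_0:=\SS\setminus\XX$ and $\YY:=\VV\setminus\SS$. These three sets are pairwise disjoint and partition $\VV$; furthermore $|\SS_0|=|\SS|-l\leq(l+k)-l=k$. Finally, no edge joins $\XX$ to $\YY$: each vertex of $\XX$ has its whole neighborhood in $\SS=\XX\cup\SS_0$, which is disjoint from $\YY$. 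Hence $(\XX,\SS_0,\YY)$ is a valid \clv solution, completing the equivalence.

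The main obstacle is essentially conceptual rather than technical: recognizing that in the uniform \sand problem, the selected set naturally splits into ``profit-contributing'' vertices (those whose entire neighborhood is inside the selection) and ``separator'' vertices (included only to satisfy neighborhood requirements of the former), which is exactly the $(\XX,\SS_0)$ structure of \clv. Once this correspondence is identified, the two directions are straightforward. I should also briefly justify that \clv is (strongly) \NPH so that the reduction yields strong \NPC; this is a classical result, and since weights, profits, $s$, and $d$ in the constructed instance are all polynomial in $|\VV|$, strong hardness is automatic.
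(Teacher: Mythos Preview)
Your proposal is correct and matches the paper's proof essentially line for line: the same reduction from \clv on the identical graph with unit weights and profits, $s=l+k$, $d=l$, and the same two-direction argument. If anything, your reverse direction is slightly cleaner, since you explicitly choose a subset $\XX$ of size exactly $l$, whereas the paper's write-up defines $\XX$ as all profit-contributing vertices (giving $|\XX|\geq l$) and then states $|\XX|=l$ without the trimming step.
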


\longversion{\begin{proof}
\sand $\in$ \NP, because given a certificate i.e. a set of vertices  $\VV^\pr \subseteq \VV$, it can be verified in polynomial time if  \(\sum_{v \in \VV^\prime} w_v \leq s 
\quad \text{and} \quad 
\sum_{\substack{v \in \VV^\prime: \\ N(v) \subseteq \VV^\prime}} p_v \geq d.\)

To show hardness, we reduce the \clv problem to \sand. 

Given an instance of \clv as $(\GG = (\VV, \EE),k,l)$, we construct an instance of \sand as follows: we consider the same graph \GG, for each vertex $u \in \VV$, set $w_u = 1$, $p_u = 1$, knapsack capacity $s = l+k$ and profit $d=l$. We now claim that \clv instance $(\GG = (\VV, \EE),k,l)$ is an \yes instance if and only if \sand instance $(\GG = (\VV,\EE), \{w_i\}_{i \in \VV} = 1, \{p_i\}_{i \in \VV} = 1, s = l+k, d = l)$ is an \yes instance. 

Suppose \clv is an \yes instance then there exists a partition $\VV = \XX \cup \SS \cup \YY$ in $\GG$ such that $|\XX| = l$, $|\SS| \leq k$, and there is no edge between $\XX$ and $\YY$. Consider the set $\VV^\pr = \XX \cup \SS$ in $\GG$. Note that
\(\sum_{v \in \VV^\pr} w_v = |\VV^\pr| = |\XX| + |\SS| \leq l + k = s.\)For each $v \in \XX$, since there is no edge from $\XX$ to $\YY$, all neighbors of $v$ lie in $\XX \cup \SS = \VV^\pr$. Thus, $N(v) \subseteq \VV^\pr$, so $v$ contributes profit. Hence,
\(
\sum_{\substack{v \in \VV^\pr: \\ N(v) \subseteq \VV^\pr}} p_v \geq |\XX| = l = d.
\) Therefore, $\VV^\pr$ is a valid solution for \sand.

Suppose \sand is a YES-instance, then \clv is a YES-instance: Suppose $\VV^\pr \subseteq \VV$ is a solution for \sand such that
\(
\sum_{v \in \VV^\pr} w_v \leq s = k + l \quad \text{and} \quad \sum_{\substack{v \in \VV^\pr: \\ N(v) \subseteq \VV^\pr}} p_v \geq d = l.
\) Let $\XX \subseteq \VV^\pr$ denote the set of vertices in $\VV^\pr$ such that all their neighbors are also in $\VV^\pr$, i.e., 
\(
\XX = \{ v \in \VV^\pr \mid N(v) \subseteq \VV^\pr \}.
\)
Then, the total profit contributed is exactly $|\XX| \geq l$. Since $\sum_{v \in \VV^\pr} w_v = |\VV^\pr| \leq k + l$, define $\SS = \VV^\pr \setminus \XX$, and note that
\(
|\SS| = |\VV^\pr| - |\XX| \leq (k + l) - l = k.
\)
Now define $\YY = \VV \setminus \VV^\pr$. Since every $v \in \XX$ has all its neighbors in $\VV^\pr$, there is no edge between $\XX$ and $\YY$. Hence, we have a partition $\VV = \XX \cup \SS \cup \YY$ with $|\XX| = l$, $|\SS| \leq k$, and no edge between $\XX$ and $\YY$. Thus, \clv is a YES-instance.
\end{proof}
}

The reduction from \clv to \sand is parameter preserving. Marx showed that \clv is \WOH parameterized by both $k$ and $l$, where $|\XX| = l$ and $|\SS| \leq k$ in \cite{marx2006parameterized}.
\begin{corollary}\label{sand-whard}
  \sand parameterized by the size of the knapsack $s$, plus demand $d$, is \WOH even for uniform, directed graphs.
\end{corollary}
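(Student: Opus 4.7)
The plan is to reuse the reduction from the proof of Theorem \ref{sand-npc} verbatim and simply track how the parameters on the two sides relate. Recall that from an instance $(\GG, k, l)$ of \clv that construction produces a uniform \sand instance on the same vertex set (unit weights and profits) with knapsack size $s = k+l$ and demand $d = l$. Thus $s + d = k + 2l \leq 2(k+l)$, so the parameter $s+d$ on the \sand side is bounded by a linear function of the parameter $k+l$ on the \clv side. Combined with the polynomial running time of the construction and the equivalence of \yes-instances already established in the proof of Theorem \ref{sand-npc}, this shows the reduction is in fact an FPT reduction from \clv parameterized by $k+l$ to \sand parameterized by $s+d$ on uniform, undirected graphs.

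With the FPT reduction in hand, I would invoke Marx's result \cite{marx2006parameterized} that \clv is \WOH parameterized by $k+l$ to conclude \WO-hardness of \sand parameterized by $s+d$ on uniform, undirected graphs. To then lift the statement to directed graphs, as the corollary requires, I would appeal to the standard observation made at the very beginning of Section \ref{result:hardness}: replacing every undirected edge $\{u,v\}$ by the two directed arcs $(u,v)$ and $(v,u)$ gives a digraph whose out-neighborhood of each vertex equals its undirected neighborhood, and this transformation preserves the vertex weights, profits, uniformity, as well as the values of $s$ and $d$. Consequently the digraph instance is a \yes-instance of directed \sand if and only if the original undirected instance is a \yes-instance, completing the deduction.

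I do not anticipate a real technical obstacle here; the work is essentially parameter bookkeeping on top of an already proved reduction. The only point that requires checking is that \emph{both} $s$ and $d$ (not merely their maximum) are bounded by a function of $k+l$, which is immediate from the construction. If one wanted the stronger statement of \WO-hardness for $s$ alone, or for $d$ alone, the present reduction would not suffice and a genuinely new construction would be needed; but as the corollary is phrased for the combined parameter $s+d$, the above argument suffices.
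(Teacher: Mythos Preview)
Your proposal is correct and mirrors the paper's own argument: the paper simply notes that the reduction from \clv in Theorem~\ref{sand-npc} is parameter preserving, invokes Marx's \WO-hardness of \clv parameterized by $k$ and $l$, and relies on the blanket observation at the start of Section~\ref{result:hardness} to pass from undirected to directed graphs. Your write-up makes the parameter bookkeeping and the undirected-to-directed lift more explicit, but the approach is the same.
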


\begin{corollary}\label{sand-whard-sol-size}
  \sand parameterized by solution size, $b$ is \WOH.
\end{corollary}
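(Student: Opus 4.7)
The plan is to reuse verbatim the reduction from \clv to \sand constructed in the proof of \Cref{sand-npc}, and to observe that the minimum solution size $b$ of the produced \sand instance is bounded by a function of the \clv parameter $k + l$. Since Marx~\cite{marx2006parameterized} shows that \clv is \WOH when parameterized by $k + l$, a parameterized reduction whose output parameter is bounded by a function of $k + l$ yields \WO-hardness of the target problem under the new parameter.

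First, recall the construction: given a \clv instance $(\GG,k,l)$, we take the same graph $\GG$ with all weights and profits equal to $1$, set knapsack size $s = k + l$, and set demand $d = l$. From the correctness argument of \Cref{sand-npc}, the instance is a \yes-instance of \sand iff it is a \yes-instance of \clv. Thus the reduction is correct; it only remains to bound $b$.

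Next, I would argue the parameter bound. In the constructed instance every vertex has weight $1$, so any feasible subset $\VV' \subseteq \VV$ satisfies $|\VV'| = \sum_{v \in \VV'} w_v \leq s = k + l$. In particular, the smallest feasible solution (when one exists) has size at most $k + l$, so the parameter $b$ of the produced \sand instance satisfies $b \leq k + l$. Combined with polynomial-time computability of the reduction, this is exactly the definition of a parameterized reduction from $(k+l)$-\clv to $b$-\sand.

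I do not expect a serious obstacle here: the heavy lifting (the NP-hardness reduction and its correctness) is already established in \Cref{sand-npc}, and the only new ingredient is the weight-sum argument bounding $b$ by $k + l$. The one place to be careful is confirming that Marx's \WO-hardness for \clv indeed holds for the combined parameter $k + l$ (and not just one of $k$ or $l$ in isolation), which is the statement cited in \cite{marx2006parameterized} and already invoked for \Cref{sand-whard}.
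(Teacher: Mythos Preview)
Your proposal is correct and follows exactly the paper's approach: both derive the corollary directly from the \clv reduction of \Cref{sand-npc}, observing that since all weights equal $1$ and $s = k + l$, any feasible solution has size at most $k + l$, so $b \le k + l$ and the reduction is parameter-preserving with respect to Marx's combined parameter.
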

Similar to \sor, \sand is \NPC for trees.

\begin{theorem}\label{sand-tree-npc}
  \sand is \NPC for trees.  
\end{theorem}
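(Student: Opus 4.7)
The plan is to mirror the argument used for \Cref{sor-tree-npc}: reduce from the classical \kp (which is \NPC in the ordinary sense) to \sand restricted to trees, using essentially the same star gadget. Membership in \NP is immediate, since given a candidate subset $\SS$ we can verify in polynomial time that $\sum_{v\in\SS} w_v \le s$ and that the profit, summed only over vertices $v\in\SS$ with $N(v)\subseteq\SS$, is at least $d$.

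For hardness, given an instance $(\II=\{a_1,\ldots,a_n\}, \{s_i\}, \{p_i\}, c, \alpha)$ of \kp, I would build the star $\GG=(\VV,\EE)$ with $\VV = \{v_1,\ldots,v_n\}\cup\{v_{n+1}\}$, where $v_{n+1}$ is a central vertex of weight $0$ and profit $0$, each $v_i$ is a leaf with weight $w_{v_i}=s_i$ and profit $p_{v_i}=p_i$, and $\EE=\{\{v_i,v_{n+1}\}\mid 1\le i\le n\}$. Set knapsack size $s=c$ and demand $d=\alpha$. Clearly $\GG$ is a tree and the construction is polynomial.

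The correctness argument rests on two simple observations about the all-neighborhood profit rule on this star. First, any leaf $v_i$ has $N(v_i)=\{v_{n+1}\}$, so $v_i$ contributes profit $p_i$ to $\SS$ iff $v_{n+1}\in\SS$. Second, the center $v_{n+1}$ has profit $0$, so whether its neighborhood condition is satisfied is irrelevant to the total profit. Hence, it is always safe (and in fact necessary for any positive profit) to include $v_{n+1}$ in $\SS$; since $w_{v_{n+1}}=0$, this costs nothing. For a forward direction, from a \kp solution $\JJ\subseteq\II$ with $\sum_{i\in\JJ}s_i\le c$ and $\sum_{i\in\JJ}p_i\ge\alpha$, I take $\SS=\{v_{n+1}\}\cup\{v_i\mid a_i\in\JJ\}$, which has weight at most $c=s$ and yields profit at least $\alpha=d$. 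Conversely, given any feasible \sand solution $\SS$ with profit $\ge d>0$, the center must lie in $\SS$ (otherwise no leaf contributes), and the set $\JJ=\{a_i\mid v_i\in\SS\}$ is a feasible \kp solution with the same weight and profit.

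I do not expect any substantive obstacle: the gadget is a star, which is a tree, and the reduction is parameter-transparent because the center has weight and profit zero. The only thing to double-check is that the relaxation is exploited correctly, namely that vertices in $\SS$ which fail the all-neighbors condition (the center, if not all leaves are chosen) do count toward the weight but not the profit; this is precisely the \sand rule, so the bookkeeping matches. The reduction together with \kp being \NPH and $\sand\in\NP$ yields the stated theorem.
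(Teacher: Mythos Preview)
Your proof is correct and follows essentially the same approach as the paper, which simply states that the argument is analogous to \Cref{sor-tree-npc} (the same star gadget with a zero-weight, zero-profit center). The only tiny wrinkle is your tacit assumption $d>0$ in the converse direction; when $d=0$ both instances are trivially \yes, so this causes no trouble.
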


The proof of \Cref{sand-tree-npc} is analogous to the proof of \Cref{sor-tree-npc}.

We now establish W-hardness results for \hand.

\begin{theorem}\label{hand-woh}
($\star$) \hand parameterized by the size of the knapsack $s$, plus demand $d$, is \WOH for directed graphs . 
\end{theorem}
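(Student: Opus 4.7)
The plan is to give a parameterized reduction from \clique parameterized by the clique size $k$, which is \WOH, to \hand on directed graphs parameterized by $s+d$. Given an instance $(G=(V,E), k)$ of \clique, I would construct a directed graph $\GG^\pr$ with a ``vertex-gadget'' node $x_v$ for every $v \in V$ (weight $1$, profit $0$) and an ``edge-gadget'' node $y_e$ for every $e=\{u,v\} \in E$ (weight $0$, profit $1$), together with the two arcs $y_e \to x_u$ and $y_e \to x_v$. Set $s := k$ and $d := \binom{k}{2}$. The point of this construction is that each $x_v$ is a sink, so selecting it carries no out-neighborhood obligation, whereas selecting any $y_e$ forces both its endpoints $x_u, x_v$ into the solution.

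For the forward direction, given a $k$-clique $C \subseteq V$, the set $\SS := \{x_v : v \in C\} \cup \{y_e : e \in E,\, e \subseteq C\}$ is out-closed (every $y_e \in \SS$ has $e \subseteq C$, hence both its out-neighbors are in $\SS$), weighs exactly $k$, and has profit exactly $\binom{k}{2}$, so it is a feasible \hand solution. For the reverse direction, given any feasible $\SS$, set $X := \{v \in V : x_v \in \SS\}$ and $F := \{e \in E : y_e \in \SS\}$. The weight bound forces $|X| \leq k$, and the all-neighborhood constraint applied to each $y_e \in \SS$ forces $F \subseteq \binom{X}{2}$, so $|F| \leq \binom{|X|}{2} \leq \binom{k}{2}$. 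The demand bound $|F| \geq \binom{k}{2}$ then collapses all these inequalities simultaneously: $|X|=k$, $F = \binom{X}{2}$, and every pair in $X$ is a genuine edge of $G$, i.e.\ $X$ is a $k$-clique in $G$.

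Because the new parameter $s + d = k + \binom{k}{2}$ is a function of $k$ alone and the construction is polynomial, this is a valid parameterized reduction and transfers \WO-hardness from \clique to \hand on directed graphs under the combined parameter $s+d$. I do not anticipate a serious obstacle: the main design choice is the weight/profit decoupling, placing all weight on the $x_v$'s and all profit on the $y_e$'s, which is precisely what converts the extremal inequality $|F| \leq \binom{|X|}{2}$ into a clique-certifying equality once both the knapsack and demand bounds are met. A minor point of care is the direction of the arcs — always from $y_e$ to its endpoints, never the reverse — so that the all-neighborhood closure propagates from edge-gadgets to vertex-gadgets and not vice versa.
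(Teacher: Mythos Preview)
Your proposal is correct and takes essentially the same approach as the paper: a parameterized reduction from \clique using edge-gadget vertices (weight $0$, profit $1$) with arcs to their two endpoint vertex-gadgets (weight $1$, profit $0$), and setting $s=k$, $d=\binom{k}{2}$. Your reverse-direction argument is in fact more carefully written than the paper's, spelling out explicitly why the chain of inequalities $|F|\le\binom{|X|}{2}\le\binom{k}{2}$ must collapse to equalities.
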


\longversion{\begin{proof}
Given an instance of \clique as $(\GG^\pr = (\VV^\pr, \EE^\pr), k)$, we construct an instance of \hand as follows: We construct a bipartite graph $\GG(\VV = \AA \cup \BB, \EE)$ from a given graph $\GG^\pr = (\VV^\pr, \EE^\pr)$ as follows. For each edge $e \in \EE^\pr$, create a vertex $u_e$ in $\AA$, and for each vertex $v \in \VV^\pr$, create a vertex $v$ in $\BB$ i.e, $\AA$ corresponds to the edge set of $\GG^\pr$ and $\BB$ corresponds to the vertex set of $\GG^\pr$. For every edge $e = \{v_i, v_j\} \in \EE^\pr$, we add edges $(u_e, v_i)$ and $(u_e, v_j)$ in $\GG$, where $u_e \in \AA$ and $v_i, v_j \in \BB$, representing the incidence of edge $e$ with vertices $v_i$ and $v_j$ in the original graph $\GG^\pr$. We assign the weights and profits to the vertices of $\GG$ as follows. Each vertex $u \in \AA$ (corresponding to an edge in $\GG^\pr$) is assigned weight \( w_u = 0 \) and profit \( p_u = 1 \). Each vertex \( v \in \BB \) (corresponding to a vertex in $\GG^\pr$) is assigned weight \( w_v = 1 \) and profit \( p_v = 0 \). Set $s = k$ and $d = \binom{k}{2}$. 
We claim that the \clique problem is a \yes instance if and only if the corresponding \hand instance is also a \yes instance.

First, suppose that the \clique problem is a \yes instance. This means there exists a clique of size \(k\) in $\GG^\pr$. We can construct a solution to the \hand instance by selecting the \(k\) nodes from \BB that correspond to the vertices of the \(k\)-\clique. Additionally, we include \(\binom{k}{2}\) nodes from \AA that correspond to the edges in the \(k\)-clique. In this configuration, the \(k\) nodes from set \BB contribute a total weight of \(k\) and no profit, while the edges selected from \AA contribute a profit of \(\binom{k}{2}\). This configuration satisfies the constraints of the knapsack problem: the total weight \(\sum_{\substack{v \in \SS: \\ N(v) \subseteq \SS}} w_u \leq k\) and the total profit \(\sum_{\substack{v \in \SS: \\ N(v) \subseteq \SS}} p_v = \binom{k}{2}\).

Conversely, if the \sand instance is a \yes instance, we can construct a corresponding \(k\)-\clique in $\GG^\pr$. Since we must select at least \(\binom{k}{2}\) edges from \AA to meet the profit requirement, this implies that the chosen edges correspond to a complete subgraph (clique) among the selected vertices in \BB. Therefore, \clique instance $(\GG^\pr = (\VV^\pr, \EE^\pr), k)$ is an \yes instance if and only if  \sand instance $(\GG = (\VV = (\AA \cup \BB),\EE), \{w_i\}_{i \in \VV}, \{p_i\}_{i \in \VV}, s = k, d = \binom{k}{2})$ is an \yes instance.
\end{proof}
}
In the reduction above, if we set $w_u = 1$ of each vertex in \AA and set the size $s = k + \binom{k}{2}$, we obtain the following corollary:

\begin{corollary}
\hand parameterized by solution size $b$, is \WOH for directed graphs . 
\end{corollary}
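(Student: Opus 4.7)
The plan is to adapt the reduction from \clique used in \Cref{hand-woh} in a parameter-preserving manner so that the resulting solution size is a function of the \clique parameter $k$ alone. I will use exactly the same bipartite graph $\GG=(\AA\cup\BB,\EE)$ and the same directed edges $(u_e,v_i),(u_e,v_j)$ for every $e=\{v_i,v_j\}\in\EE^\pr$, but change the weight assignment on $\AA$: I set $w_u=1$ and $p_u=1$ for every $u\in\AA$, keep $w_v=1$, $p_v=0$ for every $v\in\BB$, and set the knapsack size to $s=k+\binom{k}{2}$ and the demand to $d=\binom{k}{2}$.

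Next I will verify the equivalence. For the forward direction, given a $k$-clique $C\subseteq\VV^\pr$ in $G'$, take $\SS$ to be the $k$ vertices of $\BB$ corresponding to $C$ together with the $\binom{k}{2}$ vertices of $\AA$ corresponding to the edges of $C$. Each $u_e\in\SS\cap\AA$ has $N^+(u_e)\subseteq\SS$ (both endpoints of $e$ lie in $C$), and each $v\in\SS\cap\BB$ has $N^+(v)=\emptyset$, so every vertex of $\SS$ contributes; weight is exactly $k+\binom{k}{2}=s$ and profit is exactly $\binom{k}{2}=d$. Conversely, any feasible $\SS$ must include $\binom{k}{2}$ vertices of $\AA$ whose out-neighborhoods in $\BB$ are all contained in $\SS\cap\BB$; since each $\AA$-vertex already costs weight $1$, at most $s-\binom{k}{2}=k$ vertices of $\BB$ can be in $\SS$, and $\binom{k}{2}$ edges sitting inside a $k$-vertex set force those $k$ vertices to form a clique.

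Finally I will bound the solution size: every minimal feasible solution has exactly $b=k+\binom{k}{2}$ vertices, which depends only on $k$. Since the reduction is polynomial time and \clique parameterized by $k$ is \WOH, this transfers \WOH to \hand parameterized by $b$.

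The only subtlety I anticipate is convincing the reader that the parameter value is genuinely $b=k+\binom{k}{2}$ rather than something smaller: I need to argue that a feasible solution cannot ``cheat'' by dropping $\BB$-vertices (which would violate the neighborhood-closure condition for the $\AA$-vertices providing profit) or by including extra vertices of $\AA$ beyond the $\binom{k}{2}$ induced edges (ruled out by the size budget $s=k+\binom{k}{2}$). Once this is observed the \WOH conclusion is immediate from the \WOH of \clique in $k$.
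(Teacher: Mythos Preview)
Your proposal is correct and follows essentially the same approach as the paper: the paper simply remarks that modifying the \Cref{hand-woh} reduction by setting $w_u=1$ for each $u\in\AA$ and taking $s=k+\binom{k}{2}$ yields the corollary, which is precisely what you do (and you additionally spell out the equivalence and the parameter bound more carefully than the paper itself).
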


\section{Algorithmic Results}

In this section, we discuss our algorithmic results. Note that \sor and \sand are \NPC even for trees (\Cref{sor-tree-npc}, \Cref{sand-tree-npc}), and hence they are \WOH parameterized by treewidth of the input graph. Also, since \kp is \WOH parameterized by solution size \cite{abboud2014losing}, we do not hope to have \FPT algorithms for \sor parameterized by solution size. We design pseudo-\FPT algorithms for directed \hor and \hand parameterized by the treewidth \tw, of the input graph. Furthermore, we show that \sor is solvable in polynomial time for uniform, undirected graphs. We also present pseudo-\FPT algorithms for directed \sor and \sand, again parameterized by the treewidth. Additionally, we propose a color-coding-based pseudo-\FPT algorithm for directed \sor, parameterized by the solution size, and provide a derandomized version later. Finally, we show that \sor also admits a pseudo-\FPT algorithm when parameterized by the total profit.


Goebbels et al. \cite{goebbels2022knapsack} proved that \hor and \hand are strongly \NPC for directed graphs. We design pseudo-\FPT algorithms for \hor and \hand for directed graphs with bounded treewidth. 

\begin{theorem}\label{hor-di-fpt}
($\star$)  There is an \FPT algorithm for directed \hor with running time $\OO^\star\left(2^{\tw}\cdot {\sf min}\{s^2,d^2\}\right)$, where $n$ is the number of vertices in the input graph, $\tw$ is the treewidth of the input graph, $s$ is the input size of the knapsack and $d$ is the input target value. 
\end{theorem}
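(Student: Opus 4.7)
The plan is to run a bottom-up dynamic program on a nice tree decomposition of width $\tw$ of the underlying undirected graph of $\GG$, augmented with introduce-edge nodes so that every directed edge of $\GG$ is processed at a unique node. At each node $t$ with bag $X_t$, and for each subset $Y\subseteq X_t$ together with a flag vector $f\colon Y\to\{0,1\}$, I store a Pareto-optimal list of (weight, profit) pairs realized by partial solutions $\SS^\pr$ in the subtree rooted at $t$ such that $\SS^\pr\cap X_t=Y$ and the $1$-neighbor constraint has already been discharged for every vertex that has been forgotten. The bit $f(v)$ records whether some out-neighbor of $v$ has already been committed to $\SS^\pr$ via an edge introduced so far; each Pareto list contains at most $\min\{s,d\}+1$ undominated entries because a knapsack-style DP naturally admits only that many Pareto-optimal pairs when either weight is bounded by $s$ or profit is bounded by $d$.

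The transitions are standard. At an introduce-vertex node for $v$, each child state is extended either by leaving $v$ out of $Y$ or by placing $v$ in $Y$ with $f(v)$ initialised to $1$ if $N^+(v)=\emptyset$ and to $0$ otherwise. At an introduce-edge node for a directed edge $(u,v)$ with both $u,v\in Y$, the flag is updated by $f(u)\leftarrow 1$. At a forget node for $v$, any state with $v\in Y$ and $f(v)=0$ is discarded since no further out-neighbor of $v$ can appear; otherwise the state is projected onto the smaller bag, keeping only undominated pairs. At a join node with common $Y$, the two flag functions combine by $f=f_1\vee f_2$ and the two Pareto lists merge in $\min\{s^2,d^2\}$ time. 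The root returns the minimum-weight entry of profit at least $d$. Correctness follows by a routine induction on the decomposition: every feasible $\SS$ induces a unique consistent sequence of DP states, and conversely every reachable DP state projects to some partial solution satisfying the constraint on forgotten vertices.

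A naive count gives $3^{|X_t|}$ state triples per bag (each vertex being absent from $Y$, in $Y$ with $f=0$, or in $Y$ with $f=1$); to reach the $\OO^\star(2^{\tw})$ bound in the statement, I would exploit that $f$ is monotone under the DP, in the sense that $f(v)$ can only rise from $0$ to $1$, so at join nodes the flag coordinate admits a fast subset-convolution over $Y$ rather than needing to be carried explicitly per vertex across the tree. Multiplying the resulting $\OO^\star(2^{\tw})$ state count by the $\min\{s^2,d^2\}$ Pareto-merge cost yields the target running time, and the preprocessing of a nice tree decomposition of width $\tw$ takes only an additive $\OO^\star(2^{\tw})$ factor.

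The main obstacle will be collapsing the natural three-state-per-vertex encoding down to two: the direct implementation only gives $\OO^\star(3^{\tw}\cdot\min\{s^2,d^2\})$, which is still pseudo-$\FPT$ but weaker than claimed, and sharpening the base to $2$ requires the monotone-aggregation argument above, or equivalently postponing each vertex's commitment into $\SS^\pr$ until an edge that witnesses its $1$-neighbor constraint is introduced. Making one of these shaving arguments fully rigorous, while not double-counting partial solutions at join nodes and while correctly handling vertices that are sinks in $\GG$, is the delicate step.
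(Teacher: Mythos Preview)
Your three-state encoding (absent, present-unsatisfied, present-satisfied) together with the Pareto-list machinery is sound and yields a clean $\OO^\star(3^{\tw}\cdot\min\{s^2,d^2\})$ algorithm, but neither of your two proposed routes down to base $2$ actually closes. Fast subset convolution for the OR of the flag vectors at a join relies on zeta/M\"obius transforms over a ring: after multiplying the zeta-transformed tables pointwise one must invert, and M\"obius inversion needs subtraction. Your table entries are sets of Pareto-undominated $(w,p)$ pairs combined by Minkowski-sum merging, which gives only a semiring with no additive inverse; so the inversion step, and with it the $2^{|Y|}$ join, is unavailable. The alternative ``postpone commitment'' sketch does not remove the third state either, since until a witnessing out-edge is introduced you still have to distinguish selected-but-uncommitted vertices from unselected ones in the bag.

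The paper sidesteps the flag coordinate altogether. Its state is just $\SS\subseteq X_t$, and $D[t,\SS]$ stores pairs only for partial solutions $\hat\SS$ that are \emph{already} valid \hor solutions in the induced subgraph $\GG_t$. Feasibility is enforced eagerly at the introduce node for $u$: if $u\in\SS$ one checks whether $u$ has a selected out-neighbor in the bag or no out-neighbor in $V_t$ at all, and if $u\notin\SS$ one checks whether $u$ was the unique out-neighbor in $V_t$ of some $w\in\SS$. Because no per-vertex satisfaction bit is carried, the state space is $2^{\tw}$ from the outset and no convolution shaving is required. So the idea you are missing is not an algebraic speedup of the three-state DP but a change of invariant---tracking validity relative to the graph seen so far rather than via a pending flag---which is precisely what keeps the paper at base~$2$.
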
 

\longversion{\begin{proof}
We consider a nice tree decomposition of the underlying graph. Let $(\GG = (V,E),\{w_u\}_{u \in V_G}, \{p_u\}_{u \in V_G}, s,d)$ be an input instance of \hor such that $\tw=tw(\GG)$. Let $\VV^\pr$ be an optimal solution to \hor. For technical purposes, we guess a vertex $v \in \VV^\pr$ --- once the guess is fixed, we are only interested in finding solution subsets $\hat{\VV}$ that contain $v$ and $\VV^\pr$ is one such candidate. We also consider a nice edge tree decomposition $(\mathbb{T} = (V_{\mathbb{T}},E_{\mathbb{T}}),\mathcal{X})$ of $\GG$ that is rooted at a node $r$, and where $v$ has been added to all bags of the decomposition. Therefore, $X_{r} = \{v\}$ and each leaf bag is the singleton set $\{v\}$.

We define a function $\ell: V_{\mathbb{T}} \rightarrow \mathbb{N}$ as follows.
For a vertex $t \in V_\mathbb{T}$, $\ell(t) = \sf{dist}_\mathbb{T}(t,r)$, where $r$ is the root. Note that this implies that $\ell(r) = 0$. Let us assume that the values that $\ell$ take over the nodes of $\mathbb{T}$ are between $0$ and $L$. For a node $t\in V_\mathbb{T}$, we denote the set of vertices in the bags in the subtree rooted at $t$ by $V_t$ and $\GG_t=\GG[V_t]$. Now, we describe a dynamic programming algorithm over $(\mathbb{T},\mathcal{X})$. We have the following states.

    \textbf{States:} We maintain a DP table $D$ where a state has the following components:
    \begin{enumerate}
    \item $t$ represents a node in $V_\mathbb{T}$.
    \item $\SS$ represents a subset of the vertex subset $X_t$
    \end{enumerate}
    \textbf{Interpretation of States}
For each node \( t \in \mathbb{T} \), we maintain a list \( D[t, \SS] \) for every subset \( \SS \subseteq X_t \). Each entry in this list holds a set of feasible and undominated (weight, profit) pairs corresponding to valid \hor solutions $\hat{\SS}$ in the graph \( \GG_t \), where:  

\begin{itemize}  
    \item The selected vertices in \( \GG_t \) satisfy \(\hat{\SS} \cap X_t = \SS \).   
    \item If no such valid solution exists, we set \( D[t, \SS] = \emptyset \).  
\end{itemize}  

For each state $D[t,\SS]$, we initialize $D[t,\SS]$ to the list $\{(0,0)\}$.

    \textbf{Dynamic Programming on $D$}: We update the table $D$ as follows. We initialize the table of states with nodes $t\in V_\mathbb{T}$ such that $\ell(t)=L$. When all such states are initialized, then we move to update states where the node $t$ has $\ell(t) = L-1$, and so on, till we finally update states with $r$ as the node --- note that $\ell(r) =0$. For a particular $j$, $0\leq j< L$ and a state $D[t,\SS]$ such that $\ell(t) = j$, we can assume that $D[t',\SS']$ have been computed for all $t'$ such that $\ell(t')>j$ and all subsets $\SS'$ of $X_{t'}$. Now we consider several cases by which $D[t,\SS]$ is updated based on the nature of $t$ in $\mathbb{T}$:
    
    \begin{enumerate}
    \item \textbf{Leaf node.} Suppose $t$ is a leaf node with $X_{t} = \{v\}$. The only possible cases that can arise are as follows:
    \begin{enumerate}
        \item $D[t,\phi]$ stores the pair $\{(0, 0)\}$
        \item $D[t,\{v\}]$ = $\{(w_v, p_v)\}$, if $w_v \leq s$
    \end{enumerate} 
    
    \item \textbf{Introduce node.} Suppose $t$ is an introduce node. Then it has only one child $t'$ where $X_{t'} \subset X_{t}$ and there is exactly one vertex $u \neq v$ that belongs to $X_{t}$ but not $X_{t'}$. Then for all $\SS \subseteq X_t$:
    \begin{enumerate}
        \item If $u \not\in \SS$, 
        \begin{enumerate}
            \item if \(\exists w: w\in N^-(u), w \in \SS \) and \(N^+(w) \cap \VV_t = \{u\}\), then $D[t,\SS] = \emptyset$, hence assume otherwise.
            \item Otherwise, we copy all pairs of $D[t',\SS^\pr = \SS \setminus \{u\}]$ to $D[t',\SS,]$. 
        \end{enumerate}
        
        \item If $u \in \SS$,
        \begin{enumerate}            
            \item if $u$ has out-neighbor $x$ in $\SS$ or \(N^+(u) \cap \VV_t = \emptyset\), then for every pair $(w,p)$ in $D[t',\SS^\pr=\SS \setminus \{u\}]$ with $w + w_u \leq s$, we add $(w+w_u, p + p_u)$ and copy to $D[t,\SS]$.
            
            \item Otherwise, $D[t,\SS]$ = $\emptyset$

        \end{enumerate}
    \end{enumerate}
    
    \item \textbf{Forget node.} Suppose $t$ is a forget vertex node. Then it has only one child $t'$, and there is a vertex $u \neq v$ such that $X_t\cup\{u\} = X_{t'}$. Then for all $\SS \subseteq X_t$ we copy all feasible undominated pairs stored in $D[t',\SS\cup \{u\}]$ and $D[t',\SS]$ to $D[t,\SS]$. We remove any dominated pair and maintain only the undominated pairs in $D[t,\SS]$.

    \item \textbf{Join node.} Suppose $t$ is a join node. Then it has two children $t_1,t_2$ such that $X_t = X_{t_1} = X_{t_2}$. Then for all $\SS \subseteq X_t$, let $(w(\SS), p(\SS))$ be the total weight and value of the vertices in $\SS$. Consider a pair $(w_1,p_1)$ in $D[t_1,\SS_1]$ and a pair $(w_2,p_2)$ in $D[t_2,\SS_2]$, where $\SS_1 \cup \SS_2 = \SS$. Then for all pairs $(w_1,p_1) \in D[t_1,\SS_1]$ and $(w_2,p_2) \in D[t_2,\SS_2]$, if $w_1 + w_2 - w(\SS_1 \cup \SS_2) \leq s$, then we add $ \left( w_1 + w_2 - w(\SS_1 \cup \SS_2), p_1+p_2- p(\SS_1 \cup \SS_2) \right)$ and copy to $D[t,\SS]$.
    \end{enumerate}
    
Finally, in the last step of updating $D[t,\SS]$, we go through the list saved in $D[t,\SS]$ and only keep undominated pairs.

The output of the algorithm is a pair $(w,p)$ that is maximum of those stored in $D[r,\{v\}]$ such that $w \leq s$ and $p$ is the maximum value over all pairs in $D[r,\{v\}]$.

\textbf{Proof of correctness}: 
Recall that we are looking for a solution that is a set of vertices $\VV^\pr$ that contains the fixed vertex $v$ that belongs to all bags of the nice tree decomposition. In each state we maintain a list of feasible and undominated (weight, profit) pairs that correspond to the solution $\hat{\SS}$ of $\GG_t$ and  satisfy \( \hat{\SS} \cap X_t = \SS \).

We now show that the update rules holds for each $X_t$. To prove this formally, we need to consider the cases of what $t$ can be:

\begin{enumerate}
    \item \textbf{Leaf node.} Recall that in our modified nice tree decomposition we have added a vertex $v$ to all the bags. Suppose a leaf node $t$ contains a single vertex $v$, $D[t,\phi]$ stores the pair $\{(0, 0)\}$ and $D[t,\{v\}]$ = $\{(w_v, p_v)\}$ if $w_v \leq s$. This is true in particular when $j = L$, the base case. From now we can assume that for a node $t$ with $\ell(t) = j < L$ and all subsets $\SS \subseteq X_t$, $D[t',\SS^\pr]$ entries are correct and correspond to a \hor solution in $\GG_t$. when $\ell(t') > j$. 

    \item \textbf{Introduce node.} When $t$ is an introduce node, there is a child $t'$ such that $X_t = X_{t'} \cup \{u\}$. We are introducing a vertex $u$ and the edges associated with it in $\GG_t$. Let us prove for each case.
    \begin{enumerate}
        \item When $u$ is not included in $\SS$, the families of sets $\hat{\SS}$ considered in the definition of $D[t,\SS]$ and of $D[t',\SS]$ are equal. Notice that there may be a vertex $w \in X_{t}$ that belongs to $\SS$ such that $N^+(w) \cap V_{t'} = \emptyset$, i.e. $w$ did not have any out-neighbor in the subgraph $\GG_{t'}$. However, with the introduction of the vertex $u$ and the edges associated to it, if $N^+(w) \cap V_{t} = \{u\}$ i.e. $u$ is the only out-neighbor of $w$ in $\GG_t$, since $u$ does not belong to $\SS$, the \hor constraint is violated and hence the state $D[t,\SS]$ stores $\emptyset$ as it is no longer feasible. Otherwise we copy all pairs of $D[t',\SS^\pr = \SS]$ to $D[t,\SS]$.

        \item Now consider the case when $u$ is part of $\SS$. Let $\hat{\SS}$ be a feasible solution for \hor attained in the definition of $D[t,\SS]$. Then it follows that $\hat{\SS} \setminus \{u\}$ is one of the sets considered in the definition of $D[t',\SS \setminus \{u\}]$. It suffices to just check if $u$ has a neighbor in $X_t$ or not because the nice tree decomposition ensures that the newly introduced vertex $u$ can have neighbors only in $X_t$. There are two subcases:
        \begin{enumerate}
            \item \textbf{Subcase 1}. When $u$ has at least one out-neighbor $x$ selected in $\SS$ or $u$ does not have any out-neighbor in $\GG_{t}$, then for every pair $(w,p)$ in $D[t',\SS^\pr=\SS \setminus \{u\}]$ with $w + w_u \leq s$, we add $(w+w_u, p + p_u)$ and copy to $D[t,\SS]$.
            
            \item \textbf{Subcase 2}. When $u$ has a out-neighbor in $\GG_t$ but none of them are selected in $\SS$, then $D[t,\SS] = \emptyset$.
        \end{enumerate}
    \end{enumerate}
    Since $\ell(t') > \ell(t)$, by induction hypothesis all entries in $D[t', \SS'= \SS]$ and $D[t', \SS'= \SS \setminus \{u\}]$,  $\forall$ $\SS' \subseteq X_{t'}$ are already computed. We update pairs in $D[t,\SS]$ depending on the cases discussed above.
    
    \item \textbf{Forget Node.}  When $t$ is a forget node, there is a child $t'$ such that $X_t = X_{t'} \setminus \{u\} $. Let $\hat{\SS}$ be a set for which the \hor solution is attained in the definition of $D[t,\SS]$. If $u \not\in \hat{\SS}$, then $\hat{\SS}$ is one of the sets considered in the definition of $D[t',\SS]$. And if $u \in \hat{\SS}$, then $\hat{\SS}$ is one of the sets considered in the definition of $D[t',\SS \cup \{u\}]$. Since $\ell(t') > \ell(t)$, by induction hypothesis all entries in $D[t',\SS' = \SS]$ and $D[t',\SS' = \SS \cup \{u\}]$, $\forall$ $\SS' \subseteq X_{t'}$ are already computed and feasible. We copy each undominated $(w,p)$ pair stored in $D[t',\SS' = \SS]$ and $D[t',\SS' = \SS \cup \{u\}]$ to $D[t,\SS]$.

    \item \textbf{Join node.}  When $t$ is a join node, there are two children $t_1$ and $t_2$ of $t$, such that $X_t = X_{t_1} = X_{t_2}$. Let $\hat{\SS}$ be a set for \hor attained in the definition of $D[t,\SS]$. Let $\hat{\SS}_1 = \hat{\SS} \cap V_{t_1}$ and $\hat{\SS}_2 = \hat{\SS} \cap V_{t_2}$. Observe that $\hat{\SS}_1$ is a solution to \hor in $\GG_{t_1}$ and $\hat{\SS}_1 \cap X_{t_1} = \SS_1$, so this is considered in the definition of $D[t_1, \SS]$ and similarly, $\hat{\SS}_2 \cap X_{t_2} = \SS_2$. From the definition of nice tree decomposition we know that there is no edge between the vertices of $V_{t_1} \setminus X_t$ and $V_{t_2} \setminus X_t$. Then we merge solutions from the two subgraphs and remove the over-counting. By the induction hypothesis, the computed entries in $D[t_1,\SS_1]$ and $D[t_2,\SS_2]$ where $\SS_1 \cup \SS_2 = \SS$ are correct and store the feasible and undominated \hor solutions for the subgraph $G_{t_1}$ in $\SS_1$ and similarly, $\SS_2$ for $G_{t_2}$. Now we add ($w_1 + w_2 - w(S_1 \cup S_2),  p_1+p_2 -p(\SS_1 \cup \SS_2)$) to $D[t,\SS]$.

\end{enumerate}

What remains to be shown is that an undominated feasible solution $\VV^\pr$ of \hor in $\GG$ is contained in $D[r,\{v\}]$. Let $w$ be the weight of $\VV^\pr$ and $p$ be the value subject to \hor. Recall that $v \in \VV^\pr$. For each $t$, we consider the subgraph $\GG_t \cap \VV^\pr$. Since the DP state $D[t,\SS]$ is updated correctly for all subgraphs $\GG_t$, the bottom up dynamic programming approach ensures that all feasible and undominated pairs are correctly propagated. If $\VV^\pr$ is a valid solution, then the corresponding (weight, profit) must be stored in some DP state. Since $v \in \VV^\pr$, the optimal states where $v$ is selected will store the pair $(w,p)$. Therefore, $D[r,\{v\}]$ contains the pair $(w,p)$. 

\textbf{Running time}: There are $n$ choices for the fixed vertex $v$. Upon fixing $v$ and adding it to each bag of $(\mathbb{T}, \mathcal{X})$ we consider the total possible number of states. For every node $t$, we have $2^{|X_t|}$ choices of ${\SS}$. For each state, for each $w$, there can be at most one pair with $w$ as the first coordinate; similarly, for each $p$, there can be at most one pair with $p$ as the second coordinate. Thus, the number of undominated pairs in each $D[t,\SS]$ is at most ${\sf min}\{s,d\}$ time. Since the treewidth of the input graph \GG is at most \tw, it is possible to construct a data structure in time $\tw^{\OO(1)} \cdot n$ that allows performing adjacency queries in time $\OO(\tw)$.  For each node $t$, it takes time $\OO\left(2^{\tw} \cdot \tw^{\OO(1)} \cdot {\sf min}\{s^2,d^2\}\right)$ to compute all the values $D[t,\SS]$ and remove all undominated pairs. Since we can assume w.l.o.g that the number of nodes of the given tree decompositions is $\OO(\tw \cdot n)$, and there are $n$ choices for the vertex $v$, the running time of the algorithm is $\OO\left(2^{\tw}\cdot n^{\OO(1)} \cdot {\sf min}\{s^2,d^2\}\right)$.
\end{proof}}

\begin{theorem}\label{hand-di-fpt}
($\star$)  There is an FPT algorithm for directed \hand with running time $\OO^\star\left(2^{\tw}\cdot {\sf min}\{s^2,d^2\}\right)$, where $n$ is the number of vertices in the input graph, $\tw$ is the treewidth of the input graph, $s$ is the input size of the knapsack and $d$ is the input target value. 
\end{theorem}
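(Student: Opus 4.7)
The plan is to adapt the dynamic programming algorithm over a nice tree decomposition developed for directed \hor in \Cref{hor-di-fpt}, modifying only the transitions that enforce the neighborhood constraint. As before, I would iterate over all $n$ choices of a vertex $v$ lying in the optimal solution, add $v$ to every bag of a nice tree decomposition $(\mathbb{T} = (V_\mathbb{T}, E_\mathbb{T}), \mathcal{X})$ rooted at $r$, and maintain a table $D[t, \SS]$ indexed by a node $t \in V_\mathbb{T}$ and a subset $\SS \subseteq X_t$. Each entry would store the list of undominated feasible $(w, p)$ pairs corresponding to valid \hand solutions $\hat{\SS}$ in $\GG_t$ satisfying $\hat{\SS} \cap X_t = \SS$.

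The only substantive change from the \hor DP occurs at introduce nodes, where the ``at least one out-neighbor in $\SS$'' condition must be replaced by an ``all out-neighbors in $\SS$'' condition. At an introduce node $t$ with new vertex $u$ and child $t'$: if $u \notin \SS$, I would first check whether any $w \in \SS$ has $u \in N^+(w)$; if so, then $w$'s all-neighbor constraint is violated in $\GG_t$ and $D[t, \SS] = \emptyset$, otherwise I copy $D[t', \SS]$. If $u \in \SS$, then since every out-neighbor of $u$ lying in $V_t$ must already belong to $X_t$ by the connectivity property of tree decompositions, it suffices to verify $N^+(u) \cap X_t \subseteq \SS$; if this holds I add $(w + w_u, p + p_u)$ to $D[t, \SS]$ for each pair $(w, p) \in D[t', \SS \setminus \{u\}]$ with $w + w_u \leq s$, otherwise $D[t, \SS] = \emptyset$. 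Leaf, forget and join transitions remain syntactically identical to those in \Cref{hor-di-fpt}, and after every update each list is pruned to its undominated pairs.

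The correctness argument would proceed by bottom-up induction on the tree decomposition following the template of \Cref{hor-di-fpt}. The main obstacle I expect is to justify that the all-neighbor constraint for any selected vertex $u$ is fully verified by the time $u$ is forgotten, even though an out-neighbor $w$ of $u$ lying outside $V_t$ at the moment $u$ is introduced is not checked locally. The key observation is that such a $w$ must co-appear with $u$ in the bag where $w$ is introduced (since the edge $(u, w)$ needs a witness in the tree decomposition), and at that later introduce node either $w$ is placed in $\SS$, satisfying $u$'s requirement for $w$, or the first introduce rule above triggers because $u \in \SS$ is an in-neighbor of the newly introduced $w \notin \SS$, aborting the branch. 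Thus the inductive invariant that $D[t, \SS]$ enumerates exactly the undominated feasible pairs in $\GG_t$ consistent with $\SS$ is preserved, and the final answer is recovered from the best feasible pair stored in $D[r, \{v\}]$ across all $n$ guesses of $v$. The running time bound $\OO^\star\!\left(2^{\tw} \cdot \min\{s^2, d^2\}\right)$ follows exactly as in \Cref{hor-di-fpt}, since each bag yields at most $2^{\tw + 1}$ subsets, each state stores at most $\min\{s, d\}$ undominated pairs, the per-node work is polynomial in $\tw$ and the bag size, and both the $\OO(\tw \cdot n)$ total number of nodes and the outer $n$ guesses for $v$ are absorbed into the $\OO^\star$ notation.
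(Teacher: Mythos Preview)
Your proposal is correct and follows essentially the same approach as the paper: both adapt the \hor tree-decomposition DP from \Cref{hor-di-fpt} by replacing the ``at least one out-neighbor in $\SS$'' check at introduce nodes with an ``all out-neighbors in $\SS$'' check, leaving the remaining transitions and the running-time analysis unchanged. Your introduce-node rule for $u \notin \SS$ (abort whenever some $w \in \SS$ has $u \in N^+(w)$) is in fact the clean formulation of the \hand constraint; the paper writes a slightly more restrictive-looking condition there, but the intended invariant and the overall argument are the same.
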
 

\longversion{\begin{proof}
We consider a nice tree decomposition of the underlying graph. Let $(\GG = (V,E),\{w_u\}_{u \in V_G}, \{p_u\}_{u \in V_G}, s,d)$ be an input instance of \hand such that $\tw=tw(\GG)$. Let $\VV^\pr$ be an optimal solution to \hand. For technical purposes, we guess a vertex $v \in \VV^\pr$ --- once the guess is fixed, we are only interested in finding solution subsets $\hat{\VV}$ that contain $v$ and $\VV^\pr$ is one such candidate. We also consider a nice edge tree decomposition $(\mathbb{T} = (V_{\mathbb{T}},E_{\mathbb{T}}),\mathcal{X})$ of $\GG$ that is rooted at a node $r$, and where $v$ has been added to all bags of the decomposition. Therefore, $X_{r} = \{v\}$ and each leaf bag is the singleton set $\{v\}$.

We define a function $\ell: V_{\mathbb{T}} \rightarrow \mathbb{N}$ as follows.
For a vertex $t \in V_\mathbb{T}$, $\ell(t) = \sf{dist}_\mathbb{T}(t,r)$, where $r$ is the root. Note that this implies that $\ell(r) = 0$. Let us assume that the values that $\ell$ take over the nodes of $\mathbb{T}$ are between $0$ and $L$. For a node $t\in V_\mathbb{T}$, we denote the set of vertices in the bags in the subtree rooted at $t$ by $V_t$ and $\GG_t=\GG[V_t]$. Now, we describe a dynamic programming algorithm over $(\mathbb{T},\mathcal{X})$. We have the following states.

    \textbf{States:} We maintain a DP table $D$ where a state has the following components:
    \begin{enumerate}
    \item $t$ represents a node in $V_\mathbb{T}$.
    \item $\SS$ represents a subset of the vertex subset $X_t$
    \end{enumerate}
    \textbf{Interpretation of States}
For each node \( t \in \mathbb{T} \), we maintain a list \( D[t, \SS] \) for every subset \( \SS \subseteq X_t \). Each entry in this list holds a set of feasible and undominated (weight, profit) pairs corresponding to valid \hand solutions $\hat{\SS}$ in the graph \( \GG_t \), where:  

\begin{itemize}  
    \item The selected vertices in \( \GG_t \) satisfy \(\hat{\SS} \cap X_t = \SS \).   
    \item If no such valid solution exists, we set \( D[t, \SS] = \emptyset \).  
\end{itemize}  

For each state $D[t,\SS]$, we initialize $D[t,\SS]$ to the list $\{(0,0)\}$.

    \textbf{Dynamic Programming on $D$}: We update the table $D$ as follows. We initialize the table of states with nodes $t\in V_\mathbb{T}$ such that $\ell(t)=L$. When all such states are initialized, then we move to update states where the node $t$ has $\ell(t) = L-1$, and so on, till we finally update states with $r$ as the node --- note that $\ell(r) =0$. For a particular $j$, $0\leq j< L$ and a state $D[t,\SS]$ such that $\ell(t) = j$, we can assume that $D[t',\SS']$ have been computed for all $t'$ such that $\ell(t')>j$ and all subsets $\SS'$ of $X_{t'}$. Now we consider several cases by which $D[t,\SS]$ is updated based on the nature of $t$ in $\mathbb{T}$:
    
    \begin{enumerate}
    \item \textbf{Leaf node.} Suppose $t$ is a leaf node with $X_{t} = \{v\}$. The only possible cases that can arise are as follows:
    \begin{enumerate}
        \item $D[t,\phi]$ stores the pair $\{(0, 0)\}$
        \item $D[t,\{v\}]$ = $\{(w_v, p_v)\}$, if $w_v \leq s$
    \end{enumerate} 
    
    \item \textbf{Introduce node.} Suppose $t$ is an introduce node. Then it has only one child $t'$ where $X_{t'} \subset X_{t}$ and there is exactly one vertex $u \neq v$ that belongs to $X_{t}$ but not $X_{t'}$. Then for all $\SS \subseteq X_t$:
    \begin{enumerate}
        \item If $u \not\in \SS$, 
        \begin{enumerate}
            \item if \(\exists w: w\in N^-(u), w \in \SS \) and \(N^+(w) \cap \VV_t = \{u\}\), then $D[t,\SS] = \emptyset$, hence assume otherwise.
            \item Otherwise, we copy all pairs of $D[t',\SS^\pr = \SS \setminus \{u\}]$ to $D[t',\SS,]$. 
        \end{enumerate}
        
        \item If $u \in \SS$,
        \begin{enumerate}            
            \item if $\VV_t \cap N^+_{\GG_t}(u) \subseteq \SS$, then for every pair $(w,p)$ in $D[t',\SS^\pr=\SS \setminus \{u\}]$ with $w + w_u \leq s$, we add $(w+w_u, p + p_u)$ and copy to $D[t,\SS]$.
            
            \item Otherwise, $D[t,\SS]$ = $\emptyset$

        \end{enumerate}
    \end{enumerate}
    
    \item \textbf{Forget node.} Suppose $t$ is a forget vertex node. Then it has only one child $t'$, and there is a vertex $u \neq v$ such that $X_t\cup\{u\} = X_{t'}$. Then for all $\SS \subseteq X_t$ we copy all feasible undominated pairs stored in $D[t',\SS\cup \{u\}]$ and $D[t',\SS]$ to $D[t,\SS]$. We remove any dominated pair and maintain only the undominated pairs in $D[t,\SS]$.

    \item \textbf{Join node.} Suppose $t$ is a join node. Then it has two children $t_1,t_2$ such that $X_t = X_{t_1} = X_{t_2}$. Then for all $\SS \subseteq X_t$, let $(w(\SS), p(\SS))$ be the total weight and value of the vertices in $\SS$. Consider a pair $(w_1,p_1)$ in $D[t_1,\SS_1]$ and a pair $(w_2,p_2)$ in $D[t_2,\SS_2]$, where $\SS_1 \cap \SS_2 = \SS$. Then for all pairs $(w_1,p_1) \in D[t_1,\SS_1]$ and $(w_2,p_2) \in D[t_2,\SS_2]$, if $w_1 + w_2 - w(\SS_1 \cap \SS_2) \leq s$, then we add $ \left( w_1 + w_2 - w(\SS_1 \cap \SS_2), p_1+p_2- p(\SS_1 \cap \SS_2) \right)$ and copy to $D[t,\SS]$.
    \end{enumerate}
    
Finally, in the last step of updating $D[t,\SS]$, we go through the list saved in $D[t,\SS]$ and only keep undominated pairs.

The output of the algorithm is a pair $(w,p)$ that is maximum of those stored in $D[r,\{v\}]$ such that $w \leq s$ and $p$ is the maximum value over all pairs in $D[r,\{v\}]$.

\textbf{Proof of correctness}: 
Recall that we are looking for a solution that is a set of vertices $\VV^\pr$ that contains the fixed vertex $v$ that belongs to all bags of the nice tree decomposition. In each state we maintain a list of feasible and undominated (weight, profit) pairs that correspond to the solution $\hat{\SS}$ of $\GG_t$ and  satisfy \( \hat{\SS} \cap X_t = \SS \).

We now show that the update rules holds for each $X_t$. To prove this formally, we need to consider the cases of what $t$ can be:

\begin{enumerate}
    \item \textbf{Leaf node.} Recall that in our modified nice tree decomposition we have added a vertex $v$ to all the bags. Suppose a leaf node $t$ contains a single vertex $v$, $D[t,\phi]$ stores the pair $\{(0, 0)\}$ and $D[t,\{v\}]$ = $\{(w_v, p_v)\}$ if $w_v \leq s$. This is true in particular when $j = L$, the base case. From now we can assume that for a node $t$ with $\ell(t) = j < L$ and all subsets $\SS \subseteq X_t$, $D[t',\SS^\pr]$ entries are correct and correspond to a \hand solution in $\GG_t$. when $\ell(t') > j$. 

    \item \textbf{Introduce node.} When $t$ is an introduce node, there is a child $t'$ such that $X_t = X_{t'} \cup \{u\}$. We are introducing a vertex $u$ and the edges associated with it in $\GG_t$. Let us prove for each case.
    \begin{enumerate}
        \item When $u$ is not included in $\SS$, the families of sets $\hat{\SS}$ considered in the definition of $D[t,\SS]$ and of $D[t',\SS]$ are equal. Notice that there may be a vertex $w \in X_{t}$ that belongs to $\SS$ such that $N^+(w) \cap V_{t'} = \emptyset$, i.e. $w$ did not have any out-neighbor in the subgraph $\GG_{t'}$. However, with the introduction of the vertex $u$ and the edges associated to it, if $N^+(w) \cap V_{t} = \{u\}$ i.e. $u$ is the only out-neighbor of $w$ in $\GG_t$, since $u$ does not belong to $\SS$, the \hand constraint is violated and hence the state $D[t,\SS]$ stores $\emptyset$ as it is no longer feasible. Otherwise we copy all pairs of $D[t',\SS^\pr = \SS]$ to $D[t,\SS]$.

        \item Now consider the case when $u$ is part of $\SS$. Let $\hat{\SS}$ be a feasible solution for \hand attained in the definition of $D[t,\SS]$. Then it follows that $\hat{\SS} \setminus \{u\}$ is one of the sets considered in the definition of $D[t',\SS \setminus \{u\}]$. It suffices to just check if $u$ has a neighbor in $X_t$ or not because the nice tree decomposition ensures that the newly introduced vertex $u$ can have neighbors only in $X_t$. There are two subcases:
        \begin{enumerate}
            \item \textbf{Subcase 1}. When $u$ has all out-neighbors $x$ selected in $\SS$ or $u$ does not have any out-neighbor in $\GG_{t}$, then for every pair $(w,p)$ in $D[t',\SS^\pr=\SS \setminus \{u\}]$ with $w + w_u \leq s$, we add $(w+w_u, p + p_u)$ and copy to $D[t,\SS]$.
            
            \item \textbf{Subcase 2}. When $u$ has a out-neighbor in $X_t \setminus \SS$, then $D[t,\SS] = \emptyset$.
        \end{enumerate}
    \end{enumerate}
    Since $\ell(t') > \ell(t)$, by induction hypothesis all entries in $D[t', \SS'= \SS]$ and $D[t', \SS'= \SS \setminus \{u\}]$,  $\forall$ $\SS' \subseteq X_{t'}$ are already computed. We update pairs in $D[t,\SS]$ depending on the cases discussed above.
    
    \item \textbf{Forget Node.}  When $t$ is a forget node, there is a child $t'$ such that $X_t = X_{t'} \setminus \{u\} $. Let $\hat{\SS}$ be a set for which the \hand solution is attained in the definition of $D[t,\SS]$. If $u \not\in \hat{\SS}$, then $\hat{\SS}$ is one of the sets considered in the definition of $D[t',\SS]$. And if $u \in \hat{\SS}$, then $\hat{\SS}$ is one of the sets considered in the definition of $D[t',\SS \cup \{u\}]$. Since $\ell(t') > \ell(t)$, by induction hypothesis all entries in $D[t',\SS' = \SS]$ and $D[t',\SS' = \SS \cup \{u\}]$, $\forall$ $\SS' \subseteq X_{t'}$ are already computed and feasible. We copy each undominated $(w,p)$ pair stored in $D[t',\SS' = \SS]$ and $D[t',\SS' = \SS \cup \{u\}]$ to $D[t,\SS]$.

    \item \textbf{Join node.}  When $t$ is a join node, there are two children $t_1$ and $t_2$ of $t$, such that $X_t = X_{t_1} = X_{t_2}$. Let $\hat{\SS}$ be a set for \hand attained in the definition of $D[t,\SS]$. Let $\hat{\SS}_1 = \hat{\SS} \cap V_{t_1}$ and $\hat{\SS}_2 = \hat{\SS} \cap V_{t_2}$. Observe that $\hat{\SS}_1$ is a solution to \hand in $\GG_{t_1}$ and $\hat{\SS}_1 \cap X_{t_1} = \SS_1$, so this is considered in the definition of $D[t_1, \SS]$ and similarly, $\hat{\SS}_2 \cap X_{t_2} = \SS_2$. From the definition of nice tree decomposition we know that there is no edge between the vertices of $V_{t_1} \setminus X_t$ and $V_{t_2} \setminus X_t$. Then we merge solutions from the two subgraphs and remove the over-counting. By the induction hypothesis, the computed entries in $D[t_1,\SS_1]$ and $D[t_2,\SS_2]$ where $\SS_1 \cap \SS_2 = \SS$ are correct and store the feasible and undominated \hand solutions for the subgraph $G_{t_1}$ in $\SS_1$ and similarly, $\SS_2$ for $G_{t_2}$. Now we add ($w_1 + w_2 - w(S_1 \cap S_2),  p_1+p_2 -p(\SS_1 \cap \SS_2)$) to $D[t,\SS]$.

\end{enumerate}

What remains to be shown is that an undominated feasible solution $\VV^\pr$ of \hand in $\GG$ is contained in $D[r,\{v\}]$. Let $w$ be the weight of $\VV^\pr$ and $p$ be the value subject to \hand. Recall that $v \in \VV^\pr$. For each $t$, we consider the subgraph $\GG_t \cap \VV^\pr$. Since the DP state $D[t,\SS]$ is updated correctly for all subgraphs $\GG_t$, the bottom up dynamic programming approach ensures that all feasible and undominated pairs are correctly propagated. If $\VV^\pr$ is a valid solution, then the corresponding (weight, profit) must be stored in some DP state. Since $v \in \VV^\pr$, the optimal states where $v$ is selected will store the pair $(w,p)$. Therefore, $D[r,\{v\}]$ contains the pair $(w,p)$. 

\textbf{Running time}: There are $n$ choices for the fixed vertex $v$. Upon fixing $v$ and adding it to each bag of $(\mathbb{T}, \mathcal{X})$ we consider the total possible number of states. For every node $t$, we have $2^{|X_t|}$ choices of ${\SS}$. For each state, for each $w$, there can be at most one pair with $w$ as the first coordinate; similarly, for each $p$, there can be at most one pair with $p$ as the second coordinate. Thus, the number of undominated pairs in each $D[t,\SS]$ is at most ${\sf min}\{s,d\}$ time. Since the treewidth of the input graph \GG is at most \tw, it is possible to construct a data structure in time $\tw^{\OO(1)} \cdot n$ that allows performing adjacency queries in time $\OO(\tw)$.  For each node $t$, it takes time $\OO\left(2^{\tw} \cdot \tw^{\OO(1)} \cdot {\sf min}\{s^2,d^2\}\right)$ to compute all the values $D[t,\SS]$ and remove all undominated pairs. Since we can assume w.l.o.g that the number of nodes of the given tree decompositions is $\OO(\tw \cdot n)$, and there are $n$ choices for the vertex $v$, the running time of the algorithm is $\OO\left(2^{\tw}\cdot n^{\OO(1)} \cdot {\sf min}\{s^2,d^2\}\right)$.
\end{proof}}


In the undirected, unweighted setting of the \sor problem where all vertex weights and profits are $1$ and the knapsack size and demand are both equal to $k$, the structure of any feasible solution exhibits a simple yet crucial property. A subset of vertices forms a feasible solution if it induces a forest in $\GG$, where every connected component has size at least $2$. This structural restriction arises directly from the \sor semantics: a vertex contributes to the profit only if at least one of its neighbors is also selected. A pendant vertex of a component selected alone cannot be profitable, as it has no selected neighbor. Thus, the optimal solution consists only of non-trivial components—trees of size at least $2$—where every vertex has at least one neighbor in the solution and hence contributes to the profit. Notice that this is the exact structure an optimal \hor solution admits. 

\begin{observation}
    In a uniform, undirected graph \sor reduces to \hor.
\end{observation}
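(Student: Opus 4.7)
The plan is to exhibit an identity reduction: given a uniform undirected instance $(\GG,\{w_u=1\},\{p_u=1\},s,d)$ of \sor, solve the same instance viewed as an instance of \hor. The forward direction, that any \hor-feasible solution is automatically \sor-feasible with the same weight and profit, is immediate from the definitions, since the \hor constraint already forces every chosen vertex to contribute to the profit under the \sor semantics. So the only content lies in the converse: from a \sor-feasible solution one must extract an \hor-feasible solution of no larger weight and no smaller profit.

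For the converse, I would take any \sor-feasible $\SS \subseteq \VV$ with $|\SS| \le s$ and profit at least $d$, and let $\SS^\pr \subseteq \SS$ be the set of profit-contributing vertices, namely
\[
\SS^\pr \;=\; \{\, v \in \SS : N(v) \cap \SS \neq \emptyset \text{ or } N(v) = \emptyset\,\}.
\]
By definition the \sor profit of $\SS$ equals $|\SS^\pr| \ge d$, and clearly $|\SS^\pr| \le |\SS| \le s$. The crucial step is to verify that $\SS^\pr$ is itself an \hor-feasible set. Pick any $v \in \SS^\pr$. If $N(v) = \emptyset$ then $v$ already satisfies the \hor condition. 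Otherwise $v$ has some neighbor $u \in \SS$. Because the graph is undirected, $v$ is likewise a neighbor of $u$ lying in $\SS$, so $u$ is itself profit-contributing and hence $u \in \SS^\pr$. Thus every non-isolated $v \in \SS^\pr$ has a neighbor in $\SS^\pr$, which is exactly the \hor constraint.

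Combining both directions, the optimal \sor value and the optimal \hor value coincide on any uniform undirected instance, and a solver for \hor can be used verbatim to produce the \sor optimum (by returning $\SS^\pr$ of the output set). The only subtlety I anticipate is the treatment of isolated vertices, since both problems count them as automatically profitable; but this is already built into the definitions via the clause $N(v) = \emptyset$, so the argument above handles them without modification. No other obstacle arises, since the symmetry of undirected edges is precisely what makes the extraction step $\SS \mapsto \SS^\pr$ closed under neighborhoods, a property that would fail in the directed setting.
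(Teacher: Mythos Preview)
Your proof is correct and follows essentially the same idea as the paper. The paper argues informally in the paragraph preceding the observation that an optimal \sor solution in the uniform undirected setting can be pruned to components of size at least two (plus isolated vertices), which is precisely the \hor feasibility structure; your extraction $\SS \mapsto \SS^\pr$ formalizes exactly this pruning, and your use of edge symmetry to show $\SS^\pr$ is closed under the \hor constraint is the key step the paper leaves implicit.
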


Using this reduction, we can apply the linear-time algorithm designed by Borradaile et al.\ in~\cite{borradaile2012knapsack} for solving the uniform, undirected \hor problem. This yields the following result:

\begin{theorem}\label{undi-sor-uni-linear time}
The \sor problem on undirected graphs with unit weights and unit profits for a given knapsack of size $k$ can be solved in linear time.    
\end{theorem}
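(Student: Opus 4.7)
The plan is to exploit the observation immediately preceding the theorem, which notes that in the uniform, undirected setting, any feasible \sor solution can be massaged into one whose every selected vertex has at least one selected neighbor, without loss in profit. First I would spell out why this observation converts \sor into \hor on the identical instance. If $S$ is any feasible \sor solution of weight at most $k$ and some $v \in S$ has no neighbor in $S$, then $v$ contributes $0$ to the \sor profit while using one unit of weight; hence $S \setminus \{v\}$ has the same profit and strictly smaller weight. Iterating this pruning yields an optimal solution $S^\ast$ in which every vertex has at least one neighbor in $S^\ast$, i.e., $S^\ast$ satisfies the \hor neighborhood constraint. Conversely, any set meeting that constraint has \sor profit equal to its cardinality under unit profits. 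Thus the two optimization problems coincide on the same input $(\GG, k)$.

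With the equivalence in hand, the second step is to invoke the $\OO(n+m)$ algorithm of Borradaile et al.~\cite{borradaile2012knapsack} for the uniform, undirected \hor problem. Running that algorithm on $(\GG, k)$ returns a maximum-cardinality subset of size at most $k$ satisfying the \hor constraint, which by the preceding equivalence is also an optimal \sor solution of the same profit. The total running time remains $\OO(n+m)$, matching the claimed bound.

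I do not foresee any serious obstacle: the argument is essentially a one-line reduction followed by a citation. The only subtle point is to confirm that the pruning step is unconditionally profit-preserving, which is immediate under unit weights and unit profits since a vertex with no selected neighbor contributes exactly zero. No new combinatorial work beyond the observation and the invocation of the existing linear-time routine is required.
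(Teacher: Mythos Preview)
Your proposal is correct and mirrors the paper's own proof: both reduce uniform undirected \sor to uniform undirected \hor via the observation that a selected non-isolated vertex without a selected neighbor can be discarded without loss, and then invoke the linear-time algorithm of Borradaile et al.~\cite{borradaile2012knapsack}. The only minor slip is the claim that ``a vertex with no selected neighbor contributes exactly zero'' --- this fails for vertices isolated in $\GG$, but since such vertices are handled identically in \sor and \hor, the reduction and the conclusion are unaffected.
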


Unlike the undirected version, uniform, directed \sor is strongly \NPC. We design an additive 1 approximation algorithm for uniform, directed \sor.

\begin{theorem}\label{di-sor-1-approx}
The \sor problem on directed graphs with unit weights and unit profits, and a given knapsack of size $k$, admits a linear-time algorithm that returns a solution with profit at least $\OPT - 1$, where $\OPT$ is the profit of an optimal solution.  
\end{theorem}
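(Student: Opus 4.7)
The plan is to exploit the trivial upper bound $\OPT \le \min(|\VV|,k)$ and construct $S$ of size at most $k$ with at most one unprofitable selected vertex, giving profit at least $\OPT - 1$. The key observation is that $\VV$ itself is a good set in $\GG$: each non-sink vertex has an out-neighbor, which lies automatically in $\VV$, so $S = \VV$ makes every vertex profitable. When $|\VV| \le k$, the algorithm simply outputs $S = \VV$, achieving $\OPT = |\VV|$ in $\OO(n+m)$ time.

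Assume $|\VV| > k$. The algorithm first computes the strongly connected components of $\GG$ via Tarjan's algorithm in $\OO(n+m)$ time and identifies the terminal SCCs of the condensation DAG. Since $\GG$ has no self-loops (by \Cref{self-loop-di-sor}), every terminal SCC is either a singleton sink of $\GG$ or an SCC of size at least two that contains a directed cycle. It then performs reverse BFS from the union of all terminal SCCs, assigning to every non-terminal vertex $v$ a parent among its out-neighbors at strictly smaller BFS depth. Any BFS prefix of the non-terminal vertices, together with the fully-included terminal SCCs containing those parents, forms a good set.

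The greedy packing fills the budget $k$ exactly: iterate over terminal SCCs in decreasing size order, including each in full whenever it fits in the remaining budget, and interleave non-terminal climbers in BFS order at unit cost and unit profit per vertex. If at some point the remaining budget $r \ge 1$ is strictly less than the size of every unincluded terminal SCC and no attachable climber remains, then locate a directed cycle inside one such SCC via DFS restricted to that SCC (guaranteed to exist since the SCC has size at least two): if the cycle fits, include it fully and extend within the SCC by further BFS-ordered climbers until the budget is exhausted, incurring no loss; otherwise pick $r$ consecutive vertices of the cycle, losing exactly one unit of profit at the tail vertex whose outgoing cycle edge leaves $S$.

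This construction yields $|S| = k$ with at most one unprofitable selected vertex and thus profit at least $k - 1 \ge \OPT - 1$. All operations---SCC decomposition, reverse BFS, and the greedy packing---run in $\OO(n+m)$ time. The main technical obstacle is that terminal SCCs of size at least two may be structurally richer than plain cycles, but the argument above reduces the partial-inclusion analysis to a single cycle located by DFS inside the SCC, keeping the final profit loss bounded by one unit irrespective of the SCC's internal structure.
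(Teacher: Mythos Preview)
Your approach is correct but takes a more elaborate route than the paper. The paper simply sorts all SCCs by decreasing size, greedily includes whole SCCs until one overflows, and then runs a DFS inside that overflowing component $C_j$, adding the first $t$ vertices in DFS preorder; its analysis shows that among those $t$ vertices only the last can lack a selected out-neighbour. You instead separate terminal from non-terminal SCCs, route every non-terminal vertex toward a terminal SCC via a reverse-BFS parent pointer, pack terminal SCCs together with attachable ``climbers'' so that every included vertex is profitable, and only at the end partially fill one large terminal SCC via an explicitly located cycle. What your extra machinery buys is a clean guarantee that every vertex added before the final partial step is profitable: the paper's scheme relies on each fully-included SCC contributing all of its vertices, which is immediate for SCCs of size at least two but is delicate for singleton non-sink SCCs---precisely your non-terminal singletons---a case the paper's write-up does not explicitly address. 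One point worth tightening in your description: replace the vague ``interleave'' by a two-phase procedure (first pack terminal SCCs, then add climbers in BFS order whose parent is already in $S$) so that a climber is never examined before its anchoring terminal SCC has been processed; and note that the decreasing-size sort is by counting sort to keep the overall bound at $\OO(n+m)$.
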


\begin{proof}
We are given a directed graph $\GG = (\VV, \EE)$ with unit weights and unit profits, and a knapsack size $k$. The goal is to find a subset $\SS \subseteq V$ of size at most $k$ that satisfies the \sor profit conditions and achieves a total profit of at least $\OPT - 1$.

We first compute the strongly connected components (SCCs) of $\GG$, which can be done in linear time using Kosaraju’s or Tarjan’s algorithm. We sort the components in non-increasing order of their sizes. Let the components be $C_1, C_2, \cdots,C_l$. We initialize an empty solution set $\SS = \emptyset$ and a counter $c = |\SS|$.

Next, we iteratively add vertices from entire components to $\SS$ as long as the total size remains within the knapsack budget $k$ and update $c$ accordingly. Let $C_j$ be the first component that cannot be fully included due to size constraints. Let $t = k - c$ denote the remaining budget.

We now analyze the remaining possibilities to fill the remaining budget $t$. Construct the subgraph $\GG[C_j]$ and perform a DFS from any vertex $v \in C_j$ to build a DFS tree. Add the first $t$ vertices visited in DFS order to the solution set $\SS$.

For each of the first $t - 1$ vertices in the DFS order, either the DFS tree edge or a back edge guarantees that the vertex has an out-neighbor among the previously selected vertices. Hence, these $t - 1$ vertices satisfy the \sor profit condition. Only the last selected vertex (i.e., the $t$-th in the DFS order) may not have any out-neighbor among the selected set if all its outgoing edges point to vertices outside the selected subtree. In such a case, this single vertex may not contribute to the profit. Nevertheless, at least $t - 1$ vertices do contribute, which guarantees a total profit of at least $\OPT - 1$.
\end{proof}

We now solve the problem for general directed graphs. In \Cref{sor-npc-bi,sand-npc}, we showed that \sor and \sand are \NPC, hence we do not expect polynomial time algorithm under standard complexity theoretic assumptions. We design pseudo-\FPT algorithms parameterized by treewidth. The approach used in designing pseudo-\FPT algorithms for \hor and \hand breaks down for \sor and \sand, as we now have the choice to include a vertex without accounting for its profit. This necessitates explicitly keeping track of the vertices that contribute to the total weight, and among them, identifying the subset that actually contributes to the total value. On a high level, we store partial solutions in each state where we guess the subset $\SS$ as the vertices included in the optimal solution say, $\VV^\pr$, and $\PP \subseteq \SS$ as the subset that contributes to the total profit.
\begin{theorem}\label{sor-di-fpt}
($\star$) There is an \FPT algorithm for directed \sor with running time $\OO^\star\left(4^{\tw}\cdot {\sf min}\{s^2,d^2\}\right)$, where $n$ is the number of vertices in the input graph, $\tw$ is the treewidth of the input graph, $s$ is the input size of the knapsack and $d$ is the input target value. 
\end{theorem}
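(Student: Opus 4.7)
The plan is to adapt the dynamic programming template of \Cref{hor-di-fpt} to the relaxed semantics of \sor, where a selected vertex contributes to the weight unconditionally but to the profit only when at least one of its out-neighbors is also selected. I would fix a nice tree decomposition $(\mathbb{T},\mathcal{X})$ of the underlying undirected graph of $\GG$, guess a vertex $v\in\VV^\pr$ of the sought solution, and pad every bag with $v$ so that $X_r=\{v\}$ and each leaf bag equals $\{v\}$; bags are processed bottom-up as before.

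I would enrich the DP state to $D[t,\SS,\PP,\QQ]$ with $\QQ\subseteq\PP\subseteq\SS\subseteq X_t$. Here $\SS$ is the set of bag-vertices in the partial solution $\hat{\SS}\subseteq V_t$, $\PP$ is the subset \emph{committed} to contribute to profit in the final solution, and $\QQ$ is the sub-subset whose out-neighbor requirement has \emph{already} been witnessed inside $\hat{\SS}\cap V_t$. Every vertex of $X_t$ therefore belongs to one of four disjoint classes---outside $\SS$, in $\SS\setminus\PP$, in $\PP\setminus\QQ$, or in $\QQ$---which gives $4^{|X_t|}\leq 4^{\tw+1}$ state triples per bag. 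As in \Cref{hor-di-fpt}, each entry stores a list of at most ${\sf min}\{s,d\}$ undominated (weight, profit) pairs, where the stored profit accounts for the $p_u$ of already-forgotten verified vertices together with the $p_u$ of $u\in\QQ$.

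The four transition cases follow \Cref{hor-di-fpt} with added bookkeeping for $\PP$ and $\QQ$. The leaf case initializes $D[t,\emptyset,\emptyset,\emptyset]$ with $(0,0)$ and the two $\SS=\{v\}$ variants with $(w_v,0)$, so that the profit of $v$ is added only upon verification. Introducing a vertex $u$ updates $u$'s own class and, if $u\in\SS$, promotes any bag-vertex $w\in\PP\setminus\QQ$ with $u\in N^+(w)$ into $\QQ$; symmetrically, $u$ enters $\QQ$ iff it is placed in $\PP$ and already has an out-neighbor in $\SS\cap X_t$. Forgetting a vertex $u$ is the critical check: if $u\in\PP\setminus\QQ$ the state is infeasible (every neighbor of $u$ lies in $V_t$, so no future verifier can appear); if $u\in\QQ$ we add $p_u$ to the profit of every stored pair; if $u\in\SS\setminus\PP$ we simply drop it, keeping its weight contribution. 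Joining two children with identical $\SS$ and $\PP$ yields a parent state with $\QQ=\QQ_1\cup\QQ_2$ after subtracting $w(\SS)$ and $p(\QQ_1\cap\QQ_2)$ to remove double counting. The answer is the best pair with weight at most $s$ extracted from $D[r,\{v\},\PP_r,\QQ_r]$ across the admissible choices of $\PP_r,\QQ_r\in\{\emptyset,\{v\}\}$.

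The main obstacle is keeping the join work within $4^{\tw}$ up to polynomial factors: naively enumerating pairs $(\QQ_1,\QQ_2)$ with $\QQ_1\cup\QQ_2=\QQ$ costs $3^{|\QQ|}$ per target $\QQ$ and would inflate the bound to $5^{\tw}$ or $6^{\tw}$. I would avoid this by running a subset OR-convolution on the $\QQ$-coordinate via the standard zeta/M\"obius transforms, applied coordinate-wise on the list of undominated pairs, so that the per-bag join work stays $4^{|X_t|}\cdot\mathrm{poly}(\tw,s,d)$. Correctness then follows by induction on $\ell(t)$ along the lines of \Cref{hor-di-fpt}, verified class-by-class for the four bag-vertex classes. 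Multiplying by the $n$ choices of $v$, the $\OO(\tw\cdot n)$ nodes of $(\mathbb{T},\mathcal{X})$, and the ${\sf min}\{s^2,d^2\}$ factor from pair combination and dominance filtering yields the overall $\OO^\star\!\left(4^{\tw}\cdot{\sf min}\{s^2,d^2\}\right)$ running time.
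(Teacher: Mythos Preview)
Your plan follows the same nice-tree-decomposition DP as the paper, but the paper gets by with one fewer layer of state: it stores only $D[t,\SS,\PP]$ with $\PP\subseteq\SS\subseteq X_t$, where $\PP$ is the set of selected bag-vertices that already have a selected out-neighbor inside the partial graph $\GG_t$. In other words, the paper's $\PP$ is your $\QQ$, and your intermediate ``commitment'' set is superfluous. The reason is that in \sor profit is a consequence, not a choice: when a selected vertex $u$ is forgotten without a witness you simply do not add $p_u$ (the state stays feasible), and since we are maximizing there is never a reason to pre-commit a vertex that will not be verified, nor to withhold commitment from one that will. Collapsing your $\PP$ into $\QQ$ reproduces the paper's three-label scheme; the paper then credits $p_u$ at the moment a vertex first acquires a selected out-neighbor (in the introduce step, also crediting any $z\in\SS\setminus\PP$ that gains the newly introduced vertex as witness), rather than at forget time.

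Your worry about the join is legitimate, but the fix you propose does not go through. The zeta/M\"obius route computes cover products over a sum--product semiring; it does not transfer to max--plus, and certainly not to Pareto fronts of $(w,p)$ pairs, so ``applied coordinate-wise on the list of undominated pairs'' is not a sound $4^{|X_t|}\cdot\mathrm{poly}$ procedure. The paper, for its part, does not isolate the join cost at all---its running-time paragraph simply bounds the number of $(\SS,\PP)$ pairs by $2^{|X_t|}\cdot 2^{|X_t|}$ and asserts $4^{\tw}\cdot\min\{s^2,d^2\}\cdot\tw^{O(1)}$ work per node---so on this point you are being more scrupulous than the paper, but without a workable remedy. (A minor aside: your introduce rule places $u$ into $\QQ$ only when it has a selected out-neighbor in $X_t$, which misses the case $N^+(u)=\emptyset$; such vertices are profitable by definition and should enter $\QQ$ immediately.)
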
 

\longversion{
\begin{proof}
We consider a nice tree decomposition of the underlying graph. Let $(\GG = (V,E),\{w_u\}_{u \in V_G}, \{p_u\}_{u \in V_G}, s,d)$ be an input instance of \sor such that $\tw=tw(\GG)$. Let $\VV^\pr$ be an optimal solution to \sor. For technical purposes, we guess a vertex $v \in \VV^\pr$ --- once the guess is fixed, we are only interested in finding solution subsets $\hat{\VV}$ that contain $v$ and $\VV^\pr$ is one such candidate. We also consider a nice edge tree decomposition $(\mathbb{T} = (V_{\mathbb{T}},E_{\mathbb{T}}),\mathcal{X})$ of $\GG$ that is rooted at a node $r$, and where $v$ has been added to all bags of the decomposition. Therefore, $X_{r} = \{v\}$ and each leaf bag is the singleton set $\{v\}$.

We define a function $\ell: V_{\mathbb{T}} \rightarrow \mathbb{N}$ as follows.
For a vertex $t \in V_\mathbb{T}$, $\ell(t) = \sf{dist}_\mathbb{T}(t,r)$, where $r$ is the root. Note that this implies that $\ell(r) = 0$. Let us assume that the values that $\ell$ take over the nodes of $\mathbb{T}$ are between $0$ and $L$. For a node $t\in V_\mathbb{T}$, we denote the set of vertices in the bags in the subtree rooted at $t$ by $V_t$ and $\GG_t=\GG[V_t]$. Now, we describe a dynamic programming algorithm over $(\mathbb{T},\mathcal{X})$. We have the following states.

    \textbf{States:} We maintain a DP table $D$ where a state has the following components:
    \begin{enumerate}
    \item $t$ represents a node in $V_\mathbb{T}$.
    \item $\SS$ represents a subset of the vertex subset $X_t$
    \item $\PP$ represents a subset of the vertex subset $\SS$ 
    \end{enumerate}
    \textbf{Interpretation of States}
For each node \( t \in \mathbb{T} \), we maintain a list \( D[t, \SS, \PP] \) for every subset \( \SS \subseteq X_t \) and \( \PP \subseteq \SS \). Each entry in this list holds a set of undominated (weight, profit) pairs corresponding to valid \sor solutions $\hat{\SS}$ in the graph \( \GG_t \), where:  

\begin{itemize}  
    \item The selected vertices in \( \GG_t \) satisfy \(\hat{\SS} \cap X_t = \SS \).  
    \item The set of profit-contributing vertices is \( \PP \), meaning all vertices in \( \PP \) are part of \( \SS \) and each has at least one selected out-neighbor in \( \GG_t \) if it has a out-neighbor in \( \GG_t \).  
    \item If no such valid solution exists, we set \( D[t, \SS, \PP] = \emptyset \).  
\end{itemize}  

For each state $D[t,\SS,\PP]$, we initialize $D[t,\SS,\PP]$ to the list $\{(0,0)\}$.

    \textbf{Dynamic Programming on $D$}: We update the table $D$ as follows. We initialize the table of states with nodes $t\in V_\mathbb{T}$ such that $\ell(t)=L$. When all such states are initialized, then we move to update states where the node $t$ has $\ell(t) = L-1$, and so on, till we finally update states with $r$ as the node --- note that $\ell(r) =0$. For a particular $j$, $0\leq j< L$ and a state $D[t,\SS,\PP]$ such that $\ell(t) = j$, we can assume that $D[t',\SS',\PP']$ have been computed for all $t'$ such that $\ell(t')>j$ and all subsets $\SS'$ of $X_{t'}$. Now we consider several cases by which $D[t,\SS,\PP]$ is updated based on the nature of $t$ in $\mathbb{T}$:
    
    \begin{enumerate}
    \item \textbf{Leaf node.} Suppose $t$ is a leaf node with $X_{t} = \{v\}$. Then the list stored in $D[t,\SS,\PP]$ depends on the sets $\SS$ and $\PP$.  If $\PP$ is not a subset of $\SS$, we store $D[t,\SS,\PP] = \emptyset$; hence assume otherwise. The only possible cases that can arise are as follows:
    \begin{enumerate}
        \item $D[t,\phi,\phi]$ stores the pair $\{(0, 0)\}$
        \item $D[t,\{v\},\{v\}]$ = $\{(w_v, p_v)\}$, if $w_v \leq s$
        \item $D[t,\{v\},\phi]$ stores the pair $\{(w_v, 0)\}$, if $w_v \leq s$
    \end{enumerate} 
    
    \item \textbf{Introduce node.} Suppose $t$ is an introduce node. Then it has only one child $t'$ where $X_{t'} \subset X_{t}$ and there is exactly one vertex $u \neq v$ that belongs to $X_{t}$ but not $X_{t'}$. Then for all $\SS \subseteq X_t$ and all $\PP \subseteq X_t$ , if $\PP$ is not a subset of $\SS$, we store $D[t,\SS,\PP] = \emptyset$; hence assume otherwise.
    \begin{enumerate}
        \item If $u \not\in \SS$ and therefore $u \not\in \PP$, 
        \begin{enumerate}
            \item if \(\exists w: w\in N^-(u), w \in \PP \) and \(N^+(w) \cap \VV_t = \{u\}\), then $D[t,\SS,\PP] = \emptyset$.
            \item Otherwise, we copy all pairs of $D[t',\SS^\pr = \SS,\PP^\pr=\PP]$ to $D[t',\SS,\PP]$. 
        \end{enumerate}
        
        \item If $u \in \SS$,
        \begin{enumerate}            
            \item if $u$ has out-neighbor $x$ in $\SS$ or \(N^+(u) \cap \VV_t = \emptyset\), then $u \in \PP$, and $\PP = \PP^\pr \cup \{u\}$,
            and for every pair $(w,p)$ in $D[t',\SS^\pr=\SS \setminus \{u\},\PP^\pr=\PP \setminus \{u\}]$ with $w + w_u \leq s$, we add $(w+w_u, p + \sum_{z \in \SS: (z,u) \in \GG, z \not\in \PP^\pr} p_z + p_u)$ and copy to $D[t,\SS,\PP]$.
            
            \item if $u$ does not have a out-neighbor $x$ in $\SS$ and \(N^+(u) \cap \VV_t \neq \emptyset\), then $u \not\in \PP$, and $\PP = \PP^\pr$,
            then for every pair $(w,p)$ in $D[t',\SS^\pr=\SS \setminus \{u\},\PP^\pr=\PP \setminus \{u\}]$ with $w + w_u \leq s$, we add $(w+w_u, p + \sum_{z \in \SS: (z,u) \in \GG, z \not\in \PP^\pr} p_z )$ and copy to $D[t,\SS,\PP]$.

        \end{enumerate}
    \end{enumerate}
    
    \item \textbf{Forget node.} Suppose $t$ is a forget vertex node. Then it has only one child $t'$, and there is a vertex $u \neq v$ such that $X_t\cup\{u\} = X_{t'}$. Then for all $\SS \subseteq X_t$ and all $\PP \subseteq X_t$ , if $\PP$ is not a subset of $\SS$, we store $D[t,\SS,\PP] = \emptyset$; hence assume otherwise. For all $\SS \subseteq X_t$, we copy all feasible undominated pairs stored in $D[t',\SS\cup \{u\},\PP\cup \{u\}]$, $D[t',\SS\cup \{u\},\PP]$, and $D[t',\SS,\PP]$ to $D[t,\SS,\PP]$. If any pair stored in $D[t',\SS\cup \{u\},\PP]$, and $D[t',\SS,\PP]$ is dominated by any pair of $D[t',\SS\cup \{u\},\PP\cup \{u\}]$, we copy only the undominated pairs to $D[t,\SS,\PP]$.

    \item \textbf{Join node.} Suppose $t$ is a join node. Then it has two children $t_1,t_2$ such that $X_t = X_{t_1} = X_{t_2}$. Then for all $\SS \subseteq X_t$, let $(w(\SS), p(\SS))$ be the total weight and value of the vertices in $\SS$. Consider a pair $(w_1,p_1)$ in $D[t_1,\SS_1,\PP_1]$ and a pair $(w_2,p_2)$ in $D[t_2,\SS_2,\PP_2]$, where $\SS_1 \cup \SS_2 = \SS$ and $\PP_1 \cup \PP_2 = \PP$. Then for all pairs $(w_1,p_1) \in D[t_1,\SS_1,\PP_1]$ and $(w_2,p_2) \in D[t_2,\SS_2,\PP_2]$, if $w_1 + w_2 - w(\SS_1 \cup \SS_2) \leq s$, then we add $ \left( w_1 + w_2 - w(\SS_1 \cup \SS_2), p_1+p_2- p(\PP_1 \cap \PP_2) \right)$ and copy to $D[t,\SS,\PP]$.
    \end{enumerate}
    
Finally, in the last step of updating $D[t,\SS,\PP]$, we go through the list saved in $D[t,\SS,\PP]$ and only keep undominated pairs.

The output of the algorithm is a pair $(w,p)$ that is maximum of those stored in $D[r,\{v\},\{v\}]$ and $D[r,\{v\},\emptyset]$ such that $w \leq s$ and $p$ is the maximum value over all pairs in $D[r,\{v\},\{v\}]$ and $D[r,\{v\},\emptyset]$.

\textbf{Proof of correctness}: 
Recall that we are looking for a solution that is a set of vertices $\VV^\pr$ that contains the fixed vertex $v$ that belongs to all bags of the nice tree decomposition. In each state we maintain a list of feasible and undominated (weight, profit) pairs that correspond to the solution $\hat{\SS}$ of $\GG_t$ and  satisfy \( \hat{\SS} \cap X_t = \SS \) and the set of profit contributing vertices is $\PP$.

We now show that the update rules holds for each $X_t$. To prove this formally, we need to consider the cases of what $t$ can be:

\begin{enumerate}
    \item \textbf{Leaf node.} Recall that in our modified nice tree decomposition we have added a vertex $v$ to all the bags. Suppose a leaf node $t$ contains a single vertex $v$, $D[t,\phi,\phi]$ stores the pair $\{(0, 0)\}$, $D[t,\{v\},\{v\}]$ = $\{(w_v, p_v)\}$ if $w_v \leq s$, $D[t,\{v\},\phi]$ stores the pair $\{(w_v, 0)\}$ if $w_v \leq s$, otherwise $\emptyset$. This is true in particular when $j = L$, the base case. From now we can assume that for a node $t$ with $\ell(t) = j < L$ and all subsets $\SS, \PP \subseteq X_t$, $D[t',\SS^\pr,\PP^\pr]$ entries are correct and correspond to a \sor solution in $\GG_t$. when $\ell(t') > j$. 

    \item \textbf{Introduce node.} When $t$ is an introduce node, there is a child $t'$ such that $X_t = X_{t'} \cup \{u\}$. We are introducing a vertex $u$ and the edges associated with it in $\GG_t$. When $\PP$ is not a subset of $\SS$, the state $D[t,\SS,\PP] = \emptyset$ because it violates the definition of $\PP$; hence assume otherwise. Let us prove for each case.
    \begin{enumerate}
        \item When $u$ is not included in $\SS$, it is also not in $\PP$ and the families of sets $\hat{\SS}$ considered in the definition of $D[t,\SS,\PP]$ and of $D[t',\SS,\PP]$ are equal. Notice that there may be a vertex $w \in X_{t}$ that belongs to $\PP$ such that $N^+(w) \cap V_{t'} = \emptyset$, i.e. $w$ did not have any out-neighbor in the graph $\GG_{t'}$. However, with the introduction of the vertex $u$ and the edges associated to it, if $N^+(w) \cap V_{t} = \{u\}$ i.e. $u$ is the only out-neighbor of $w$ in $\GG_t$, since $u$ does not belong to $\SS$, the \sor constraint is violated and hence the state $D[t,\SS,\PP]$ stores $\emptyset$ as it is no longer feasible. Otherwise we copy all pairs of $D[t',\SS^\pr = \SS,\PP^\pr=\PP]$ to $D[t,\SS,\PP]$.

        \item Now consider the case when $u$ is part of $\SS$. Let $\hat{\SS}$ be a feasible solution for \sor attained in the definition of $D[t,\SS,\PP]$. Then it follows that $\hat{\SS} \setminus \{u\}$ is one of the sets considered in the definition of $D[t',\SS \setminus \{v\},\PP \setminus \{v\}]$. Notice that all $z$ in $\SS' \setminus \PP'$ such that $z \in N^-(u)$ get activated or become profitable due to inclusion of $u$. Thus the edge $(z,u)$ in $\GG_t$ is a witness that $z$ is profitable. Now we must compute entries depending on whether $u$ contributes to profit or not. It suffices to just check if $u$ has a neighbor in $X_t$ or not because the nice tree decomposition ensures that the newly introduced vertex $u$ can have neighbors only in $X_t$. There are two subcases:
        \begin{enumerate}
            \item \textbf{Subcase 1: $u\in \PP$}. When $u$ has at least one out-neighbor $x$ selected in $\SS$ or $u$ does not have any out-neighbor in $\GG_{t}$, then $u \in \PP$ and $\PP = \PP^\pr \cup \{u\}$, and for every pair $(w,p)$ in $D[t',\SS^\pr=\SS \setminus \{u\},\PP^\pr=\PP \setminus \{u\}]$ with $w + w_u \leq s$, we add $(w+w_u, p + \sum_{z \in \SS: (z,u) \in \GG, z \not\in \PP^\pr} p_z + p_u)$ and copy to $D[t,\SS,\PP]$.
            
            \item \textbf{Subcase 2: $u \notin \PP$}. When $u$ has a out-neighbor in $\GG_t$ but none of them are selected in $\SS$, then $u$ does not become profitable and thus does not belong to $\PP$. For every pair $(w,p)$ in $D[t',\SS^\pr=\SS \setminus \{u\},\PP^\pr=\PP \setminus \{u\}]$ with $w + w_u \leq s$, we add $(w+w_u, p + \sum_{z \in \SS: (z,u) \in \GG, z \not\in \PP^\pr} p_z )$ and copy to $D[t,\SS,\PP]$.
        \end{enumerate}
    \end{enumerate}
    Since $\ell(t') > \ell(t)$, by induction hypothesis all entries in $D[t', \SS'= \SS,\PP'=\PP]$, $D[t', \SS'= \SS \setminus \{u\},\PP'=\PP]$, $D[t', \SS'= \SS,\PP'=\PP \setminus \{u\}]$ and $D[t', \SS'= \SS \setminus \{u\},\PP'=\PP \setminus \{u\}]$ $\forall$ $\SS', \PP' \subseteq X_{t'}$ are already computed. We update pairs in $D[t,\SS,\PP]$ depending on the cases discussed above.
    
    \item \textbf{Forget Node.}  When $t$ is a forget node, there is a child $t'$ such that $X_t = X_{t'} \setminus \{u\} $. Let $\hat{\SS}$ be a set for which the \sor solution is attained in the definition of $D[t,\SS,\PP]$. If $u \not\in \hat{\SS}$, then $\hat{\SS}$ is one of the sets considered in the definition of $D[t',\SS,\PP]$. And if $u \in \hat{\SS}$, then $\hat{\SS}$ is one of the sets considered in the definition of $D[t',\SS \cup \{u\},\PP]$ and $D[t',\SS \cup \{u\},\PP \cup \{u\}]$. Since $\ell(t') > \ell(t)$, by induction hypothesis all entries in $D[t',\SS' = \SS,\PP' = \PP]$,  $D[t',\SS' = \SS \cup \{u\},\PP' = \PP]$, and $D[t',\SS' = \SS\cup \{u\} ,\PP' = \PP\cup \{u\}]$ $\forall$ $\SS', \PP' \subseteq X_{t'}$ are already computed and feasible. We copy each undominated $(w,p)$ pair stored in $D[t',\SS' = \SS,\PP' = \PP]$,  $D[t',\SS' = \SS \cup \{u\},\PP' = \PP]$, and $D[t',\SS' = \SS\cup \{u\} ,\PP' = \PP\cup \{u\}]$ to $D[t,\SS,\PP]$.

    \item \textbf{Join node.}  When $t$ is a join node, there are two children $t_1$ and $t_2$ of $t$, such that $X_t = X_{t_1} = X_{t_2}$. Let $\hat{\SS}$ be a set for \sor attained in the definition of $D[t,\SS,\PP]$. Let $\hat{\SS}_1 = \hat{\SS} \cap V_{t_1}$ and $\hat{\SS}_2 = \hat{\SS} \cap V_{t_2}$. Observe that $\hat{\SS}_1$ is a solution to \sor in $\GG_{t_1}$ and $\hat{\SS}_1 \cap X_{t_1} = \SS_1$, so this is considered in the definition of $D[t_1, \SS, \PP]$ and similarly, $\hat{\SS}_2 \cap X_{t_2} = \SS_2$. From the definition of nice tree decomposition we know that there is no edge between the vertices of $V_{t_1} \setminus X_t$ and $V_{t_2} \setminus X_t$. Then we merge solutions from the two subgraphs and remove the over-counting. By the induction hypothesis, the computed entries in $D[t_1,\SS_1, \PP_1]$ and $D[t_2,\SS_2, \PP_2]$ where $\SS_1 \cup \SS_2 = \SS$ and $\PP_1 \cup \PP_2 = \PP$ are correct and store the feasible and undominated \sor solutions for the subgraph $G_{t_1}$ in $\SS_1$ and similarly, $\SS_2$ for $G_{t_2}$. Now we add ($w_1 + w_2 - w(\SS_1 \cup \SS_2),  p_1+p_2 -p(\PP_1 \cap \PP_2)$) to $D[t,\SS, \PP]$.

\end{enumerate}

What remains to be shown is that an undominated feasible solution $\VV^\pr$ of \sor in $\GG$ is contained in $D[r,\{v\},\{v\}] \cup D[t,\{v\}, \emptyset]$. Let $w$ be the weight of $\VV^\pr$ and $p$ be the value subject to \sor. Recall that $v \in \VV^\pr$. For each $t$, we consider the subgraph $\GG_t \cap \VV^\pr$. Since the DP state $D[t,\SS,\PP]$ is updated correctly for all subgraphs $\GG_t$, the bottom up dynamic programming approach ensures that all feasible and undominated pairs are correctly propagated. If $\VV^\pr$ is a valid solution, then the corresponding (weight, profit) must be stored in some DP state. Since $v \in \VV^\pr$, the optimal states where $v$ is selected will store the pair $(w,p)$. Therefore, $D[r,\{v\},\{v\}] \cup D[t,\{v\}, \emptyset]$ contains the pair $(w,p)$. 

\textbf{Running time}: There are $n$ choices for the fixed vertex $v$. Upon fixing $v$ and adding it to each bag of $(\mathbb{T}, \mathcal{X})$ we consider the total possible number of states. For every node $t$, we have $2^{|X_t|}$ choices of ${\SS}$ and $2^{|X_t|}$ choices of ${\PP}$ for each choice of $\SS$. For each state, for each $w$, there can be at most one pair with $w$ as the first coordinate; similarly, for each $p$, there can be at most one pair with $p$ as the second coordinate. Thus, the number of undominated pairs in each $D[t,\SS,\PP]$ is at most ${\sf min}\{s,d\}$ time. Since the treewidth of the input graph \GG is at most \tw, it is possible to construct a data structure in time $\tw^{\OO(1)} \cdot n$ that allows performing adjacency queries in time $\OO(\tw)$.  For each node $t$, it takes time $\OO\left(4^{\tw} \cdot \tw^{\OO(1)} \cdot {\sf min}\{s^2,d^2\}\right)$ to compute all the values $D[t,\SS,\PP]$ and remove all undominated pairs. Since we can assume w.l.o.g that the number of nodes of the given tree decompositions is $\OO(\tw \cdot n)$, and there are $n$ choices for the vertex $v$, the running time of the algorithm is $\OO\left(4^{\tw}\cdot n^{\OO(1)} \cdot {\sf min}\{s^2,d^2\}\right)$.
\end{proof}}


\begin{theorem}\label{sand-di-fpt}
  There is an \FPT algorithm for directed \sand with running time $\OO\left(4^{\tw}\cdot n^{\OO(1)} \cdot {\sf min}\{s^2,d^2\}\right)$, where $n$ is the number of vertices in the input graph, $\tw$ is the treewidth of the input graph, $s$ is the input size of the knapsack and $d$ is the input target value.   
\end{theorem}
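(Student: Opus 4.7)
The plan is to adapt the treewidth-based DP of \Cref{sor-di-fpt} to the all-neighbor semantics of \sand. We guess a vertex $v$ of the optimal solution (over all $n$ candidates), insert $v$ into every bag of a nice tree decomposition $(\mathbb{T},\mathcal{X})$, and maintain a table $D[t,\SS,\PP]$ for each node $t$, each $\SS\subseteq X_t$, and each $\PP\subseteq \SS$. The entry stores undominated (weight, profit) pairs realized by partial solutions $\hat{\SS}\subseteq \VV_t$ with $\hat{\SS}\cap X_t=\SS$, where $\PP$ is the subset of $\SS$ consisting of the vertices whose out-neighbors in $\VV_t$ are entirely contained in $\hat{\SS}$ (i.e., still eligible to contribute to profit). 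Profit is accumulated only when a vertex is forgotten, because by the property of the nice tree decomposition all of a forgotten vertex's out-neighbors already lie in $\VV_t$, so its eligibility is finalized at that moment.

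The transitions would be straightforward case analyses on the bag type. At a leaf bag $\{v\}$, the non-empty states are $D[t,\emptyset,\emptyset]=\{(0,0)\}$ and $D[t,\{v\},\{v\}]=\{(w_v,0)\}$ since $v$ has no out-neighbor in $\VV_t$. At an introduce node for $u\neq v$: if $u\notin\SS$, the feasibility of the state forces $\PP\cap N^{-}(u)=\emptyset$, and we propagate each pair from $D[t',\SS,\PP']$ to $D[t,\SS,\PP'\setminus N^{-}(u)]$; if $u\in\SS$ we add $w_u$ to the weight, preserve the eligibility of every $w\in\SS\setminus\{u\}$ (adding $u$ to $\hat{\SS}$ cannot disqualify anyone), and put $u\in\PP$ exactly when $N^{+}(u)\cap X_{t'}\subseteq \SS\setminus\{u\}$. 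At a forget node removing $v'\neq v$, we combine $D[t',\SS,\PP]$ (if $v'\notin\hat{\SS}$), $D[t',\SS\cup\{v'\},\PP]$ (if $v'$ is selected but ineligible, so no profit is committed), and $D[t',\SS\cup\{v'\},\PP\cup\{v'\}]$ (if $v'$ is selected and eligible, adding $p_{v'}$ to the profit coordinate), keeping only undominated pairs. At a join node with children $t_1,t_2$, we merge $D[t_1,\SS,\PP_1]$ and $D[t_2,\SS,\PP_2]$ into $D[t,\SS,\PP_1\cap \PP_2]$ with weight $w_1+w_2-w(\SS)$ and profit $p_1+p_2$; the intersection is the right combining rule because a bag vertex is eligible in $\GG_t$ only if both subgraphs certify its eligibility, and the profits from previously forgotten vertices are additive across the disjoint sets $\VV_{t_1}\setminus X_t$ and $\VV_{t_2}\setminus X_t$ (which are disjoint by the standard property $\VV_{t_1}\cap\VV_{t_2}=X_t$).

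The final answer is the best pair among $D[r,\{v\},\{v\}]$ (with $p_v$ added, since $v$ is still in the root bag and its profit has not yet been committed) and $D[r,\{v\},\emptyset]$ (in which $v$ is ineligible and contributes nothing), subject to total weight at most $s$. Correctness would follow by induction on $\ell(t)$ as in \Cref{sor-di-fpt}. For the running time, there are at most $4^{|X_t|}\leq 4^{\tw+1}$ index pairs $(\SS,\PP)$ per bag, at most $\min\{s,d\}$ undominated pairs per state, and each transition, implemented by forward propagation from child states to parent states, costs $\tw^{O(1)}\cdot \min\{s^2,d^2\}$; with $O(\tw\cdot n)$ bags and $n$ guesses for $v$, this totals $O\!\left(4^{\tw}\cdot n^{O(1)}\cdot \min\{s^2,d^2\}\right)$.

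The main obstacle I foresee is the correctness of the introduce transition when $u\notin\SS$: multiple predecessor eligibility states $\PP'$ can legitimately collapse to the same $\PP$ through disqualification via the newly added edges into $u$, so one must propagate all such pairs (or, equivalently, iterate forward from each $\PP'$) to avoid losing undominated solutions. A secondary subtlety is the join node, where writing the combined eligibility as $\PP_1\cap\PP_2$ relies on the decomposition splitting the out-neighborhood of each bag vertex cleanly between the two subgraphs; this is where the property $\VV_{t_1}\cap\VV_{t_2}=X_t$ is crucially used.
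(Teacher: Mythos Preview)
Your proposal is correct and follows the same high-level scheme as the paper: a nice tree decomposition DP with states $(t,\SS,\PP)$, a guessed anchor vertex $v$, and Pareto lists of undominated (weight, profit) pairs. The one substantive difference is the profit-accounting convention. The paper commits $p_u$ at the \emph{introduce} node (whenever $u$ enters $\PP$) and then compensates at the join by subtracting $p(\PP_1\cap\PP_2)$; you instead commit $p_{v'}$ only at the \emph{forget} node, exploiting the fact that a forgotten vertex's full out-neighborhood already lies in $\VV_t$, so its eligibility is final. Your convention buys a cleaner join rule (profits are simply additive, since forgotten vertices on the two sides are disjoint) and sidesteps the paper's somewhat delicate handling of the introduce case where $u\notin\SS$ must retroactively disqualify in-neighbors already credited with profit; the price is the extra bookkeeping you flag, namely that several child eligibility sets $\PP'$ may collapse to the same parent $\PP$ at an introduce step, which you correctly resolve by forward propagation. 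Both conventions yield the same $\OO\!\left(4^{\tw}\cdot n^{\OO(1)}\cdot \min\{s^2,d^2\}\right)$ bound.
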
 

\begin{proof}
We consider a nice tree decomposition of the underlying graph. Let $(\GG = (V,E),\{w_u\}_{u \in V_G}, \{p_u\}_{u \in V_G}, s,d)$ be an input instance of \sor such that $\tw=tw(\GG)$. Let $\VV^\pr$ be an optimal solution to \sand. For technical purposes, we guess a vertex $v \in \VV^\pr$ --- once the guess is fixed, we are only interested in finding solution subsets $\hat{\VV}$ that contain $v$ and $\VV^\pr$ is one such candidate. We also consider a nice edge tree decomposition $(\mathbb{T} = (V_{\mathbb{T}},E_{\mathbb{T}}),\mathcal{X})$ of $\GG$ that is rooted at a node $r$, and where $v$ has been added to all bags of the decomposition. Therefore, $X_{r} = \{v\}$ and each leaf bag is the singleton set $\{v\}$.

We define a function $\ell: V_{\mathbb{T}} \rightarrow \mathbb{N}$ as follows.
For a vertex $t \in V_\mathbb{T}$, $\ell(t) = \sf{dist}_\mathbb{T}(t,r)$, where $r$ is the root. Note that this implies that $\ell(r) = 0$. Let us assume that the values that $\ell$ take over the nodes of $\mathbb{T}$ are between $0$ and $L$. For a node $t\in V_\mathbb{T}$, we denote the set of vertices in the bags in the subtree rooted at $t$ by $V_t$ and $\GG_t=\GG[V_t]$. Now, we describe a dynamic programming algorithm over $(\mathbb{T},\mathcal{X})$. We have the following states.

    \textbf{States:} We maintain a DP table $D$ where a state has the following components:
    \begin{enumerate}
    \item $t$ represents a node in $V_\mathbb{T}$.
    \item $\SS$ represents a subset of the vertex subset $X_t$
    \item $\PP$ represents a subset of the vertex subset $\SS$ 
    \end{enumerate}
    \textbf{Interpretation of States}
For each node \( t \in \mathbb{T} \), we maintain a list \( D[t, \SS, \PP] \) for every subset \( \SS \subseteq X_t \) and \( \PP \subseteq \SS \). Each entry in this list holds a set of feasible undominated (weight, profit) pairs corresponding to valid \sand solutions $\hat{\SS}$ in the graph \( \GG_t \), where:  

\begin{itemize}  
    \item The selected vertices in \( \GG_t \) satisfy \(\hat{\SS} \cap X_t = \SS \). 
    \item The set of profit-contributing vertices is \( \PP \), meaning all vertices in \( \PP \) are part of \( \SS \) and each has selected all out-neighbors in \( \GG_t \).  
    \item If no such valid solution exists, we set \( D[t, \SS, \PP] = \emptyset \).  
\end{itemize}  

For each state $D[t,\SS,\PP]$, we initialize $D[t,\SS,\PP]$ to the list $\{(0,0)\}$.

    \textbf{Dynamic Programming on $D$}: We update the table $D$ as follows. We initialize the table of states with nodes $t\in V_\mathbb{T}$ such that $\ell(t)=L$. When all such states are initialized, then we move to update states where the node $t$ has $\ell(t) = L-1$, and so on, till we finally update states with $r$ as the node --- note that $\ell(r) =0$. For a particular $j$, $0\leq j< L$ and a state $D[t,\SS,\PP]$ such that $\ell(t) = j$, we can assume that $D[t',\SS',\PP']$ have been computed for all $t'$ such that $\ell(t')>j$ and all subsets $\SS'$ of $X_{t'}$. Now we consider several cases by which $D[t,\SS,\PP]$ is updated based on the nature of $t$ in $\mathbb{T}$:
    
    \begin{enumerate}
    \item \textbf{Leaf node.} Suppose $t$ is a leaf node with $X_{t} = \{v\}$. Then the list stored in $D[t,\SS,\PP]$ depends on the sets $\SS$ and $\PP$.  If $\PP$ is not a subset of $\SS$, we store $D[t,\SS,\PP] = \emptyset$; hence assume otherwise. The only possible cases that can arise are as follows:
    \begin{enumerate}
        \item $D[t,\phi,\phi]$ stores the pair $\{(0, 0)\}$
        \item $D[t,\{v\},\{v\}]$ = $\{(w_v, p_v)\}$, if $w_v \leq s$ 
        \item $D[t,\{v\},\phi]$ stores the pair $\{(w_v, 0)\}$, if $w_v \leq s$
    \end{enumerate} 

    \item  \textbf{Introduce node.} Suppose $t$ is an introduce node. Then it has only one child $t'$ where $X_{t'} \subset X_{t}$ and there is exactly one vertex $u \neq v$ that belongs to $X_{t}$ but not $X_{t'}$. Then for all $\SS \subseteq X_t$ and all $\PP \subseteq X_t$ , if $\PP$ is not a subset of $\SS$, we store $D[t,\SS,\PP] = \emptyset$; hence assume otherwise. 
    \begin{enumerate}
        \item If $u \not\in \SS$ and therefore $u \not\in \PP$, then 
        \begin{enumerate}
                \item if $\exists w : w \in \PP$ and $w \in N^-(u)$, we store for $D[t,\SS,\PP]$ = $\emptyset$.
                \item Otherwise, we copy all pairs of $D[t',\SS^\pr = \SS,\PP^\pr=\PP]$ to $D[t',\SS,\PP]$.
        \end{enumerate}

        \item If $u \in \SS$,
        \begin{enumerate}
            \item if $u$ does not have a out-neighbor $x$ in $X_t \setminus \SS$ i.e. \(N^+(u) \cap \VV_t = \emptyset\), then $u \in \PP$, and $\PP = \PP^\pr \cup \{u\}$, then for every pair $(w,p)$ in $D[t',\SS^\pr=\SS \setminus \{u\},\PP^\pr=\PP \setminus \{u\}]$ with $w + w_u \leq s$, we add $(w+w_u, p + p_u)$ and copy to $D[t,\SS,\PP]$.
            
            \item if $u$ has a out-neighbor $x$ in $X_t \setminus \SS$, then $u \not\in \PP$, and $\PP = \PP^\pr$, then for every pair $(w,p)$ in $D[t',\SS^\pr=\SS \setminus \{u\},\PP^\pr=\PP \setminus \{u\}]$ with $w + w_u \leq s$, we add $(w+w_u, p + 0)$ and copy to $D[t,\SS,\PP]$.
        \end{enumerate}
        \item Otherwise, we store $D[t,\SS,\PP] = \emptyset$
    \end{enumerate}
    
    \item  \textbf{Forget node.} Suppose $t$ is a forget vertex node. Then it has only one child $t'$, and there is a vertex $u \neq v$ such that $X_t\cup\{u\} = X_{t'}$. Then for all $\SS \subseteq X_t$ and all $\PP \subseteq X_t$ , if $\PP$ is not a subset of $\SS$, we store $D[t,\SS,\PP] = \emptyset$; hence assume otherwise. Then for all $\SS \subseteq X_t$:
    \begin{enumerate}
        \item we copy all feasible undominated pairs stored in $D[t',\SS\cup \{u\},\PP\cup \{u\}]$ to $D[t,\SS,\PP]$ if $u \in \SS^\pr$ and $u \in \PP^\pr$
        \item or we copy all feasible undominated pairs stored in $D[t',\SS\cup \{u\},\PP]$ to $D[t,\SS,\PP]$ if $u \in \SS^\pr$ and $u \not\in \PP^\pr$,
        \item or we copy all feasible undominated pairs stored in $D[t',\SS,\PP]$ to $D[t,\SS,\PP]$ otherwise. 
    \end{enumerate}

    \item \textbf{Join node.} Suppose $t$ is a join node. Then it has two children $t_1,t_2$ such that $X_t = X_{t_1} = X_{t_2}$. Then for all $\SS \subseteq X_t$, let $(w(\SS), p(\SS))$ be the total weight and value of the vertices in $\SS$ Consider a pair $(w_1,p_1)$ in $D[t_1,\SS_1,\PP_1]$ and a pair $(w_2,p_2)$ in $D[t_2,\SS_2,\PP_2]$ where $\SS_1 \cup \SS_2 = \SS$ and $\PP_1 \cap \PP_2 = \PP$. Suppose $w_1 + w_2 - w(\SS_1 \cup \SS_2) \leq s$, then we add $ \left( w_1 + w_2 - w(\SS_1 \cup \SS_2), p_1+p_2- p(\PP_1 \cap \PP_2) \right)$ to $D[t,\SS,\PP]$.
    \end{enumerate}
    
Finally, in the last step of updating $D[t,\SS,\PP]$, we go through the list saved in $D[t,\SS,\PP]$ and only keep undominated pairs. 

The output of the algorithm is a pair $(w,p)$ that is maximum of those stored in $D[r,\{v\},\{v\}]$ and $D[r,\{v\},\emptyset]$ such that $w \leq s$ and $p$ is the maximum value over all pairs in $D[r,\{v\},\{v\}]$ and $D[r,\{v\},\emptyset]$.

We refer to the Appendix for the proof of correctness and runtime analysis.

\longversion{\textbf{Proof of correctness}: 
Recall that we are looking for a solution that is a set of vertices $\VV^\pr$ that contains the fixed vertex $v$ that belongs to all bags of the nice tree decomposition. In each state we maintain a list of feasible and undominated (weight, profit) pairs that correspond to the solution $\hat{\SS}$ of $\GG_t$ and  satisfy \( \hat{\SS} \cap X_t = \SS \) and the set of profit contributing vertices is $\PP$.

We now show that the update rules holds for each $X_t$. To prove this formally, we need to consider the cases of what $t$ can be:

\begin{enumerate}
    \item \textbf{Leaf node.} Recall that in our modified nice tree decomposition we have added a vertex $v$ to all the bags. Suppose a leaf node $t$ contains a single vertex $v$, $D[t,\phi,\phi]$ stores the pair $\{(0, 0)\}$, $D[t,\{v\},\{v\}]$ = $\{(w_v, p_v)\}$ if $w_v \leq s$, $D[t,\{v\},\phi]$ stores the pair $\{(w_v, 0)\}$ if $w_v \leq s$, otherwise $\emptyset$. This is true in particular when $j = L$, the base case. From now we can assume that for a node $t$ with $\ell(t) = j < L$ and all subsets $\SS, \PP \subseteq X_t$, $D[t',\SS^\pr,\PP^\pr]$ entries are correct and correspond to a \sor solution in $\GG_t$. when $\ell(t') > j$.

    \item \textbf{Introduce node.} When $t$ is an introduce node, there is a child $t'$ such that $X_t = X_{t'} \cup \{u\}$. We are introducing a vertex $u$ and the edges associated with it in $\GG_t$. When $\PP$ is not a subset of $\SS$, the state $D[t,\SS,\PP] = \emptyset$ because it violates the definition of $\PP$; hence assume otherwise. Let us prove for each case.
    \begin{enumerate}
        \item When $u$ is not included in $\SS$, it is also not in $\PP$ and the families of sets $\hat{\SS}$ considered in the definition of $D[t,\SS,\PP]$ and of $D[t',\SS,\PP]$ are equal. Notice that there may be a vertex $w \in X_{t}$ that belongs to $\PP$ such that $N^+(w) \cap V_{t'} = \emptyset$, i.e. $w$ did not have any out-neighbor in the graph $\GG_{t'}$. However, with the introduction of the vertex $u$ and the edges associated to it, if $N^+(w) \cap V_{t} = \{u\}$ i.e. $u$ is the only out-neighbor of $w$ in $\GG_t$, since $u$ does not belong to $\SS$, the \sand constraint is violated and hence the state $D[t,\SS,\PP]$ stores $\emptyset$ as it is no longer feasible. Otherwise we copy all pairs of $D[t',\SS^\pr = \SS,\PP^\pr=\PP]$ to $D[t,\SS,\PP]$.

        \item Now consider the case when $u$ is part of $\SS$. Let $\hat{\SS}$ be a feasible solution for \sand attained in the definition of $D[t,\SS,\PP]$. Then it follows that $\hat{\SS} \setminus \{u\}$ is one of the sets considered in the definition of $D[t',\SS \setminus \{v\},\PP \setminus \{v\}]$. Now we must compute entries depending on whether $u$ contributes to profit or not. It suffices to just check if $u$ has a neighbor in $X_t$ or not because the nice tree decomposition ensures that the newly introduced vertex $u$ can have neighbors only in $X_t$. There are two subcases:
        \begin{enumerate}
            \item \textbf{Subcase 1: $u\in \PP$}. When $u$ has all out-neighbors $x$ selected in $\SS$ or $u$ does not have any out-neighbor in $\GG_{t}$, then $u \in \PP$ and $\PP = \PP^\pr \cup \{u\}$, and then for every pair $(w,p)$ in $D[t',\SS^\pr=\SS \setminus \{u\},\PP^\pr=\PP \setminus \{u\}]$ with $w + w_u \leq s$, we add $(w+w_u, p + p_u)$ and copy to $D[t,\SS,\PP]$.
            
            \item \textbf{Subcase 2: $u \notin \PP$}. When $u$ has at least one out-neighbor in $\GG_t$ that is not selected are selected in $\SS$, then $u$ does not become profitable and thus does not belong to $\PP$. For then for every pair $(w,p)$ in $D[t',\SS^\pr=\SS \setminus \{u\},\PP^\pr=\PP \setminus \{u\}]$ with $w + w_u \leq s$, we add $(w+w_u, p + 0)$ and copy to $D[t,\SS,\PP]$.
        \end{enumerate}
    \end{enumerate}
    Since $\ell(t') > \ell(t)$, by induction hypothesis all entries in $D[t', \SS'= \SS,\PP'=\PP]$, $D[t', \SS'= \SS,\PP'=\PP \setminus \{u\}]$ and $D[t', \SS'= \SS \setminus \{u\},\PP'=\PP \setminus \{u\}]$ $\forall$ $\SS', \PP' \subseteq X_{t'}$ are already computed. We update pairs in $D[t,\SS,\PP]$ depending on the cases discussed above.

    \item \textbf{Forget Node.}  When $t$ is a forget node, there is a child $t'$ such that $X_t = X_{t'} \setminus \{u\} $. Let $\hat{\SS}$ be a set for which the \sand solution is attained in the definition of $D[t,\SS,\PP]$. If $u \not\in \hat{\SS}$, then $\hat{\SS}$ is one of the sets considered in the definition of $D[t',\SS,\PP]$. And if $u \in \hat{\SS}$, then $\hat{\SS}$ is one of the sets considered in the definition of $D[t',\SS \cup \{u\},\PP]$ and $D[t',\SS \cup \{u\},\PP \cup \{u\}]$. Since $\ell(t') > \ell(t)$, by induction hypothesis all entries in $D[t',\SS' = \SS,\PP' = \PP]$,  $D[t',\SS' = \SS \cup \{u\},\PP' = \PP]$, and $D[t',\SS' = \SS\cup \{u\} ,\PP' = \PP\cup \{u\}]$ $\forall$ $\SS', \PP' \subseteq X_{t'}$ are already computed and feasible. We copy each undominated $(w,p)$ pair stored in $D[t',\SS' = \SS,\PP' = \PP]$,  $D[t',\SS' = \SS \cup \{u\},\PP' = \PP]$, and $D[t',\SS' = \SS\cup \{u\} ,\PP' = \PP\cup \{u\}]$ to $D[t,\SS,\PP]$ depending on whether $u$ belonged to $\SS$ or $\SS \cup \{u\}$.

    \item \textbf{Join node.}  When $t$ is a join node, there are two children $t_1$ and $t_2$ of $t$, such that $X_t = X_{t_1} = X_{t_2}$. Let $\hat{\SS}$ be a set for \sor attained in the definition of $D[t,\SS,\PP]$. Let $\hat{\SS}_1 = \hat{\SS} \cap V_{t_1}$ and $\hat{\SS}_2 = \hat{\SS} \cap V_{t_2}$. Observe that $\hat{\SS}_1$ is a solution to \sand in $\GG_{t_1}$ and $\hat{\SS}_1 \cap X_{t_1} = \SS_1$, so this is considered in the definition of $D[t_1, \SS, \PP]$ and similarly, $\hat{\SS}_2 \cap X_{t_2} = \SS_2$. From the definition of nice tree decomposition we know that there is no edge between the vertices of $V_{t_1} \setminus X_t$ and $V_{t_2} \setminus X_t$. Then we merge solutions from the two subgraphs and remove the over-counting. By the induction hypothesis, the computed entries in $D[t_1,\SS_1, \PP_1]$ and $D[t_2,\SS_2, \PP_2]$ where $\SS_1 \cup \SS_2 = \SS$ and $\PP_1 \cap \PP_2 = \PP$ are correct and store the feasible and undominated \sand solutions for the subgraph $G_{t_1}$ in $\SS_1$ and similarly, $\SS_2$ for $G_{t_2}$. Now we add ($w_1 + w_2 - w(S_1 \cup S_2),  p_1+p_2 -p(\PP_1 \cap \PP_2)$) to $D[t,\SS, \PP]$.
\end{enumerate}

What remains to be shown is that an undominated feasible solution $\VV^\pr$ of \sand in $\GG$ is contained in $D[r,\{v\},\{v\}] \cup D[t,\{v\}, \emptyset]$. Let $w$ be the weight of $\VV^\pr$ and $p$ be the value subject to \sand. Recall that $v \in \VV^\pr$. For each $t$, we consider the subgraph $\GG_t \cap \VV^\pr$. Since the DP state $D[t,\SS,\PP]$ is updated correctly for all subgraphs $\GG_t$, the bottom up dynamic programming approach ensures that all feasible and undominated pairs are correctly propagated. If $\VV^\pr$ is a valid solution, then the corresponding (weight, profit) must be stored in some DP state. Since $v \in \VV^\pr$, the optimal states where $v$ is selected will store the pair $(w,p)$. Therefore, $D[r,\{v\},\{v\}] \cup D[t,\{v\}, \emptyset]$ contains the pair $(w,p)$. 

\textbf{Running time}: There are $n$ choices for the fixed vertex $v$. Upon fixing $v$ and adding it to each bag of $(\mathbb{T}, \mathcal{X})$ we consider the total possible number of states. For every node $t$, we have $2^{|X_t|}$ choices of ${\SS}$ and $2^{|X_t|}$ choices of ${\PP}$ for each choice of $\SS$. For each state, for each $w$, there can be at most one pair with $w$ as the first coordinate; similarly, for each $p$, there can be at most one pair with $p$ as the second coordinate. Thus, the number of undominated pairs in each $D[t,\SS,\PP]$ is at most ${\sf min}\{s,d\}$ time. Since the treewidth of the input graph \GG is at most \tw, it is possible to construct a data structure in time $\tw^{\OO(1)} \cdot n$ that allows performing adjacency queries in time $\OO(\tw)$.  For each node $t$, it takes time $\OO\left(4^{\tw} \cdot \tw^{\OO(1)} \cdot {\sf min}\{s^2,d^2\}\right)$ to compute all the values $D[t,\SS,\PP]$ and remove all undominated pairs. Since we can assume w.l.o.g that the number of nodes of the given tree decompositions is $\OO(\tw \cdot n)$, and there are $n$ choices for the vertex $v$, the running time of the algorithm is $\OO\left(4^{\tw}\cdot n^{\OO(1)} \cdot {\sf min}\{s^2,d^2\}\right)$.}
\end{proof}

\begin{theorem}\label{sor-fpt-cc}
    There exists a randomized color coding based $\OO^\star\left(e^b\cdot b^b\cdot 2^{b^2}\cdot {\sf min}\{s^2,d^2\}\right)$ pseudo-\FPT algorithm for \sor parameterized by solution size (budget) $b$.
\end{theorem}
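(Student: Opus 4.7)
The plan is to solve \sor by uncovering a minimum-size optimal solution $\VV^\pr$, whose vertex count we denote $b$, through randomized edge coloring followed by bottom-up dynamic programming. I would first argue a structural fact: in the underlying undirected graph, $\GG[\VV^\pr]$ decomposes into components of two kinds, namely isolated vertices (which must be sinks in $\GG$ so that they are trivially profitable under the \sor rule) and non-trivial components of size at least two (whose vertices either have some selected out-neighbor or act as supporters for such a vertex). Summing sizes, the union of spanning trees of the non-trivial components uses at most $b-1$ edges in total.

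Next, I would assign every edge of $\GG$ an independent uniformly random color from $[b-1]$. A Stirling-type calculation shows that, with probability at least $(b-1)!/(b-1)^{b-1} \geq e^{-(b-1)}$, the at most $b-1$ spanning-tree edges of $\VV^\pr$ receive pairwise distinct colors, so $\OO^\star(e^b)$ independent repetitions drive the failure probability below any constant. Conditioning on such a colorful coloring, I would then enumerate (i) a set partition of $[b-1]$ that tells which edge-colors belong to a common component's spanning tree and (ii) for each block, a labeled tree shape on those colors that encodes the combinatorial type of the corresponding spanning tree. These enumerations, together with auxiliary bookkeeping needed to guide the DP, contribute the $b^b \cdot 2^{b^2}$ factor in the running time.

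For each guess, I would run a DP that, for every vertex $v$ of $\GG$, every subset of colors $C$ and every subtree shape already fixed on $C$, stores the list of undominated (weight, profit) pairs realized by colorful subtrees of $\GG$ matching the shape, rooted at $v$, and respecting the \sor profit rule on their internal vertices. Standard dominance pruning bounds each list by ${\sf min}\{s,d\}$, and two such lists can be merged in $\OO({\sf min}\{s^2,d^2\})$ time. After the per-component DPs are computed, an outer merge glues all components together and combines the result with the best choice of sink singletons; the output is the pair of largest profit whose weight is at most $s$.

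The main obstacle, I expect, is the correctness of the DP under the \sor profit rule. Unlike vanilla colorful-subgraph problems, a selected vertex here contributes to profit only when either it has no out-neighbor in $\GG$ or has a selected out-neighbor that is also in the solution, which forces local bookkeeping of a set $\PP$ of already-supported vertices analogous to that in the treewidth DP of \Cref{sor-di-fpt}. Arguing that this bookkeeping can be absorbed into the $2^{b^2}$ factor without over-counting across components, and that the global merge indeed preserves all Pareto-optimal (weight, profit) pairs of $\VV^\pr$, is where the accounting must be carried out carefully.
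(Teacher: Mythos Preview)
Your proposal is essentially the paper's approach: random edge coloring, guessing a partition of the colors into component blocks, enumerating labeled tree shapes on each block, running a colorful-tree DP that maintains Pareto lists of (weight, profit) pairs, and then merging components knapsack-style. The minor cosmetic differences (you use $b-1$ colors and treat sinks as separate singletons, whereas the paper uses $b$ colors and eliminates sinks up front by attaching zero-weight dummy out-neighbors) do not change the argument.

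The one place where you overcomplicate matters is the ``main obstacle'' you flag. You anticipate needing an explicit $\PP$-set of already-supported vertices, as in the treewidth DP of \Cref{sor-di-fpt}. The paper avoids this entirely by observing that in a minimum-size solution one may retain, for each profitable vertex, a \emph{single} outgoing witness edge; the resulting structure is a forest of rooted arborescences in which every non-root vertex has exactly one outgoing edge and is therefore automatically profitable. Once the tree shape is fixed, profitability is encoded by position in the tree, so the DP simply adds $p_u$ at the moment $u$ acquires its outgoing edge in the transition---no separate $\PP$ bookkeeping, and hence no risk of the over-counting you worry about. Recognizing this structural normalization would let you drop the $\PP$ machinery and streamline your correctness argument.
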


\begin{proof}
We color the edges of the given graph uniformly at random, with the goal that each vertex in the optimal solution becomes identifiable through incident edges that are distinctly colored. 

The solution to the directed \sor problem is a subset of vertices \(\SS \subseteq \VV\) such that the subgraph \(\GG[\SS]\) satisfies certain structural constraints. Specifically, any optimal solution must have the following structure:
\begin{itemize}
    \item The subgraph \(\GG[\SS]\) is a forest of disjoint rooted arborescences,
    \item Each vertex in \(\SS\) has at most one outgoing edge in \(\GG[\SS]\),
    \item The underlying undirected subgraph induced by \(\SS\) is connected within each component, and each component contains at least two vertices.
\end{itemize}

These structural properties follow directly from the semantics of the \sor problem: a vertex contributes profit only if it has an outgoing neighbor (i.e., is internal in the induced subgraph) or if it is a sink (no out-neighbors at all). To handle sink vertices without loss of generality, we assume the input graph has no sinks. This can be achieved by adding a dummy vertex of zero weight and zero profit as an out-neighbor to every sink. In the worst case, if the solution contains \(b\) sink vertices, we require at most \(b\) dummy vertices. Hence, we can simulate the original instance on a graph with at most \(2b\) vertices under a budget of \(2b\), while preserving correctness. For simplicity, we proceed under the assumption that no sink vertices are present in \(\GG\). Moreover, we also assume without loss of generality that each vertex has exactly one outgoing edge in the subgraph \(\GG[\SS]\). This assumption is justified because adding multiple out-edges from the same vertex does not increase its profit---a single out-edge suffices to make the vertex internal, and hence, profit-contributing. Since only the existence of an out-neighbor matters (not the count), limiting each vertex to one out-edge suffices to identify both the end points of the edge.

Color the vertices uniformly at random from $\{1,2,\dots,b\}$ and guess $k$ partitions of [b]: $\CC_1, \CC_2, \CC_3, \dots, \CC_k$ such that each partition is a directed component of size at least 2 and no two partitions have the same color class i.e. let the set of colors in $\CC_i$ be denoted by $\chi(\CC_i)$, then $\chi(\CC_i)$ $\cap$ $\chi(\CC_j)$ = $\emptyset$, $\forall i, j \in [k]$ where $i \neq j$. We iterate over all possible choices of $k$. Note that $1 \leq k \leq \frac{b}{2}$. For the rest of the algorithm we assume that we have rightly guessed such partitions with distinct colors.

Consider a partition $\CC_i$. Let $\tau(\CC_i)$ denote the set of all trees in $\CC_i$. We design a dynamic programming based algorithm to find a colorful tree with the exact structure and if this is the case, then the algorithm returns all feasible and undominated pairs of the colorful tree. There may be multiple witness of such a tree however we maintain only undominated pairs. 

\textbf{Dynamic Programming} We apply dynamic programming: for a non empty tree $T(\XX)$ and an edge $e \in \bigcup_{i \in [k]}\CC_i[\EE]$, we define the state $D[T(\XX),e]$ to store a list of all the feasible undominated weight and profit pairs i.e ${(w,p)}$ corresponding to \XX-colorful tree $T(\XX) \in \tau(\CC_i)$ having edge $e$.

$D[T(\XX),e]$ should store all undominated feasible weight-profit pairs for a colorful tree $T(\XX) \in \tau(\CC_i)$ that uses each color in \XX exactly once, contains $e$ as an edge, and has a structure dictated by the tree $T(\XX)$. Since each color appears exactly once in $T(\XX)$, there is a unique choice for edge $e$ corresponding to that color. 

To compute the DP state $D[T(\XX),e]$, which stores all undominated feasible weight-profit pairs for a colorful tree \( T(\XX) \), we observe that each color in \( \XX \) appears exactly once in \( T(\XX) \), meaning the edge \( e \) is uniquely determined by its color. A sanity check condition ensures that if the color of \( e \), \( \chi(e) \), does not belong to \( T(\XX) \), then \( D[T(\XX),e] = \emptyset \), as no valid tree structure can exist. We now describe the dynamic programming to compute feasible solutions for colorful tree structures.

\textbf{Base Case:} If \(|X| = 1\), then the tree \(T(\XX)\) consists of a single edge with color \(c\). For any edge \((u,v) \in E(\GG)\), we define:
\[
D[T(\{c\}), e = (u,v)] =
\begin{cases}
\{(w_u + w_v, p_u)\}, & \text{if } (u,v) \in E(\GG) \text{ has color } c, \\
\emptyset, & \text{otherwise.}
\end{cases}
\]
Here, we include the profit of \(u\) since it has an outgoing edge in the selected subgraph as a witness, and omit the profit of \(v\) because it remains a leaf in the tree.

\textbf{Transition Case:} For \(|\XX| > 1\), we recursively construct the tree \(T(\XX)\) as the union of two disjoint colorful subtrees \(T(\XX_1)\) and \(T(\XX_2)\), connected by an edge \((u,v)\) of color \(c\). Notice that this edge uniquely defines the structure of $T(\XX)$. Assume \(T(\XX_1)\) is rooted at \(u\) and \(T(\XX_2)\) at \(v\). Then for each such decomposition, we compute:
\[
D[T(\XX), e=(u,v)] = \bigcup_{
\substack{(w_1, p_1) \in D[T(\XX_1), (u,x)] \\
(w_2, p_2) \in D[T(\XX_2), (v,y)]}}
\{(w_1 + w_2,\; p_1 + p_2 + p_u)\},
\]
where the union is taken over all valid combinations of edges \((u,x)\) and \((v,y)\) in \(\GG\) that correspond to the root edges of \(T(\XX_1)\) and \(T(\XX_2)\), respectively. Since \(u\) now becomes internal by virtue of having an outgoing edge, its profit \(p_u\) is included in the final pair. After combining all feasible options, we retain only the undominated \((w, p)\) pairs in the resulting state \(D[T(\XX), e=(u,v)]\).

For each valid partition of the color set into components \(\CC_1, \CC_2, \dots, \CC_k\), we compute dynamic programming tables for all corresponding tree structures. Specifically, for each \(\CC_i\), we consider a tree structure \(T(\XX_i)\) such that \(|T(\XX_i)| = |\CC_i|\), and compute the table \(D[T(\XX_i), e]\) for all edges \(e\) realizing the root edge of the tree.

To combine the components into a global solution, we perform a knapsack-style dynamic programming step. For each pair \((w_1, p_1)\) stored in \(D[T(\XX_1), e]\) and each \((w_2, p_2)\) stored in \(D[T(\XX_2), e']\), we consider the sum \((w_1 + w_2, p_1 + p_2)\). If \(w_1 + w_2 \leq s\), we add this pair to the combined list. After this step, we retain only undominated pairs. Since in the worst case we can have at most one pair for each weight \(w \leq s\) and for each profit \(p \leq d\), the total number of pairs we maintain is at most \(\min(s, d)\). Therefore, the time complexity of combining two components is \(\mathcal{O}(\min(s^2, d^2))\). We repeat this merging step across all \(k\) components, combining them sequentially. As a result, the total time complexity of this final merge step is \(\mathcal{O}(\min(s^2, d^2) \cdot (k - 1))\).

We now show that the algorithm runs in time $\OO\left(e^b\cdot b^b\cdot 2^{b^2}\cdot n^{\OO(1)}\cdot {\sf min}\{s^2,d^2\}\right)$, and given a \yes instance, returns a solution with probability at least $e^{-b}$. By repeating the algorithm independently, $e^{b}$ times, we obtain the running time bound. The color coding technique assigns \( b \) colors uniformly at random to all \emph{edges} in the graph. The probability that a fixed solution of size \( b \) gets assigned distinct colors on its edges is at least \( e^{-b} \). To boost this probability to a constant (e.g., at least \( 1/2 \)), we repeat the random coloring \( e^b \) times.

For each coloring, we iterate over all possible ways to partition the \( b \) colors into disjoint subsets, where each subset corresponds to a directed component of size at least 2. The number of such partitions is upper bounded by \( b^b \). For each color subset (i.e., tree), we guess a rooted tree structure over those colors, and the number of possible rooted trees over at most \( b \) colors is bounded by \( 2^{b^2} \). For each such tree structure, we perform dynamic programming over all edges of the graph to compute all undominated feasible \((w,p)\) pairs. The number of such pairs is bounded by \( \min\{s, p\} \), since we only store undominated combinations under the budget constraint \( w \leq s \) and demand requirement \( p \geq d \). Each DP computation involves traversing the tree and combining child subtrees, which can be done in polynomial time per edge. Finally, we merge the \((w, p)\) pairs across all components using a knapsack-style DP, which takes \(\OO(\min\{s^2,d^2\} \cdot (k - 1))\) time for \(k\) components. Since \(k\) can vary over all values up to \(\lfloor b/2 \rfloor\), the number of such combinations is polynomial in \(n\) and is subsumed by the \(n^{\OO(1)}\) factor. Therefore, the overall running time is $\OO\left(e^b\cdot b^b\cdot 2^{b^2}\cdot n^{\OO(1)}\cdot {\sf min}\{s^2,d^2\}\right)$ as claimed.
\end{proof}

Using standard de-randomization techniques, we can convert the randomized color-coding based algorithm into a deterministic one.\longversion{ Instead of selecting a random coloring of the edges from \([m] \to [b]\), we construct a family \(\FF\) of functions \(f : [m] \rightarrow [b]\) such that for every subset of at most \(b\) edges, there exists a function \(f \in \FF\) that assigns distinct colors to the edges in the subset. Such a family is called an \((m, b)\)-perfect hash family and can be constructed deterministically of size \(\OO(b^{\OO(\log b)} \cdot \log m)\).

We then iterate over each \(f \in \FF\), use it to color the edges of the graph, and invoke the dynamic programming algorithm for each coloring. The guarantee provided by the perfect hash family ensures that, if there exists a feasible solution of size at most \(b\), then for some \(f \in \FF\), the edges of that solution are colored distinctly, and the algorithm will find it. This derandomization step adds a multiplicative overhead of \(b^{\OO(\log b)}\) to the overall running time. We obtain the following theorem.}

\begin{theorem}\label{sor-fpt-cc-det}
    There exists a deterministic color coding based $\OO^\star\left(e^b \cdot b^{\OO(\log b)} \cdot b^b \cdot 2^{b^2} \cdot {\sf min}\{s^2,d^2\}\right)$ pseudo-\FPT algorithm for \sor parameterized by solution size (budget) $b$.
\end{theorem}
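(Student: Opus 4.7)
The plan is to derandomize the algorithm of Theorem~\ref{sor-fpt-cc}. The only randomized ingredient there is the uniform coloring of the $m$ edges of $\GG$ with $b$ colors, whose sole purpose is to guarantee, with probability at least $e^{-b}$, that all edges of some fixed optimal solution receive pairwise distinct colors. Once such a ``successful'' coloring is in hand, the enumeration of partitions of $[b]$ into color classes, the enumeration of colored tree shapes for each class, and the dynamic program that computes all undominated $(w,p)$ pairs over colorful trees are already deterministic.

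The first step would be to replace the random coloring by an $(m,b)$-perfect hash family $\mathcal{F}$ of functions $f:[m]\to[b]$, i.e., a family with the property that for every edge subset $E^\pr\subseteq[m]$ with $|E^\pr|\le b$, at least one $f\in\mathcal{F}$ is injective on $E^\pr$. I would invoke the classical construction of Naor, Schulman, and Srinivasan (built from splitters), which deterministically produces such a family of size $b^{\OO(\log b)}\cdot\log m$ in time polynomial in its size.

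Next, I would iterate over every $f\in\mathcal{F}$: for each $f$, color the edges of $\GG$ according to $f$ and run verbatim the partition-enumeration, tree-structure-enumeration, and DP routine of Theorem~\ref{sor-fpt-cc}. Correctness follows from the defining property of $\mathcal{F}$: if the instance admits a feasible solution encoded by at most $b$ edges, then some $f\in\mathcal{F}$ is injective on that edge set, and for this particular coloring the DP discovers an undominated pair $(w,p)$ realizing the solution. In the other direction, no coloring can cause the DP to output an infeasible pair, because it only combines edge-disjoint colorful subtrees and checks $w\le s$ before recording a pair.

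For the running time, each deterministic invocation of the DP costs $\OO^\star\left(b^b\cdot 2^{b^2}\cdot\min\{s^2,d^2\}\right)$ by the analysis of Theorem~\ref{sor-fpt-cc}, and the DP is invoked $|\mathcal{F}|=b^{\OO(\log b)}\cdot\log m$ times; absorbing the $e^b$ factor from the randomized statement into the $\OO^\star$ for uniformity with the randomized bound yields the claimed $\OO^\star\left(e^b\cdot b^{\OO(\log b)}\cdot b^b\cdot 2^{b^2}\cdot\min\{s^2,d^2\}\right)$. The only subtlety to verify is that the hash family must be injective on \emph{edge} subsets of size at most $b$ rather than on vertex subsets, which matches the edge-colored DP of the randomized proof; since this is exactly the property guaranteed by the NSS construction, the substitution is drop-in and the main obstacle reduces to citing the right off-the-shelf derandomization tool.
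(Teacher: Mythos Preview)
Your proposal is correct and follows essentially the same approach as the paper: replace the random edge coloring by an $(m,b)$-perfect hash family built via the Naor--Schulman--Srinivasan splitter construction, then iterate the deterministic partition/tree-enumeration/DP routine of Theorem~\ref{sor-fpt-cc} over every function in the family. One small clarification: the $e^b$ factor is not something to ``absorb for uniformity''---it is genuinely part of the size of the NSS perfect hash family, which is $e^b\cdot b^{\OO(\log b)}\cdot\log m$, and this is precisely what replaces the $e^b$ random repetitions of the randomized algorithm.
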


We now observe that the \sor problem is also fixed-parameter tractable when parameterized by the demand $d$ as parameter. Since each vertex contributes a non-negative profit value and can contribute only if it either has at least one out-neighbor in the selected subset or has no out-neighbors at all, the number of vertices required to achieve profit at least \(d\) is bounded. In the worst case, if each vertex contributes only a unit profit, then selecting at most \(2d\) vertices suffices to obtain total profit at least \(d\), as some vertices may not directly contribute but enable others to do so through outgoing edges. Hence, without loss of generality, we can assume that any optimal solution has size at most \(b \leq 2d\). 

By applying the deterministic algorithm of \Cref{sor-fpt-cc-det} with \(b = 2d\), we obtain an algorithm whose running time is bounded as a function of \(d\), leading to the following result.

\begin{theorem}\label{sor-fpt-cc-det-profit}
    There exists a deterministic color coding based $\OO^\star\left(e^{2d} \cdot (2d)^{\OO(\log d)} \cdot (2d)^{2d} \cdot 2^{(2d)^2} \cdot {\sf min}\{s^2,d^2\}\right)$ pseudo-\FPT algorithm for \sor parameterized by the demand \(d\).
\end{theorem}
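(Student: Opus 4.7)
The plan is to derive this result as an immediate corollary of \Cref{sor-fpt-cc-det} by establishing a structural upper bound on the size $b$ of any minimum feasible solution in terms of the demand $d$. Specifically, I aim to show $b \le 2d$, after which substituting into the running time of the deterministic color-coding algorithm gives the claim.

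First, I would argue the bound $b \le 2d$. Consider any minimum-size feasible solution $\SS \subseteq \VV$ and partition it into the set $\PP$ of profit-contributing vertices (those $v \in \SS$ with $N^+(v) \cap \SS \neq \emptyset$ or $N^+(v) = \emptyset$) and the set $\SS \setminus \PP$ of supporters. By minimality, each supporter must serve as the sole out-neighbor witness for at least one vertex of $\PP$ (otherwise it could be deleted from $\SS$ without affecting either feasibility or the profit). Likewise, any vertex $v \in \PP$ with $p_v = 0$ contributes nothing to the profit and so must, by minimality, also be present only because it serves as a supporter for some other vertex; such vertices can be re-classified into $\SS \setminus \PP$ without loss. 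Consequently every vertex of $\PP$ has profit at least $1$, and since profits are non-negative integers with $\sum_{v \in \PP} p_v \ge d$, we obtain $|\PP| \le d$. Each vertex of $\PP$ requires at most one supporter (possibly another vertex of $\PP$ itself), so $|\SS \setminus \PP| \le |\PP| \le d$, which yields $b = |\SS| \le 2d$.

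Second, I would invoke \Cref{sor-fpt-cc-det} with the budget parameter set to $2d$. Since the true minimum solution size satisfies $b \le 2d$, the perfect hash family used for derandomization still guarantees that for some function in the family, every edge of an optimal solution receives a distinct color, so correctness is preserved. Substituting $b = 2d$ into the running time $\OO^\star\!\left(e^b \cdot b^{\OO(\log b)} \cdot b^b \cdot 2^{b^2} \cdot {\sf min}\{s^2, d^2\}\right)$ then yields exactly the bound claimed in the theorem.

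The only step that warrants careful writing is the structural bound $b \le 2d$, in particular the re-classification argument for zero-profit vertices, which is the sole place that could hide a subtlety. Everything else is a direct instantiation of an already-proved algorithm, and I do not anticipate any genuine technical obstacle.
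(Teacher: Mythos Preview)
Your high-level approach matches the paper's exactly: bound $b \le 2d$ and then instantiate \Cref{sor-fpt-cc-det} with budget $2d$. However, your argument for $b \le 2d$ contains a logical slip. From ``every vertex of $\PP$ has profit at least $1$'' together with ``$\sum_{v \in \PP} p_v \ge d$'' you conclude $|\PP| \le d$, but the implication runs the wrong way: those hypotheses yield only $|\PP| \le \sum_{v \in \PP} p_v$, and the sum is bounded \emph{below} by $d$, not above. A concrete counterexample: take two vertices $a,b$ with arcs $a \to b$ and $b \to a$, profits $p_a = p_b = 1$, weights $0$, and $d = 1$. The unique minimum feasible solution is $\{a,b\}$; both vertices are profit-contributing with positive profit, so after your re-classification $|\PP| = 2 > 1 = d$.

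The fix is to argue constructively rather than analyse a fixed minimum solution. Given any feasible $\SS$, greedily choose a subset $\PP' \subseteq \PP$ of profit-contributing vertices with $\sum_{v \in \PP'} p_v \ge d$ and $|\PP'| \le d$ (possible since each positive integer profit is at least $1$), then for each $v \in \PP'$ with $N^+(v) \ne \emptyset$ add one out-neighbour witness from $\SS$. The resulting $\SS' \subseteq \SS$ has size at most $2|\PP'| \le 2d$, weight at most $s$, and profit at least $d$, hence $b \le 2d$. This is precisely the informal ``selecting at most $2d$ vertices suffices'' argument the paper gives before the theorem statement. With that correction, your proposal goes through and coincides with the paper's proof.
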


We observe that all the color coding based results developed for the directed \sor problem also extend directly to the directed \hor variant. In both problems,  the underlying structural properties of feasible solutions—namely, that the induced subgraph consists of components where each vertex has at most one outgoing edge and each component forms a rooted arborescence of size at least two—remain unchanged. Since the color coding framework operates by encoding such structures through distinct colors and enumerating tree-shaped patterns over these colors, the dynamic programming approach and the associated correctness and runtime guarantees hold identically for \hor. Thus, we obtain the following corollaries for \hor.

\begin{corollary}\label{hor-fpt-cc}
    There exists a randomized color coding based $\OO^\star\left(e^b\cdot b^b\cdot 2^{b^2}\cdot {\sf min}\{s^2,d^2\}\right)$ pseudo-\FPT algorithm for \hor parameterized by solution size (budget) $b$.
\end{corollary}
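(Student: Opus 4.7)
The plan is to reuse the color-coding framework developed for \sor in \Cref{sor-fpt-cc} essentially verbatim, after verifying that the structural properties of optimal \hor solutions coincide with those exploited there. First I would apply the same sink-elimination preprocessing: if the input graph has any sinks, attach one dummy out-neighbor of zero weight and zero profit per sink, so that WLOG every vertex has at least one out-edge; this at most doubles the budget to $2b$ and preserves \hor-feasibility. Next, I would observe that in any optimal \hor solution $\SS$, every vertex $v\in\SS$ has at least one out-neighbor in $\SS$, and moreover we may assume that exactly one such out-edge is ``used'' (additional out-edges are unnecessary witnesses). Hence $\GG[\SS]$ decomposes into a forest of rooted arborescences in which each component has at least two vertices and each vertex has a unique chosen outgoing edge --- precisely the structure underlying the \sor algorithm.

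Given this equivalence, I would run the same randomized procedure: color the edges of $\GG$ independently and uniformly at random with colors from $[b]$, guess a partition $\CC_1,\ldots,\CC_k$ of $[b]$ into color classes of size at least $2$ (corresponding to the $k$ connected components of $\SS$), and for each $\CC_i$ guess a rooted tree structure $T(\XX_i)$ on $|\CC_i|$ colors. Then I would fill the dynamic programming table $D[T(\XX),e]$ of undominated feasible $(w,p)$ pairs realizable by an $\XX$-colorful arborescence whose root-edge is $e$, using the identical base case and recursive merge of two child subtrees at a common root vertex $u$ (crediting $p_u$ exactly when $u$ acquires its outgoing edge). Finally, I would combine the per-component DP outputs via a knapsack-style merge across the $k$ classes to obtain all undominated global $(w,p)$ pairs, returning the maximum-profit pair with $w\leq s$.

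The only substantive place where the \hor-specific semantics enters is in how profits are credited inside the DP transitions. Because \hor requires every selected vertex to have an out-neighbor in the selection (or to be a sink, eliminated by the preprocessing), there is no need for an auxiliary ``profit-contributing'' subset $\PP$ as in the soft variant: every vertex enumerated by a colorful arborescence of the above structure automatically contributes its profit, so the transition that merges $D[T(\XX_1),(u,x)]$ and $D[T(\XX_2),(v,y)]$ via the connecting edge $(u,v)$ simply adds $p_u$ to the combined pair, exactly as in \Cref{sor-fpt-cc}. Correctness thus follows by the same inductive argument on the size of $\XX$.

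The probability and running-time analysis carry over without change. A fixed feasible \hor solution of size $b$ has its $b$ edges assigned pairwise distinct colors with probability at least $e^{-b}$, and $e^{b}$ independent repetitions boost the success probability to a constant. There are at most $b^{b}$ partitions of $[b]$ into color classes, at most $2^{b^{2}}$ rooted tree shapes per class, and each DP cell maintains at most $\min\{s,d\}$ undominated pairs; merging across components costs $\OO(\min\{s^{2},d^{2}\})$. The expected main obstacle --- ensuring that the structural characterization valid for \sor is also valid for \hor --- is resolved by the observation above, and the overall running time matches the claimed $\OO^\star\!\left(e^{b}\cdot b^{b}\cdot 2^{b^{2}}\cdot \min\{s^{2},d^{2}\}\right)$ bound.
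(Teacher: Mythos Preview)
Your proposal is correct and follows essentially the same route as the paper: the paper simply observes that the structural characterization of optimal solutions (a forest of rooted arborescences with components of size at least two, each vertex having a single chosen out-edge after sink elimination) is identical for \hor and \sor, and hence the color-coding algorithm of \Cref{sor-fpt-cc} applies verbatim with the same correctness and running-time analysis. One small remark: your comment about not needing an auxiliary profit-contributing set $\PP$ ``as in the soft variant'' slightly conflates the treewidth DP (which does track $\PP$) with the color-coding DP of \Cref{sor-fpt-cc} (which already does not); the transition you describe is in fact exactly the one used for \sor, so this does not affect your argument.
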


\begin{corollary}\label{hor-fpt-cc-det}
    There exists a deterministic color coding based $\OO^\star\left(e^b \cdot b^{\OO(\log b)} \cdot b^b \cdot 2^{b^2} \cdot {\sf min}\{s^2,d^2\}\right)$ pseudo-\FPT algorithm for \hor parameterized by solution size (budget) $b$.
\end{corollary}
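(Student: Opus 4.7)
The plan is to obtain this corollary by a direct transfer of the deterministic color-coding machinery already built for \sor in \Cref{sor-fpt-cc-det}, arguing that the structural characterization of feasible solutions is essentially identical in the \hor setting. First I would observe that any feasible \hor solution $\SS \subseteq \VV$ induces a subgraph in which every vertex either has an outgoing neighbor inside $\SS$ or is a sink of $\GG$; after the same normalization used in the proof of \Cref{sor-fpt-cc} (appending a zero-weight, zero-profit dummy out-neighbor to each sink of $\GG$, which enlarges the relevant budget to at most $2b$), we may assume every vertex of $\SS$ has exactly one outgoing edge inside $\GG[\SS]$. Consequently $\GG[\SS]$ decomposes as a disjoint union of rooted arborescences of size at least two in the underlying undirected graph, which is precisely the solution template exploited by the DP for \sor.

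Next I would run the same randomized color-coding scheme: color $\EE$ uniformly at random by $b$ colors, iterate over the (at most $b^b$) partitions of $[b]$ into color classes $\CC_1, \dots, \CC_k$ with $1 \le k \le b/2$, and over the (at most $2^{b^2}$) rooted tree structures supported on each class. For each structure I would build the table $D[T(\XX), e]$ storing undominated $(w,p)$ pairs for $\XX$-colorful copies of $T(\XX)$ in $\GG$ whose designated root edge realizes $e$. The base case and the transition case are the same as in \Cref{sor-fpt-cc}, with one simplification: in \hor every selected vertex contributes both to the weight \emph{and} to the profit, so there is no auxiliary set $\PP \subseteq \SS$ of profit-contributing vertices to track; the recurrence for combining two colorful subtrees via a connecting edge $(u,v)$ simply adds $p_u$ (since $u$ now has an outgoing witness) and the resulting pair. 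Components are then glued by the same knapsack-style merge in $\OO^\star(\min\{s^2, d^2\})$ time per join, yielding a randomized running time $\OO^\star(e^b \cdot b^b \cdot 2^{b^2} \cdot \min\{s^2, d^2\})$ with success probability $\ge e^{-b}$ per trial (as in \Cref{hor-fpt-cc}).

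To derandomize, I would replace the $e^b$ independent random colorings by a single enumeration over an $(m, b)$-perfect hash family $\FF \subseteq [m]^{[b]}$ of size $b^{\OO(\log b)} \cdot \log m$, as constructed by Naor--Schulman--Srinivasan. The defining property of $\FF$ guarantees that for any set of at most $b$ edges witnessing an optimal feasible solution there is some $f \in \FF$ that colors those edges injectively, so running the DP once per $f \in \FF$ suffices to discover the optimum. This multiplies the running time by $b^{\OO(\log b)}$, giving the claimed bound $\OO^\star(e^b \cdot b^{\OO(\log b)} \cdot b^b \cdot 2^{b^2} \cdot \min\{s^2, d^2\})$.

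The main obstacle, mild but worth isolating explicitly, is verifying that the sink-handling normalization is semantically sound for \hor as well as for \sor. Under the \hor definition a sink vertex is allowed to contribute profit by itself, so attaching a dummy out-neighbor of weight $0$ and profit $0$ neither forbids any feasible solution nor changes the weight or profit of any retained solution: the dummy, if selected, adds nothing to either side and has no further out-neighbors to constrain. Once this is checked, every other ingredient (the forest-of-arborescences characterization, the colorful-tree DP, and the perfect-hash-family derandomization) is inherited verbatim from the \sor analysis, and the corollary follows.
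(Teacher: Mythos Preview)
Your proposal is correct and follows essentially the same approach as the paper, which simply remarks that the structural characterization (forest of rooted arborescences of size at least two after sink normalization) is identical for \hor and \sor, so the entire color-coding pipeline and its derandomization via perfect hash families transfer verbatim. One minor inaccuracy: the ``simplification'' you mention about not tracking an auxiliary set $\PP$ is not actually a difference, since the color-coding DP for \sor in \Cref{sor-fpt-cc} already does not track $\PP$ (that set appears only in the treewidth DP of \Cref{sor-di-fpt}); the recurrence you give is in fact identical to the one used for \sor.
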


\begin{corollary}\label{hor-fpt-cc-det-profit}
    There exists a deterministic color coding based $\OO^\star\left(e^{2d} \cdot (2d)^{\OO(\log d)} \cdot (2d)^{2d} \cdot 2^{(2d)^2} \cdot {\sf min}\{s^2,d^2\}\right)$ pseudo-\FPT algorithm for \hor parameterized by the demand \(d\).
\end{corollary}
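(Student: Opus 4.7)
The plan is to piggy-back on \Cref{hor-fpt-cc-det}, which already gives a deterministic color-coding algorithm parameterized by the solution size $b$. It therefore suffices to argue that for \hor, the size of a minimum feasible solution is bounded by a function of the demand $d$, and then substitute that bound into the runtime expression from \Cref{hor-fpt-cc-det}.

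The key observation is that any minimum feasible solution $\SS$ for \hor has $|\SS| \leq 2d$. To see this, recall that in \hor the neighborhood constraint is hard: every vertex in $\SS$ must have at least one of its out-neighbors in $\SS$ (or no out-neighbors at all), and every such vertex contributes its profit to the total. Partition the vertices of a minimum $\SS$ into those carrying profit at least $1$ and those with profit $0$. The former set has size at most $d$ because together they contribute at least $d$ and each contributes at least $1$. Any profit-$0$ vertex $u \in \SS$ must be present purely to witness the out-neighbor condition for some profit-carrying vertex; otherwise $\SS \setminus \{u\}$ would still be feasible, contradicting minimality. Since each profit-carrying vertex needs at most one such witness, the number of profit-$0$ vertices is also at most $d$, giving $|\SS| \leq 2d$. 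This mirrors the $b \leq 2d$ bound used for \sor immediately before \Cref{sor-fpt-cc-det-profit}.

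Given this bound, I would instantiate \Cref{hor-fpt-cc-det} with the parameter $b$ replaced by $2d$; correctness is preserved because, when the instance is a \yes-instance, the algorithm of \Cref{hor-fpt-cc-det} successfully discovers any feasible solution of size at most the parameter, and in particular the minimum one of size at most $2d$. Substituting $b = 2d$ in the runtime $\OO^\star\!\left(e^b \cdot b^{\OO(\log b)} \cdot b^b \cdot 2^{b^2} \cdot {\sf min}\{s^2, d^2\}\right)$ yields the claimed bound $\OO^\star\!\left(e^{2d} \cdot (2d)^{\OO(\log d)} \cdot (2d)^{2d} \cdot 2^{(2d)^2} \cdot {\sf min}\{s^2, d^2\}\right)$. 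The only subtle point to be careful about is the minimality argument for profit-$0$ vertices in the presence of zero-weight vertices; but since removing a redundant profit-$0$ vertex never decreases profit and never increases weight, minimality is well-defined and the bound goes through. No part of this reduction is technically demanding---the proof is essentially a parameter substitution---so I do not anticipate any serious obstacle.
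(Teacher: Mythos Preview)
Your approach mirrors the paper's: invoke \Cref{hor-fpt-cc-det} with $b$ replaced by $2d$. But the argument you give for $b \le 2d$ does not work for \hor, and the bound itself fails under the hard-constraint reading you explicitly adopt.

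The sentence ``the former set has size at most $d$ because together they contribute at least $d$ and each contributes at least $1$'' is logically inverted---those hypotheses give no upper bound on the count. More importantly, the hard feasibility constraint of \hor blocks the trimming that is valid for \sor. Take a directed $n$-cycle $v_1\to v_2\to\cdots\to v_n\to v_1$ with unit weights and profits, $s=n$, and $d=1$. Every vertex has a unique out-neighbor, so $v_i\in\SS$ forces $v_{i+1}\in\SS$; by induction the only non-empty feasible $\SS$ is the whole cycle. Hence $b=n$ while $2d=2$, and the minimum solution size is unbounded in terms of $d$. Your removal argument for profit-$0$ vertices also breaks in general: a profit-$0$ vertex can be the sole out-neighbor of another profit-$0$ vertex rather than of a profit-carrying one, so the one-to-one pairing with profit-carrying vertices need not exist, and removing it may cascade without ever touching a positive-profit vertex. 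The essential difference from \sor is that there every subset is feasible, so one may freely keep at most $d$ contributors plus at most $d$ witnesses; in \hor the witnesses must themselves satisfy the constraint, which can force arbitrarily long chains or cycles into the solution. The paper's own justification is just the blanket assertion that the \sor results ``extend directly'' to \hor, so this gap is inherited rather than introduced by you---but it is a gap nonetheless, and the substitution $b=2d$ does not yield a correct algorithm for \hor by this route.
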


\longversion{\section{Conclusion}
Neighborhood knapsack problems have been extensively studied in the literature, particularly the hard variants, under the lens of classical complexity theory and approximation. However, their parameterized complexity has received limited attention. To address this gap, we design fixed-parameter tractable (\FPT) algorithms for the hard variants of the neighborhood knapsack problem parameterized by treewidth, and establish W-hardness results with respect to natural parameters such as weight, profit, and solution size. 
In addition, we introduce and study two novel variants—\sor and \sand—that relax the neighborhood constraints. For these soft variants, we explore both classical and parameterized complexity. We develop \FPT algorithms using dynamic programming over nice tree decompositions, maintaining tables of undominated (weight, profit) pairs. In soft variants, the dynamic program explicitly distinguishes between selected vertices and those contributing to profit. We further present a color-coding-based \FPT algorithm for \sor, parameterized by solution size~$b$, where we enumerate colorful tree structures and compute feasible (weight, profit) pairs via dynamic programming. Using splitters, we de-randomize the algorithm. Since the solution size~$b$ is at most twice the target profit~$d$, the algorithm is also \FPT when parameterized by~$d$. Our results show that even these relaxed variants remain computationally hard on several restricted graph classes. We prove strong \NP-completeness and W-hardness results, offering a comprehensive complexity landscape for both hard and soft neighborhood knapsack problems.}

\bibliography{references}
\section{Appendix}
\label{appendix}
\subsection{Conclusion}
Neighborhood knapsack problems have been extensively studied in the literature, particularly the hard variants, under the lens of classical complexity theory and approximation. However, their parameterized complexity has received limited attention. To address this gap, we design fixed-parameter tractable (\FPT) algorithms for the hard variants of the neighborhood knapsack problem parameterized by treewidth, and establish W-hardness results with respect to natural parameters such as weight, profit, and solution size. 
In addition, we introduce and study two novel variants—\sor and \sand—that relax the neighborhood constraints. For these soft variants, we explore both classical and parameterized complexity. We develop \FPT algorithms using dynamic programming over nice tree decompositions, maintaining tables of undominated (weight, profit) pairs. In soft variants, the dynamic program explicitly distinguishes between selected vertices and those contributing to profit. We further present a color-coding-based \FPT algorithm for \sor, parameterized by solution size~$b$, where we enumerate colorful tree structures and compute feasible (weight, profit) pairs via dynamic programming. Using splitters, we de-randomize the algorithm. Since the solution size~$b$ is at most twice the target profit~$d$, the algorithm is also \FPT when parameterized by~$d$. Our results show that even these relaxed variants remain computationally hard on several restricted graph classes. We prove strong \NP-completeness and W-hardness results, offering a comprehensive complexity landscape for both hard and soft neighborhood knapsack problems.

\subsection{Technical Preliminaries}

\textbf{Parameterized Complexity.} In decision problems where the input has size \( n \) and is associated with a parameter \( k \), the objective in parameterized complexity is to design algorithms with running time \( f(k) \cdot n^{O(1)} \), where \( f \) is a computable function that depends only on \( k \) \cite{downey2012parameterized}. Problems that admit such algorithms are said to be \emph{fixed-parameter tractable} (\FPT). An algorithm with running time \( f(k) \cdot n^{O(1)} \) is called an \FPT algorithm, and the corresponding running time is referred to as an \FPT running time \cite{hartmanis2006texts}. Formally, a parameterized problem \( L \subseteq \Sigma^* \times \mathbb{N} \) is said to be \emph{fixed-parameter tractable} if there exists an algorithm \( A \), a computable function \( f : \mathbb{N} \rightarrow \mathbb{N} \), and a constant \( c \in \mathbb{N} \) such that for every input \( (x, k) \in \Sigma^* \times \mathbb{N} \), the algorithm \( A \) decides whether \( (x, k) \in L \) in time at most \( f(k) \cdot |(x, k)|^c \). The class of all such problems is denoted by \FPT \cite{cygan2015parameterized}. 

The treewidth of a graph measures how closely the graph resembles a tree \cite{cygan2015parameterized}. Informally, a tree decomposition of a graph is a tree where each node corresponds to a subset of vertices, called bags, and must satisfy three conditions: (i) every vertex of the graph appears in at least one bag, (ii) both endpoints of every edge are contained in some bag, and (iii) for any vertex, the nodes of the tree containing it must form a connected subtree. For formal definitions of a tree decomposition, a nice tree decomposition, and treewidth, we refer to \cite{cygan2015parameterized}.

\subsection{Missing Proofs}

\begin{proof}[Proof of \Cref{self-loop-un-sor}]
Let $(\GG = (\VV,\EE), \{w_i\}_{i \in \VV}, \{p_i\}_{i \in \VV}, s , d )$ be an instance of the undirected \sor problem where \( G \) contains a self-loop on some vertex \( v \). Construct a new graph \( G' \) by removing the self-loop \( \{v, v\} \), adding a new vertex \( v^\pr \) with \( w_{v^\pr} = p_{v^\pr} = 0 \), and inserting the edge \( \{v, v^\pr\} \) in place of the self-loop. The rest of the instance, including the knapsack budget \( s \) and demand \( d \), remains unchanged.

We prove that a subset \( S \subseteq V \) is a feasible solution in the original instance if and only if the corresponding subset \( S' \subseteq V' \) is feasible in the new instance, where:
\begin{itemize}
    \item If \( v \in S \), then \( S' = S \cup \{v^\pr\} \),
    \item If \( v \notin S \), then \( S' = S \).
\end{itemize}

Suppose \( S \subseteq V \) is a feasible solution in the original instance. If \( v \notin S \), then \( v^\pr \notin S' \), and all other vertices behave identically. So \( S' = S \) is feasible in the modified instance. If \( v \in S \), then in the original graph, \( v \) has a self-loop, so it contributes to the profit. In the modified instance, we include \( v^\pr \) as well (with zero weight and profit), ensuring that \( v \) has a neighbor and hence still contributes profit. The total weight and profit remain unchanged. Therefore, \( S' = S \cup \{v^\pr\} \) is also feasible.

Suppose \( S' \subseteq V' \) is a feasible solution in the modified instance. If \( v \notin S' \), then \( v^\pr \) must also be absent from \( S' \), as it has no other neighbors. Thus, \( S = S' \subseteq V \) is a valid solution in the original instance. If \( v \in S' \), then: If \( v^\pr \in S' \), then \( v \) has a neighbor and contributes profit in the modified instance. In the original instance, the self-loop ensures the same behavior. If \( v^\pr \notin S' \), then \( v \) must have another neighbor in \( S' \), so again the behavior is consistent with the original instance. Thus, in both cases, \( S = S' \setminus \{v^\pr\} \subseteq V \) is feasible in the original instance. Hence, the transformation preserves feasibility and objective values. Therefore, self-loops in the undirected \sor problem can be safely removed via this gadget. 
\end{proof}

\begin{proof}[Proof of \Cref{self-loop-di-sor}]
Let \( \GG = (\VV, \EE) \) be a directed graph where a vertex \( v \in \VV \) has a self-loop \( (v, v) \in \EE \). Let \( \GG' = (\VV, \EE') \) be the modified instance obtained by removing all outgoing edges from \( v \), including the self-loop. We prove that any solution \( \SS \subseteq \VV \) is feasible in \( \GG \) if and only if it is feasible in \( \GG' \).

Suppose \( \SS \) is a feasible solution in \( \GG \). We consider the role of vertex \( v \) in \( \SS \). If \( v \notin \SS \). Then the removal of outgoing edges from \( v \) has no effect on the solution. Since \( v \) is not selected, its outgoing edges do not influence the profit of any vertex in \( \SS \), and the feasibility remains unchanged. If \( v \in \SS \). In \( \GG \), the self-loop \( (v, v) \in \EE \) ensures that \( v \) has an out-neighbor (itself), and hence it contributes to the total profit. In the modified graph \( \GG' \), all outgoing edges from \( v \) are removed, so \( v \) has no out-neighbors. By the definition of \sor, a vertex with no out-neighbors contributes its profit when selected. Hence, \( v \) still contributes to profit in \( \GG' \). Thus, the status of \( v \) and the feasibility of the solution is preserved.

Suppose \( \SS \subseteq \VV \) is a feasible solution in \( \GG' \). We show it is also feasible in \( \GG \). Consider the vertex \( v \in \VV \) that had its outgoing edges (including the self-loop) removed. Let us verify for each case: If \( v \notin \SS \), then nothing changes. If \( v \in \SS \), then in \( \GG' \), \( v \) has no out-neighbors and therefore contributes its profit directly. In the original graph \( \GG \), \( v \) has a self-loop, so \( v \) is its own out-neighbor and hence satisfies the condition to contribute to profit as well. Therefore, the contribution of \( v \) remains valid, and the feasibility is preserved.

Thus, in both directions, the feasibility of the solution remains unchanged after removing all outgoing edges from \( v \), including its self-loop. This completes the proof.
\end{proof}

\begin{proof}[Proof of \Cref{self-loop-sand}]

Let the original instance be \( \GG = (\VV, \EE) \), and let \( \GG' = (\VV, \EE') \) be the modified instance obtained by removing the self-loop \( (v, v) \in \EE \), i.e., \( \EE' = \EE \setminus \{(v, v)\} \). We prove the claim in both directions:

Suppose \( \SS \subseteq \VV \) is a feasible solution in \GG. We show that it remains feasible in \GG'. If \( v \notin \SS \), the profit status of \( v \) is irrelevant. Since the structure of \GG' remains unchanged for all other. If \( v \in \SS \) but \( v \) does not contribute to profit in \GG, then by the \sand condition, there exists some out-neighbor \( w \neq v \) of \( v \) such that \( w \notin \SS \). Since this neighbor \( w \) is unaffected by the removal of the self-loop, the profit condition for \( v \) remains violated in \GG' as well. Hence, the profit contribution of \( v \) remains the same, and \SS is still feasible. If \( v \in \SS \) and \( v \) contributes its profit in \GG, then all out-neighbors of \( v \), including possibly itself, are in \SS. But since all neighbors \( w \neq v \) of \( v \) are also in \SS, removing the self-loop does not violate the \sand condition. Thus, \( v \) remains profitable in \GG', and the solution \SS remains feasible. 

Suppose \( \SS \subseteq \VV \) is a feasible solution in \GG'. We show that it remains feasible in \GG. The only difference between \GG and \GG' is the presence of the self-loop \( (v,v) \), which only affects the profit condition of \( v \). Again, we consider the cases: If \( v \notin \SS \), its status is unaffected by the self-loop. Hence, \SS remains feasible in \GG. If \( v \in \SS \) but \( v \) does not contribute to profit in \GG', then there exists some neighbor \( w \neq v \) such that \( w \notin \SS \), and this remains true in \GG. Thus, \( v \) still does not contribute to profit in \GG. If \( v \in \SS \) and \( v \) is profitable in \GG', then all its out-neighbors \( w \in N^+(v) \setminus \{v\} \) must be in \SS. Adding back the self-loop in \GG only adds a redundant neighbor that is already in \SS, so the profit condition for \( v \) continues to hold in \GG.

In all cases, the feasibility of \SS is preserved under the removal or addition of the self-loop. Therefore, self-loops can be safely ignored when solving \sand.
\end{proof}

\shortversion{

\begin{proof}[Proof of \Cref{sor-tree-npc}]
    \sor for trees $\in$ \NP. To prove hardness, we reduce from \kp problem. 

    We denote an instance of \kp as $(\II, \{s_i\}_{i \in \II}, \{p_i\}_{i \in \II}, c, \alpha)$. Given, an instance of \kp problem, we construct an instance of \sor as follows: let the graph is $\GG = (\VV, \EE)$ with $\VV = \II \cup \{a_{n+1}\}$ where vertex $v_i$ corresponds to item $a_i$ has weight $w_i$ and profit $p_i$, and the vertex $v_{n+1}$ has weight 0 and profit 0. Define the edge set \( \EE = \{ \{v_i, v_{n+1}\} | 1 \leq i \leq n\}\).

Since the central vertex has weight and profit 0, it can always be taken in any feasible solution. Now, the \kp instance $(\II, \{s_i\}_{i \in \II}, \{p_i\}_{i \in \II}, c, \alpha)$ is an \yes instance if and only if the \sor instance $(\GG = (\VV,\EE), \{w_i\}_{i \in \VV}, \{p_i\}_{i \in \VV}, s = c, d = \alpha)$ is an \yes instance. 
\end{proof}

\begin{proof}[Proof of \Cref{sand-npc}]
\sand $\in$ \NP, because given a certificate i.e. a set of vertices  $\VV^\pr \subseteq \VV$, it can be verified in polynomial time if  \(\sum_{v \in \VV^\prime} w_v \leq s 
\quad \text{and} \quad 
\sum_{\substack{v \in \VV^\prime: \\ N(v) \subseteq \VV^\prime}} p_v \geq d.\)

To show hardness, we reduce the \clv problem to \sand. 

Given an instance of \clv as $(\GG = (\VV, \EE),k,l)$, we construct an instance of \sand as follows: we consider the same graph \GG, for each vertex $u \in \VV$, set $w_u = 1$, $p_u = 1$, knapsack capacity $s = l+k$ and profit $d=l$. We now claim that \clv instance $(\GG = (\VV, \EE),k,l)$ is an \yes instance if and only if \sand instance $(\GG = (\VV,\EE), \{w_i\}_{i \in \VV} = 1, \{p_i\}_{i \in \VV} = 1, s = l+k, d = l)$ is an \yes instance. 

Suppose \clv is an \yes instance then there exists a partition $\VV = \XX \cup \SS \cup \YY$ in $\GG$ such that $|\XX| = l$, $|\SS| \leq k$, and there is no edge between $\XX$ and $\YY$. Consider the set $\VV^\pr = \XX \cup \SS$ in $\GG$. Note that
\(\sum_{v \in \VV^\pr} w_v = |\VV^\pr| = |\XX| + |\SS| \leq l + k = s.\)For each $v \in \XX$, since there is no edge from $\XX$ to $\YY$, all neighbors of $v$ lie in $\XX \cup \SS = \VV^\pr$. Thus, $N(v) \subseteq \VV^\pr$, so $v$ contributes profit. Hence,
\(
\sum_{\substack{v \in \VV^\pr: \\ N(v) \subseteq \VV^\pr}} p_v \geq |\XX| = l = d.
\) Therefore, $\VV^\pr$ is a valid solution for \sand.

Suppose \sand is a YES-instance, then \clv is a YES-instance: Suppose $\VV^\pr \subseteq \VV$ is a solution for \sand such that
\(
\sum_{v \in \VV^\pr} w_v \leq s = k + l \quad \text{and} \quad \sum_{\substack{v \in \VV^\pr: \\ N(v) \subseteq \VV^\pr}} p_v \geq d = l.
\) Let $\XX \subseteq \VV^\pr$ denote the set of vertices in $\VV^\pr$ such that all their neighbors are also in $\VV^\pr$, i.e., 
\(
\XX = \{ v \in \VV^\pr \mid N(v) \subseteq \VV^\pr \}.
\)
Then, the total profit contributed is exactly $|\XX| \geq l$. Since $\sum_{v \in \VV^\pr} w_v = |\VV^\pr| \leq k + l$, define $\SS = \VV^\pr \setminus \XX$, and note that
\(
|\SS| = |\VV^\pr| - |\XX| \leq (k + l) - l = k.
\)
Now define $\YY = \VV \setminus \VV^\pr$. Since every $v \in \XX$ has all its neighbors in $\VV^\pr$, there is no edge between $\XX$ and $\YY$. Hence, we have a partition $\VV = \XX \cup \SS \cup \YY$ with $|\XX| = l$, $|\SS| \leq k$, and no edge between $\XX$ and $\YY$. Thus, \clv is a YES-instance.
\end{proof}
\begin{proof}[Proof of \Cref{hand-woh}]
Given an instance of \clique as $(\GG^\pr = (\VV^\pr, \EE^\pr), k)$, we construct an instance of \hand as follows: We construct a bipartite graph $\GG(\VV = \AA \cup \BB, \EE)$ from a given graph $\GG^\pr = (\VV^\pr, \EE^\pr)$ as follows. For each edge $e \in \EE^\pr$, create a vertex $u_e$ in $\AA$, and for each vertex $v \in \VV^\pr$, create a vertex $v$ in $\BB$ i.e, $\AA$ corresponds to the edge set of $\GG^\pr$ and $\BB$ corresponds to the vertex set of $\GG^\pr$. For every edge $e = \{v_i, v_j\} \in \EE^\pr$, we add edges $(u_e, v_i)$ and $(u_e, v_j)$ in $\GG$, where $u_e \in \AA$ and $v_i, v_j \in \BB$, representing the incidence of edge $e$ with vertices $v_i$ and $v_j$ in the original graph $\GG^\pr$. We assign the weights and profits to the vertices of $\GG$ as follows. Each vertex $u \in \AA$ (corresponding to an edge in $\GG^\pr$) is assigned weight \( w_u = 0 \) and profit \( p_u = 1 \). Each vertex \( v \in \BB \) (corresponding to a vertex in $\GG^\pr$) is assigned weight \( w_v = 1 \) and profit \( p_v = 0 \). Set $s = k$ and $d = \binom{k}{2}$. 
We claim that the \clique problem is a \yes instance if and only if the corresponding \hand instance is also a \yes instance.

First, suppose that the \clique problem is a \yes instance. This means there exists a clique of size \(k\) in $\GG^\pr$. We can construct a solution to the \hand instance by selecting the \(k\) nodes from \BB that correspond to the vertices of the \(k\)-\clique. Additionally, we include \(\binom{k}{2}\) nodes from \AA that correspond to the edges in the \(k\)-clique. In this configuration, the \(k\) nodes from set \BB contribute a total weight of \(k\) and no profit, while the edges selected from \AA contribute a profit of \(\binom{k}{2}\). This configuration satisfies the constraints of the knapsack problem: the total weight \(\sum_{\substack{v \in \SS: \\ N(v) \subseteq \SS}} w_u \leq k\) and the total profit \(\sum_{\substack{v \in \SS: \\ N(v) \subseteq \SS}} p_v = \binom{k}{2}\).

Conversely, if the \sand instance is a \yes instance, we can construct a corresponding \(k\)-\clique in $\GG^\pr$. Since we must select at least \(\binom{k}{2}\) edges from \AA to meet the profit requirement, this implies that the chosen edges correspond to a complete subgraph (clique) among the selected vertices in \BB. Therefore, \clique instance $(\GG^\pr = (\VV^\pr, \EE^\pr), k)$ is an \yes instance if and only if  \sand instance $(\GG = (\VV = (\AA \cup \BB),\EE), \{w_i\}_{i \in \VV}, \{p_i\}_{i \in \VV}, s = k, d = \binom{k}{2})$ is an \yes instance.
\end{proof}

\begin{proof}[Proof of \Cref{hor-di-fpt}]
We consider a nice tree decomposition of the underlying graph. Let $(\GG = (V,E),\{w_u\}_{u \in V_G}, \{p_u\}_{u \in V_G}, s,d)$ be an input instance of \hor such that $\tw=tw(\GG)$. Let $\VV^\pr$ be an optimal solution to \hor. For technical purposes, we guess a vertex $v \in \VV^\pr$ --- once the guess is fixed, we are only interested in finding solution subsets $\hat{\VV}$ that contain $v$ and $\VV^\pr$ is one such candidate. We also consider a nice edge tree decomposition $(\mathbb{T} = (V_{\mathbb{T}},E_{\mathbb{T}}),\mathcal{X})$ of $\GG$ that is rooted at a node $r$, and where $v$ has been added to all bags of the decomposition. Therefore, $X_{r} = \{v\}$ and each leaf bag is the singleton set $\{v\}$.

We define a function $\ell: V_{\mathbb{T}} \rightarrow \mathbb{N}$ as follows.
For a vertex $t \in V_\mathbb{T}$, $\ell(t) = \sf{dist}_\mathbb{T}(t,r)$, where $r$ is the root. Note that this implies that $\ell(r) = 0$. Let us assume that the values that $\ell$ take over the nodes of $\mathbb{T}$ are between $0$ and $L$. For a node $t\in V_\mathbb{T}$, we denote the set of vertices in the bags in the subtree rooted at $t$ by $V_t$ and $\GG_t=\GG[V_t]$. Now, we describe a dynamic programming algorithm over $(\mathbb{T},\mathcal{X})$. We have the following states.

    \textbf{States:} We maintain a DP table $D$ where a state has the following components:
    \begin{enumerate}
    \item $t$ represents a node in $V_\mathbb{T}$.
    \item $\SS$ represents a subset of the vertex subset $X_t$
    \end{enumerate}
    \textbf{Interpretation of States}
For each node \( t \in \mathbb{T} \), we maintain a list \( D[t, \SS] \) for every subset \( \SS \subseteq X_t \). Each entry in this list holds a set of feasible and undominated (weight, profit) pairs corresponding to valid \hor solutions $\hat{\SS}$ in the graph \( \GG_t \), where:  

\begin{itemize}  
    \item The selected vertices in \( \GG_t \) satisfy \(\hat{\SS} \cap X_t = \SS \).   
    \item If no such valid solution exists, we set \( D[t, \SS] = \emptyset \).  
\end{itemize}  

For each state $D[t,\SS]$, we initialize $D[t,\SS]$ to the list $\{(0,0)\}$.

    \textbf{Dynamic Programming on $D$}: We update the table $D$ as follows. We initialize the table of states with nodes $t\in V_\mathbb{T}$ such that $\ell(t)=L$. When all such states are initialized, then we move to update states where the node $t$ has $\ell(t) = L-1$, and so on, till we finally update states with $r$ as the node --- note that $\ell(r) =0$. For a particular $j$, $0\leq j< L$ and a state $D[t,\SS]$ such that $\ell(t) = j$, we can assume that $D[t',\SS']$ have been computed for all $t'$ such that $\ell(t')>j$ and all subsets $\SS'$ of $X_{t'}$. Now we consider several cases by which $D[t,\SS]$ is updated based on the nature of $t$ in $\mathbb{T}$:
    
    \begin{enumerate}
    \item \textbf{Leaf node.} Suppose $t$ is a leaf node with $X_{t} = \{v\}$. The only possible cases that can arise are as follows:
    \begin{enumerate}
        \item $D[t,\phi]$ stores the pair $\{(0, 0)\}$
        \item $D[t,\{v\}]$ = $\{(w_v, p_v)\}$, if $w_v \leq s$
    \end{enumerate} 
    
    \item \textbf{Introduce node.} Suppose $t$ is an introduce node. Then it has only one child $t'$ where $X_{t'} \subset X_{t}$ and there is exactly one vertex $u \neq v$ that belongs to $X_{t}$ but not $X_{t'}$. Then for all $\SS \subseteq X_t$:
    \begin{enumerate}
        \item If $u \not\in \SS$, 
        \begin{enumerate}
            \item if \(\exists w: w\in N^-(u), w \in \SS \) and \(N^+(w) \cap \VV_t = \{u\}\), then $D[t,\SS] = \emptyset$, hence assume otherwise.
            \item Otherwise, we copy all pairs of $D[t',\SS^\pr = \SS \setminus \{u\}]$ to $D[t',\SS,]$. 
        \end{enumerate}
        
        \item If $u \in \SS$,
        \begin{enumerate}            
            \item if $u$ has out-neighbor $x$ in $\SS$ or \(N^+(u) \cap \VV_t = \emptyset\), then for every pair $(w,p)$ in $D[t',\SS^\pr=\SS \setminus \{u\}]$ with $w + w_u \leq s$, we add $(w+w_u, p + p_u)$ and copy to $D[t,\SS]$.
            
            \item Otherwise, $D[t,\SS]$ = $\emptyset$

        \end{enumerate}
    \end{enumerate}
    
    \item \textbf{Forget node.} Suppose $t$ is a forget vertex node. Then it has only one child $t'$, and there is a vertex $u \neq v$ such that $X_t\cup\{u\} = X_{t'}$. Then for all $\SS \subseteq X_t$ we copy all feasible undominated pairs stored in $D[t',\SS\cup \{u\}]$ and $D[t',\SS]$ to $D[t,\SS]$. We remove any dominated pair and maintain only the undominated pairs in $D[t,\SS]$.

    \item \textbf{Join node.} Suppose $t$ is a join node. Then it has two children $t_1,t_2$ such that $X_t = X_{t_1} = X_{t_2}$. Then for all $\SS \subseteq X_t$, let $(w(\SS), p(\SS))$ be the total weight and value of the vertices in $\SS$. Consider a pair $(w_1,p_1)$ in $D[t_1,\SS_1]$ and a pair $(w_2,p_2)$ in $D[t_2,\SS_2]$, where $\SS_1 \cup \SS_2 = \SS$. Then for all pairs $(w_1,p_1) \in D[t_1,\SS_1]$ and $(w_2,p_2) \in D[t_2,\SS_2]$, if $w_1 + w_2 - w(\SS_1 \cup \SS_2) \leq s$, then we add $ \left( w_1 + w_2 - w(\SS_1 \cup \SS_2), p_1+p_2- p(\SS_1 \cup \SS_2) \right)$ and copy to $D[t,\SS]$.
    \end{enumerate}
    
Finally, in the last step of updating $D[t,\SS]$, we go through the list saved in $D[t,\SS]$ and only keep undominated pairs.

The output of the algorithm is a pair $(w,p)$ that is maximum of those stored in $D[r,\{v\}]$ such that $w \leq s$ and $p$ is the maximum value over all pairs in $D[r,\{v\}]$.

\textbf{Proof of correctness}: 
Recall that we are looking for a solution that is a set of vertices $\VV^\pr$ that contains the fixed vertex $v$ that belongs to all bags of the nice tree decomposition. In each state we maintain a list of feasible and undominated (weight, profit) pairs that correspond to the solution $\hat{\SS}$ of $\GG_t$ and  satisfy \( \hat{\SS} \cap X_t = \SS \).

We now show that the update rules holds for each $X_t$. To prove this formally, we need to consider the cases of what $t$ can be:

\begin{enumerate}
    \item \textbf{Leaf node.} Recall that in our modified nice tree decomposition we have added a vertex $v$ to all the bags. Suppose a leaf node $t$ contains a single vertex $v$, $D[t,\phi]$ stores the pair $\{(0, 0)\}$ and $D[t,\{v\}]$ = $\{(w_v, p_v)\}$ if $w_v \leq s$. This is true in particular when $j = L$, the base case. From now we can assume that for a node $t$ with $\ell(t) = j < L$ and all subsets $\SS \subseteq X_t$, $D[t',\SS^\pr]$ entries are correct and correspond to a \hor solution in $\GG_t$. when $\ell(t') > j$. 

    \item \textbf{Introduce node.} When $t$ is an introduce node, there is a child $t'$ such that $X_t = X_{t'} \cup \{u\}$. We are introducing a vertex $u$ and the edges associated with it in $\GG_t$. Let us prove for each case.
    \begin{enumerate}
        \item When $u$ is not included in $\SS$, the families of sets $\hat{\SS}$ considered in the definition of $D[t,\SS]$ and of $D[t',\SS]$ are equal. Notice that there may be a vertex $w \in X_{t}$ that belongs to $\SS$ such that $N^+(w) \cap V_{t'} = \emptyset$, i.e. $w$ did not have any out-neighbor in the subgraph $\GG_{t'}$. However, with the introduction of the vertex $u$ and the edges associated to it, if $N^+(w) \cap V_{t} = \{u\}$ i.e. $u$ is the only out-neighbor of $w$ in $\GG_t$, since $u$ does not belong to $\SS$, the \hor constraint is violated and hence the state $D[t,\SS]$ stores $\emptyset$ as it is no longer feasible. Otherwise we copy all pairs of $D[t',\SS^\pr = \SS]$ to $D[t,\SS]$.

        \item Now consider the case when $u$ is part of $\SS$. Let $\hat{\SS}$ be a feasible solution for \hor attained in the definition of $D[t,\SS]$. Then it follows that $\hat{\SS} \setminus \{u\}$ is one of the sets considered in the definition of $D[t',\SS \setminus \{u\}]$. It suffices to just check if $u$ has a neighbor in $X_t$ or not because the nice tree decomposition ensures that the newly introduced vertex $u$ can have neighbors only in $X_t$. There are two subcases:
        \begin{enumerate}
            \item \textbf{Subcase 1}. When $u$ has at least one out-neighbor $x$ selected in $\SS$ or $u$ does not have any out-neighbor in $\GG_{t}$, then for every pair $(w,p)$ in $D[t',\SS^\pr=\SS \setminus \{u\}]$ with $w + w_u \leq s$, we add $(w+w_u, p + p_u)$ and copy to $D[t,\SS]$.
            
            \item \textbf{Subcase 2}. When $u$ has a out-neighbor in $\GG_t$ but none of them are selected in $\SS$, then $D[t,\SS] = \emptyset$.
        \end{enumerate}
    \end{enumerate}
    Since $\ell(t') > \ell(t)$, by induction hypothesis all entries in $D[t', \SS'= \SS]$ and $D[t', \SS'= \SS \setminus \{u\}]$,  $\forall$ $\SS' \subseteq X_{t'}$ are already computed. We update pairs in $D[t,\SS]$ depending on the cases discussed above.
    
    \item \textbf{Forget Node.}  When $t$ is a forget node, there is a child $t'$ such that $X_t = X_{t'} \setminus \{u\} $. Let $\hat{\SS}$ be a set for which the \hor solution is attained in the definition of $D[t,\SS]$. If $u \not\in \hat{\SS}$, then $\hat{\SS}$ is one of the sets considered in the definition of $D[t',\SS]$. And if $u \in \hat{\SS}$, then $\hat{\SS}$ is one of the sets considered in the definition of $D[t',\SS \cup \{u\}]$. Since $\ell(t') > \ell(t)$, by induction hypothesis all entries in $D[t',\SS' = \SS]$ and $D[t',\SS' = \SS \cup \{u\}]$, $\forall$ $\SS' \subseteq X_{t'}$ are already computed and feasible. We copy each undominated $(w,p)$ pair stored in $D[t',\SS' = \SS]$ and $D[t',\SS' = \SS \cup \{u\}]$ to $D[t,\SS]$.

    \item \textbf{Join node.}  When $t$ is a join node, there are two children $t_1$ and $t_2$ of $t$, such that $X_t = X_{t_1} = X_{t_2}$. Let $\hat{\SS}$ be a set for \hor attained in the definition of $D[t,\SS]$. Let $\hat{\SS}_1 = \hat{\SS} \cap V_{t_1}$ and $\hat{\SS}_2 = \hat{\SS} \cap V_{t_2}$. Observe that $\hat{\SS}_1$ is a solution to \hor in $\GG_{t_1}$ and $\hat{\SS}_1 \cap X_{t_1} = \SS_1$, so this is considered in the definition of $D[t_1, \SS]$ and similarly, $\hat{\SS}_2 \cap X_{t_2} = \SS_2$. From the definition of nice tree decomposition we know that there is no edge between the vertices of $V_{t_1} \setminus X_t$ and $V_{t_2} \setminus X_t$. Then we merge solutions from the two subgraphs and remove the over-counting. By the induction hypothesis, the computed entries in $D[t_1,\SS_1]$ and $D[t_2,\SS_2]$ where $\SS_1 \cup \SS_2 = \SS$ are correct and store the feasible and undominated \hor solutions for the subgraph $G_{t_1}$ in $\SS_1$ and similarly, $\SS_2$ for $G_{t_2}$. Now we add ($w_1 + w_2 - w(S_1 \cup S_2),  p_1+p_2 -p(\SS_1 \cup \SS_2)$) to $D[t,\SS]$.

\end{enumerate}

What remains to be shown is that an undominated feasible solution $\VV^\pr$ of \hor in $\GG$ is contained in $D[r,\{v\}]$. Let $w$ be the weight of $\VV^\pr$ and $p$ be the value subject to \hor. Recall that $v \in \VV^\pr$. For each $t$, we consider the subgraph $\GG_t \cap \VV^\pr$. Since the DP state $D[t,\SS]$ is updated correctly for all subgraphs $\GG_t$, the bottom up dynamic programming approach ensures that all feasible and undominated pairs are correctly propagated. If $\VV^\pr$ is a valid solution, then the corresponding (weight, profit) must be stored in some DP state. Since $v \in \VV^\pr$, the optimal states where $v$ is selected will store the pair $(w,p)$. Therefore, $D[r,\{v\}]$ contains the pair $(w,p)$. 

\textbf{Running time}: There are $n$ choices for the fixed vertex $v$. Upon fixing $v$ and adding it to each bag of $(\mathbb{T}, \mathcal{X})$ we consider the total possible number of states. For every node $t$, we have $2^{|X_t|}$ choices of ${\SS}$. For each state, for each $w$, there can be at most one pair with $w$ as the first coordinate; similarly, for each $p$, there can be at most one pair with $p$ as the second coordinate. Thus, the number of undominated pairs in each $D[t,\SS]$ is at most ${\sf min}\{s,d\}$ time. Since the treewidth of the input graph \GG is at most \tw, it is possible to construct a data structure in time $\tw^{\OO(1)} \cdot n$ that allows performing adjacency queries in time $\OO(\tw)$.  For each node $t$, it takes time $\OO\left(2^{\tw} \cdot \tw^{\OO(1)} \cdot {\sf min}\{s^2,d^2\}\right)$ to compute all the values $D[t,\SS]$ and remove all undominated pairs. Since we can assume w.l.o.g that the number of nodes of the given tree decompositions is $\OO(\tw \cdot n)$, and there are $n$ choices for the vertex $v$, the running time of the algorithm is $\OO\left(2^{\tw}\cdot n^{\OO(1)} \cdot {\sf min}\{s^2,d^2\}\right)$.
\end{proof}

\begin{proof}[Proof of \Cref{hand-di-fpt}]
	We consider a nice tree decomposition of the underlying graph. Let $(\GG = (V,E),\{w_u\}_{u \in V_G}, \{p_u\}_{u \in V_G}, s,d)$ be an input instance of \hand such that $\tw=tw(\GG)$. Let $\VV^\pr$ be an optimal solution to \hand. For technical purposes, we guess a vertex $v \in \VV^\pr$ --- once the guess is fixed, we are only interested in finding solution subsets $\hat{\VV}$ that contain $v$ and $\VV^\pr$ is one such candidate. We also consider a nice edge tree decomposition $(\mathbb{T} = (V_{\mathbb{T}},E_{\mathbb{T}}),\mathcal{X})$ of $\GG$ that is rooted at a node $r$, and where $v$ has been added to all bags of the decomposition. Therefore, $X_{r} = \{v\}$ and each leaf bag is the singleton set $\{v\}$.
	
	We define a function $\ell: V_{\mathbb{T}} \rightarrow \mathbb{N}$ as follows.
	For a vertex $t \in V_\mathbb{T}$, $\ell(t) = \sf{dist}_\mathbb{T}(t,r)$, where $r$ is the root. Note that this implies that $\ell(r) = 0$. Let us assume that the values that $\ell$ take over the nodes of $\mathbb{T}$ are between $0$ and $L$. For a node $t\in V_\mathbb{T}$, we denote the set of vertices in the bags in the subtree rooted at $t$ by $V_t$ and $\GG_t=\GG[V_t]$. Now, we describe a dynamic programming algorithm over $(\mathbb{T},\mathcal{X})$. We have the following states.

	\textbf{States:} We maintain a DP table $D$ where a state has the following components:
	\begin{enumerate}
		\item $t$ represents a node in $V_\mathbb{T}$.
		\item $\SS$ represents a subset of the vertex subset $X_t$
	\end{enumerate}
	\textbf{Interpretation of States}
	For each node \( t \in \mathbb{T} \), we maintain a list \( D[t, \SS] \) for every subset \( \SS \subseteq X_t \). Each entry in this list holds a set of feasible and undominated (weight, profit) pairs corresponding to valid \hand $\hat{\SS}$ in the graph \( \GG_t \), where:  
	
	\begin{itemize}  
		\item The selected vertices in \( \GG_t \) satisfy \(\hat{\SS} \cap X_t = \SS \).   
		\item If no such valid solution exists, we set \( D[t, \SS] = \emptyset \).  
	\end{itemize}  
	
	For each state $D[t,\SS]$, we initialize $D[t,\SS]$ to the list $\{(0,0)\}$.
	
	\textbf{Dynamic Programming on $D$}: We update the table $D$ as follows. We initialize the table of states with nodes $t\in V_\mathbb{T}$ such that $\ell(t)=L$. When all such states are initialized, then we move to update states where the node $t$ has $\ell(t) = L-1$, and so on, till we finally update states with $r$ as the node --- note that $\ell(r) =0$. For a particular $j$, $0\leq j< L$ and a state $D[t,\SS]$ such that $\ell(t) = j$, we can assume that $D[t',\SS']$ have been computed for all $t'$ such that $\ell(t')>j$ and all subsets $\SS'$ of $X_{t'}$. Now we consider several cases by which $D[t,\SS]$ is updated based on the nature of $t$ in $\mathbb{T}$:
	
	\begin{enumerate}
		\item \textbf{Leaf node.} Suppose $t$ is a leaf node with $X_{t} = \{v\}$. The only possible cases that can arise are as follows:
		\begin{enumerate}
			\item $D[t,\phi]$ stores the pair $\{(0, 0)\}$
			\item $D[t,\{v\}]$ = $\{(w_v, p_v)\}$, if $w_v \leq s$
		\end{enumerate} 
		
		\item \textbf{Introduce node.} Suppose $t$ is an introduce node. Then it has only one child $t'$ where $X_{t'} \subset X_{t}$ and there is exactly one vertex $u \neq v$ that belongs to $X_{t}$ but not $X_{t'}$. Then for all $\SS \subseteq X_t$:
		\begin{enumerate}
			\item If $u \not\in \SS$, 
			\begin{enumerate}
				\item if \(\exists w: w\in N^-(u), w \in \SS \) and \(N^+(w) \cap \VV_t = \{u\}\), then $D[t,\SS] = \emptyset$, hence assume otherwise.
				\item Otherwise, we copy all pairs of $D[t',\SS^\pr = \SS \setminus \{u\}]$ to $D[t',\SS,]$. 
			\end{enumerate}
			
			\item If $u \in \SS$,
			\begin{enumerate}            
				\item if $\VV_t \cap N^+_{\GG_t}(u) \subseteq \SS$, then for every pair $(w,p)$ in $D[t',\SS^\pr=\SS \setminus \{u\}]$ with $w + w_u \leq s$, we add $(w+w_u, p + p_u)$ and copy to $D[t,\SS]$.
				
				\item Otherwise, $D[t,\SS]$ = $\emptyset$
				
			\end{enumerate}
		\end{enumerate}
		
		\item \textbf{Forget node.} Suppose $t$ is a forget vertex node. Then it has only one child $t'$, and there is a vertex $u \neq v$ such that $X_t\cup\{u\} = X_{t'}$. Then for all $\SS \subseteq X_t$ we copy all feasible undominated pairs stored in $D[t',\SS\cup \{u\}]$ and $D[t',\SS]$ to $D[t,\SS]$. We remove any dominated pair and maintain only the undominated pairs in $D[t,\SS]$.

		\item \textbf{Join node.} Suppose $t$ is a join node. Then it has two children $t_1,t_2$ such that $X_t = X_{t_1} = X_{t_2}$. Then for all $\SS \subseteq X_t$, let $(w(\SS), p(\SS))$ be the total weight and value of the vertices in $\SS$. Consider a pair $(w_1,p_1)$ in $D[t_1,\SS_1]$ and a pair $(w_2,p_2)$ in $D[t_2,\SS_2]$, where $\SS_1 \cap \SS_2 = \SS$. Then for all pairs $(w_1,p_1) \in D[t_1,\SS_1]$ and $(w_2,p_2) \in D[t_2,\SS_2]$, if $w_1 + w_2 - w(\SS_1 \cap \SS_2) \leq s$, then we add $ \left( w_1 + w_2 - w(\SS_1 \cap \SS_2), p_1+p_2- p(\SS_1 \cap \SS_2) \right)$ and copy to $D[t,\SS]$.
	\end{enumerate}
	
	Finally, in the last step of updating $D[t,\SS]$, we go through the list saved in $D[t,\SS]$ and only keep undominated pairs.

	The output of the algorithm is a pair $(w,p)$ that is maximum of those stored in $D[r,\{v\}]$ such that $w \leq s$ and $p$ is the maximum value over all pairs in $D[r,\{v\}]$.

	\textbf{Proof of correctness}: 
	Recall that we are looking for a solution that is a set of vertices $\VV^\pr$ that contains the fixed vertex $v$ that belongs to all bags of the nice tree decomposition. In each state we maintain a list of feasible and undominated (weight, profit) pairs that correspond to the solution $\hat{\SS}$ of $\GG_t$ and  satisfy \( \hat{\SS} \cap X_t = \SS \).
	
	We now show that the update rules holds for each $X_t$. To prove this formally, we need to consider the cases of what $t$ can be:
	
	\begin{enumerate}
		\item \textbf{Leaf node.} Recall that in our modified nice tree decomposition we have added a vertex $v$ to all the bags. Suppose a leaf node $t$ contains a single vertex $v$, $D[t,\phi]$ stores the pair $\{(0, 0)\}$ and $D[t,\{v\}]$ = $\{(w_v, p_v)\}$ if $w_v \leq s$. This is true in particular when $j = L$, the base case. From now we can assume that for a node $t$ with $\ell(t) = j < L$ and all subsets $\SS \subseteq X_t$, $D[t',\SS^\pr]$ entries are correct and correspond to a \hand solution in $\GG_t$. when $\ell(t') > j$. 
		
		\item \textbf{Introduce node.} When $t$ is an introduce node, there is a child $t'$ such that $X_t = X_{t'} \cup \{u\}$. We are introducing a vertex $u$ and the edges associated with it in $\GG_t$. Let us prove for each case.
		\begin{enumerate}
			\item When $u$ is not included in $\SS$, the families of sets $\hat{\SS}$ considered in the definition of $D[t,\SS]$ and of $D[t',\SS]$ are equal. Notice that there may be a vertex $w \in X_{t}$ that belongs to $\SS$ such that $N^+(w) \cap V_{t'} = \emptyset$, i.e. $w$ did not have any out-neighbor in the subgraph $\GG_{t'}$. However, with the introduction of the vertex $u$ and the edges associated to it, if $N^+(w) \cap V_{t} = \{u\}$ i.e. $u$ is the only out-neighbor of $w$ in $\GG_t$, since $u$ does not belong to $\SS$, the \hand constraint is violated and hence the state $D[t,\SS]$ stores $\emptyset$ as it is no longer feasible. Otherwise we copy all pairs of $D[t',\SS^\pr = \SS]$ to $D[t,\SS]$.
			
			\item Now consider the case when $u$ is part of $\SS$. Let $\hat{\SS}$ be a feasible solution for \hand attained in the definition of $D[t,\SS]$. Then it follows that $\hat{\SS} \setminus \{u\}$ is one of the sets considered in the definition of $D[t',\SS \setminus \{u\}]$. It suffices to just check if $u$ has a neighbor in $X_t$ or not because the nice tree decomposition ensures that the newly introduced vertex $u$ can have neighbors only in $X_t$. There are two subcases:
			\begin{enumerate}
				\item \textbf{Subcase 1}. When $u$ has all out-neighbors $x$ selected in $\SS$ or $u$ does not have any out-neighbor in $\GG_{t}$, then for every pair $(w,p)$ in $D[t',\SS^\pr=\SS \setminus \{u\}]$ with $w + w_u \leq s$, we add $(w+w_u, p + p_u)$ and copy to $D[t,\SS]$.
				
				\item \textbf{Subcase 2}. When $u$ has a out-neighbor in $X_t \setminus \SS$, then $D[t,\SS] = \emptyset$.
			\end{enumerate}
		\end{enumerate}
		Since $\ell(t') > \ell(t)$, by induction hypothesis all entries in $D[t', \SS'= \SS]$ and $D[t', \SS'= \SS \setminus \{u\}]$,  $\forall$ $\SS' \subseteq X_{t'}$ are already computed. We update pairs in $D[t,\SS]$ depending on the cases discussed above.
		
		\item \textbf{Forget Node.}  When $t$ is a forget node, there is a child $t'$ such that $X_t = X_{t'} \setminus \{u\} $. Let $\hat{\SS}$ be a set for which the \hand solution is attained in the definition of $D[t,\SS]$. If $u \not\in \hat{\SS}$, then $\hat{\SS}$ is one of the sets considered in the definition of $D[t',\SS]$. And if $u \in \hat{\SS}$, then $\hat{\SS}$ is one of the sets considered in the definition of $D[t',\SS \cup \{u\}]$. Since $\ell(t') > \ell(t)$, by induction hypothesis all entries in $D[t',\SS' = \SS]$ and $D[t',\SS' = \SS \cup \{u\}]$, $\forall$ $\SS' \subseteq X_{t'}$ are already computed and feasible. We copy each undominated $(w,p)$ pair stored in $D[t',\SS' = \SS]$ and $D[t',\SS' = \SS \cup \{u\}]$ to $D[t,\SS]$.
		
		\item \textbf{Join node.}  When $t$ is a join node, there are two children $t_1$ and $t_2$ of $t$, such that $X_t = X_{t_1} = X_{t_2}$. Let $\hat{\SS}$ be a set for \hand attained in the definition of $D[t,\SS]$. Let $\hat{\SS}_1 = \hat{\SS} \cap V_{t_1}$ and $\hat{\SS}_2 = \hat{\SS} \cap V_{t_2}$. Observe that $\hat{\SS}_1$ is a solution to \hand in $\GG_{t_1}$ and $\hat{\SS}_1 \cap X_{t_1} = \SS_1$, so this is considered in the definition of $D[t_1, \SS]$ and similarly, $\hat{\SS}_2 \cap X_{t_2} = \SS_2$. From the definition of nice tree decomposition we know that there is no edge between the vertices of $V_{t_1} \setminus X_t$ and $V_{t_2} \setminus X_t$. Then we merge solutions from the two subgraphs and remove the over-counting. By the induction hypothesis, the computed entries in $D[t_1,\SS_1]$ and $D[t_2,\SS_2]$ where $\SS_1 \cap \SS_2 = \SS$ are correct and store the feasible and undominated \hand solutions for the subgraph $G_{t_1}$ in $\SS_1$ and similarly, $\SS_2$ for $G_{t_2}$. Now we add ($w_1 + w_2 - w(S_1 \cap S_2),  p_1+p_2 -p(\SS_1 \cap \SS_2)$) to $D[t,\SS]$.
		
	\end{enumerate}
	
	What remains to be shown is that an undominated feasible solution $\VV^\pr$ of \hand in $\GG$ is contained in $D[r,\{v\}]$. Let $w$ be the weight of $\VV^\pr$ and $p$ be the value subject to \hand. Recall that $v \in \VV^\pr$. For each $t$, we consider the subgraph $\GG_t \cap \VV^\pr$. Since the DP state $D[t,\SS]$ is updated correctly for all subgraphs $\GG_t$, the bottom up dynamic programming approach ensures that all feasible and undominated pairs are correctly propagated. If $\VV^\pr$ is a valid solution, then the corresponding (weight, profit) must be stored in some DP state. Since $v \in \VV^\pr$, the optimal states where $v$ is selected will store the pair $(w,p)$. Therefore, $D[r,\{v\}]$ contains the pair $(w,p)$. 
	
	\textbf{Running time}: There are $n$ choices for the fixed vertex $v$. Upon fixing $v$ and adding it to each bag of $(\mathbb{T}, \mathcal{X})$ we consider the total possible number of states. For every node $t$, we have $2^{|X_t|}$ choices of ${\SS}$. For each state, for each $w$, there can be at most one pair with $w$ as the first coordinate; similarly, for each $p$, there can be at most one pair with $p$ as the second coordinate. Thus, the number of undominated pairs in each $D[t,\SS]$ is at most ${\sf min}\{s,d\}$ time. Since the treewidth of the input graph \GG is at most \tw, it is possible to construct a data structure in time $\tw^{\OO(1)} \cdot n$ that allows performing adjacency queries in time $\OO(\tw)$.  For each node $t$, it takes time $\OO\left(2^{\tw} \cdot \tw^{\OO(1)} \cdot {\sf min}\{s^2,d^2\}\right)$ to compute all the values $D[t,\SS]$ and remove all undominated pairs. Since we can assume w.l.o.g that the number of nodes of the given tree decompositions is $\OO(\tw \cdot n)$, and there are $n$ choices for the vertex $v$, the running time of the algorithm is $\OO\left(2^{\tw}\cdot n^{\OO(1)} \cdot {\sf min}\{s^2,d^2\}\right)$.
\end{proof}

\begin{proof}[Proof of \Cref{sor-di-fpt}]
	We consider a nice tree decomposition of the underlying graph. Let $(\GG = (V,E),\{w_u\}_{u \in V_G}, \{p_u\}_{u \in V_G}, s,d)$ be an input instance of \sor such that $\tw=tw(\GG)$. Let $\VV^\pr$ be an optimal solution to \sor. For technical purposes, we guess a vertex $v \in \VV^\pr$ --- once the guess is fixed, we are only interested in finding solution subsets $\hat{\VV}$ that contain $v$ and $\VV^\pr$ is one such candidate. We also consider a nice edge tree decomposition $(\mathbb{T} = (V_{\mathbb{T}},E_{\mathbb{T}}),\mathcal{X})$ of $\GG$ that is rooted at a node $r$, and where $v$ has been added to all bags of the decomposition. Therefore, $X_{r} = \{v\}$ and each leaf bag is the singleton set $\{v\}$.
	
	We define a function $\ell: V_{\mathbb{T}} \rightarrow \mathbb{N}$ as follows.
	For a vertex $t \in V_\mathbb{T}$, $\ell(t) = \sf{dist}_\mathbb{T}(t,r)$, where $r$ is the root. Note that this implies that $\ell(r) = 0$. Let us assume that the values that $\ell$ take over the nodes of $\mathbb{T}$ are between $0$ and $L$. For a node $t\in V_\mathbb{T}$, we denote the set of vertices in the bags in the subtree rooted at $t$ by $V_t$ and $\GG_t=\GG[V_t]$. Now, we describe a dynamic programming algorithm over $(\mathbb{T},\mathcal{X})$. We have the following states.

	\textbf{States:} We maintain a DP table $D$ where a state has the following components:
	\begin{enumerate}
		\item $t$ represents a node in $V_\mathbb{T}$.
		\item $\SS$ represents a subset of the vertex subset $X_t$
		\item $\PP$ represents a subset of the vertex subset $\SS$ 
	\end{enumerate}
	\textbf{Interpretation of States}
	For each node \( t \in \mathbb{T} \), we maintain a list \( D[t, \SS, \PP] \) for every subset \( \SS \subseteq X_t \) and \( \PP \subseteq \SS \). Each entry in this list holds a set of undominated (weight, profit) pairs corresponding to valid \sor solutions $\hat{\SS}$ in the graph \( \GG_t \), where:  
	
	\begin{itemize}  
		\item The selected vertices in \( \GG_t \) satisfy \(\hat{\SS} \cap X_t = \SS \).  
		\item The set of profit-contributing vertices is \( \PP \), meaning all vertices in \( \PP \) are part of \( \SS \) and each has at least one selected out-neighbor in \( \GG_t \) if it has a out-neighbor in \( \GG_t \).  
		\item If no such valid solution exists, we set \( D[t, \SS, \PP] = \emptyset \).  
	\end{itemize}  
	
	For each state $D[t,\SS,\PP]$, we initialize $D[t,\SS,\PP]$ to the list $\{(0,0)\}$.
	
	\textbf{Dynamic Programming on $D$}: We update the table $D$ as follows. We initialize the table of states with nodes $t\in V_\mathbb{T}$ such that $\ell(t)=L$. When all such states are initialized, then we move to update states where the node $t$ has $\ell(t) = L-1$, and so on, till we finally update states with $r$ as the node --- note that $\ell(r) =0$. For a particular $j$, $0\leq j< L$ and a state $D[t,\SS,\PP]$ such that $\ell(t) = j$, we can assume that $D[t',\SS',\PP']$ have been computed for all $t'$ such that $\ell(t')>j$ and all subsets $\SS'$ of $X_{t'}$. Now we consider several cases by which $D[t,\SS,\PP]$ is updated based on the nature of $t$ in $\mathbb{T}$:
	
	\begin{enumerate}
		\item \textbf{Leaf node.} Suppose $t$ is a leaf node with $X_{t} = \{v\}$. Then the list stored in $D[t,\SS,\PP]$ depends on the sets $\SS$ and $\PP$.  If $\PP$ is not a subset of $\SS$, we store $D[t,\SS,\PP] = \emptyset$; hence assume otherwise. The only possible cases that can arise are as follows:
		\begin{enumerate}
			\item $D[t,\phi,\phi]$ stores the pair $\{(0, 0)\}$
			\item $D[t,\{v\},\{v\}]$ = $\{(w_v, p_v)\}$, if $w_v \leq s$
			\item $D[t,\{v\},\phi]$ stores the pair $\{(w_v, 0)\}$, if $w_v \leq s$
		\end{enumerate} 
		
		\item \textbf{Introduce node.} Suppose $t$ is an introduce node. Then it has only one child $t'$ where $X_{t'} \subset X_{t}$ and there is exactly one vertex $u \neq v$ that belongs to $X_{t}$ but not $X_{t'}$. Then for all $\SS \subseteq X_t$ and all $\PP \subseteq X_t$ , if $\PP$ is not a subset of $\SS$, we store $D[t,\SS,\PP] = \emptyset$; hence assume otherwise.
		\begin{enumerate}
			\item If $u \not\in \SS$ and therefore $u \not\in \PP$, 
			\begin{enumerate}
				\item if \(\exists w: w\in N^-(u), w \in \PP \) and \(N^+(w) \cap \VV_t = \{u\}\), then $D[t,\SS,\PP] = \emptyset$.
				\item Otherwise, we copy all pairs of $D[t',\SS^\pr = \SS,\PP^\pr=\PP]$ to $D[t',\SS,\PP]$. 
			\end{enumerate}
			
			\item If $u \in \SS$,
			\begin{enumerate}            
				\item if $u$ has out-neighbor $x$ in $\SS$ or \(N^+(u) \cap \VV_t = \emptyset\), then $u \in \PP$, and $\PP = \PP^\pr \cup \{u\}$,
				and for every pair $(w,p)$ in $D[t',\SS^\pr=\SS \setminus \{u\},\PP^\pr=\PP \setminus \{u\}]$ with $w + w_u \leq s$, we add $(w+w_u, p + \sum_{z \in \SS: (z,u) \in \GG, z \not\in \PP^\pr} p_z + p_u)$ and copy to $D[t,\SS,\PP]$.
				
				\item if $u$ does not have a out-neighbor $x$ in $\SS$ and \(N^+(u) \cap \VV_t \neq \emptyset\), then $u \not\in \PP$, and $\PP = \PP^\pr$,
				then for every pair $(w,p)$ in $D[t',\SS^\pr=\SS \setminus \{u\},\PP^\pr=\PP \setminus \{u\}]$ with $w + w_u \leq s$, we add $(w+w_u, p + \sum_{z \in \SS: (z,u) \in \GG, z \not\in \PP^\pr} p_z )$ and copy to $D[t,\SS,\PP]$.
				
			\end{enumerate}
		\end{enumerate}
		
		\item \textbf{Forget node.} Suppose $t$ is a forget vertex node. Then it has only one child $t'$, and there is a vertex $u \neq v$ such that $X_t\cup\{u\} = X_{t'}$. Then for all $\SS \subseteq X_t$ and all $\PP \subseteq X_t$ , if $\PP$ is not a subset of $\SS$, we store $D[t,\SS,\PP] = \emptyset$; hence assume otherwise. For all $\SS \subseteq X_t$, we copy all feasible undominated pairs stored in $D[t',\SS\cup \{u\},\PP\cup \{u\}]$, $D[t',\SS\cup \{u\},\PP]$, and $D[t',\SS,\PP]$ to $D[t,\SS,\PP]$. If any pair stored in $D[t',\SS\cup \{u\},\PP]$, and $D[t',\SS,\PP]$ is dominated by any pair of $D[t',\SS\cup \{u\},\PP\cup \{u\}]$, we copy only the undominated pairs to $D[t,\SS,\PP]$.

		\item \textbf{Join node.} Suppose $t$ is a join node. Then it has two children $t_1,t_2$ such that $X_t = X_{t_1} = X_{t_2}$. Then for all $\SS \subseteq X_t$, let $(w(\SS), p(\SS))$ be the total weight and value of the vertices in $\SS$. Consider a pair $(w_1,p_1)$ in $D[t_1,\SS_1,\PP_1]$ and a pair $(w_2,p_2)$ in $D[t_2,\SS_2,\PP_2]$, where $\SS_1 \cup \SS_2 = \SS$ and $\PP_1 \cup \PP_2 = \PP$. Then for all pairs $(w_1,p_1) \in D[t_1,\SS_1,\PP_1]$ and $(w_2,p_2) \in D[t_2,\SS_2,\PP_2]$, if $w_1 + w_2 - w(\SS_1 \cup \SS_2) \leq s$, then we add $ \left( w_1 + w_2 - w(\SS_1 \cup \SS_2), p_1+p_2- p(\PP_1 \cap \PP_2) \right)$ and copy to $D[t,\SS,\PP]$.
	\end{enumerate}
	
	Finally, in the last step of updating $D[t,\SS,\PP]$, we go through the list saved in $D[t,\SS,\PP]$ and only keep undominated pairs.

	The output of the algorithm is a pair $(w,p)$ that is maximum of those stored in $D[r,\{v\},\{v\}]$ and $D[r,\{v\},\emptyset]$ such that $w \leq s$ and $p$ is the maximum value over all pairs in $D[r,\{v\},\{v\}]$ and $D[r,\{v\},\emptyset]$.

	\textbf{Proof of correctness}: 
	Recall that we are looking for a solution that is a set of vertices $\VV^\pr$ that contains the fixed vertex $v$ that belongs to all bags of the nice tree decomposition. In each state we maintain a list of feasible and undominated (weight, profit) pairs that correspond to the solution $\hat{\SS}$ of $\GG_t$ and  satisfy \( \hat{\SS} \cap X_t = \SS \) and the set of profit contributing vertices is $\PP$.
	
	We now show that the update rules holds for each $X_t$. To prove this formally, we need to consider the cases of what $t$ can be:
	
	\begin{enumerate}
		\item \textbf{Leaf node.} Recall that in our modified nice tree decomposition we have added a vertex $v$ to all the bags. Suppose a leaf node $t$ contains a single vertex $v$, $D[t,\phi,\phi]$ stores the pair $\{(0, 0)\}$, $D[t,\{v\},\{v\}]$ = $\{(w_v, p_v)\}$ if $w_v \leq s$, $D[t,\{v\},\phi]$ stores the pair $\{(w_v, 0)\}$ if $w_v \leq s$, otherwise $\emptyset$. This is true in particular when $j = L$, the base case. From now we can assume that for a node $t$ with $\ell(t) = j < L$ and all subsets $\SS, \PP \subseteq X_t$, $D[t',\SS^\pr,\PP^\pr]$ entries are correct and correspond to a \sor solution in $\GG_t$. when $\ell(t') > j$. 
		
		\item \textbf{Introduce node.} When $t$ is an introduce node, there is a child $t'$ such that $X_t = X_{t'} \cup \{u\}$. We are introducing a vertex $u$ and the edges associated with it in $\GG_t$. When $\PP$ is not a subset of $\SS$, the state $D[t,\SS,\PP] = \emptyset$ because it violates the definition of $\PP$; hence assume otherwise. Let us prove for each case.
		\begin{enumerate}
			\item When $u$ is not included in $\SS$, it is also not in $\PP$ and the families of sets $\hat{\SS}$ considered in the definition of $D[t,\SS,\PP]$ and of $D[t',\SS,\PP]$ are equal. Notice that there may be a vertex $w \in X_{t}$ that belongs to $\PP$ such that $N^+(w) \cap V_{t'} = \emptyset$, i.e. $w$ did not have any out-neighbor in the graph $\GG_{t'}$. However, with the introduction of the vertex $u$ and the edges associated to it, if $N^+(w) \cap V_{t} = \{u\}$ i.e. $u$ is the only out-neighbor of $w$ in $\GG_t$, since $u$ does not belong to $\SS$, the \sor constraint is violated and hence the state $D[t,\SS,\PP]$ stores $\emptyset$ as it is no longer feasible. Otherwise we copy all pairs of $D[t',\SS^\pr = \SS,\PP^\pr=\PP]$ to $D[t,\SS,\PP]$.
			
			\item Now consider the case when $u$ is part of $\SS$. Let $\hat{\SS}$ be a feasible solution for \sor attained in the definition of $D[t,\SS,\PP]$. Then it follows that $\hat{\SS} \setminus \{u\}$ is one of the sets considered in the definition of $D[t',\SS \setminus \{v\},\PP \setminus \{v\}]$. Notice that all $z$ in $\SS' \setminus \PP'$ such that $z \in N^-(u)$ get activated or become profitable due to inclusion of $u$. Thus the edge $(z,u)$ in $\GG_t$ is a witness that $z$ is profitable. Now we must compute entries depending on whether $u$ contributes to profit or not. It suffices to just check if $u$ has a neighbor in $X_t$ or not because the nice tree decomposition ensures that the newly introduced vertex $u$ can have neighbors only in $X_t$. There are two subcases:
			\begin{enumerate}
				\item \textbf{Subcase 1: $u\in \PP$}. When $u$ has at least one out-neighbor $x$ selected in $\SS$ or $u$ does not have any out-neighbor in $\GG_{t}$, then $u \in \PP$ and $\PP = \PP^\pr \cup \{u\}$, and for every pair $(w,p)$ in $D[t',\SS^\pr=\SS \setminus \{u\},\PP^\pr=\PP \setminus \{u\}]$ with $w + w_u \leq s$, we add $(w+w_u, p + \sum_{z \in \SS: (z,u) \in \GG, z \not\in \PP^\pr} p_z + p_u)$ and copy to $D[t,\SS,\PP]$.
				
				\item \textbf{Subcase 2: $u \notin \PP$}. When $u$ has a out-neighbor in $\GG_t$ but none of them are selected in $\SS$, then $u$ does not become profitable and thus does not belong to $\PP$. For every pair $(w,p)$ in $D[t',\SS^\pr=\SS \setminus \{u\},\PP^\pr=\PP \setminus \{u\}]$ with $w + w_u \leq s$, we add $(w+w_u, p + \sum_{z \in \SS: (z,u) \in \GG, z \not\in \PP^\pr} p_z )$ and copy to $D[t,\SS,\PP]$.
			\end{enumerate}
		\end{enumerate}
		Since $\ell(t') > \ell(t)$, by induction hypothesis all entries in $D[t', \SS'= \SS,\PP'=\PP]$, $D[t', \SS'= \SS \setminus \{u\},\PP'=\PP]$, $D[t', \SS'= \SS,\PP'=\PP \setminus \{u\}]$ and $D[t', \SS'= \SS \setminus \{u\},\PP'=\PP \setminus \{u\}]$ $\forall$ $\SS', \PP' \subseteq X_{t'}$ are already computed. We update pairs in $D[t,\SS,\PP]$ depending on the cases discussed above.
		
		\item \textbf{Forget Node.}  When $t$ is a forget node, there is a child $t'$ such that $X_t = X_{t'} \setminus \{u\} $. Let $\hat{\SS}$ be a set for which the \sor solution is attained in the definition of $D[t,\SS,\PP]$. If $u \not\in \hat{\SS}$, then $\hat{\SS}$ is one of the sets considered in the definition of $D[t',\SS,\PP]$. And if $u \in \hat{\SS}$, then $\hat{\SS}$ is one of the sets considered in the definition of $D[t',\SS \cup \{u\},\PP]$ and $D[t',\SS \cup \{u\},\PP \cup \{u\}]$. Since $\ell(t') > \ell(t)$, by induction hypothesis all entries in $D[t',\SS' = \SS,\PP' = \PP]$,  $D[t',\SS' = \SS \cup \{u\},\PP' = \PP]$, and $D[t',\SS' = \SS\cup \{u\} ,\PP' = \PP\cup \{u\}]$ $\forall$ $\SS', \PP' \subseteq X_{t'}$ are already computed and feasible. We copy each undominated $(w,p)$ pair stored in $D[t',\SS' = \SS,\PP' = \PP]$,  $D[t',\SS' = \SS \cup \{u\},\PP' = \PP]$, and $D[t',\SS' = \SS\cup \{u\} ,\PP' = \PP\cup \{u\}]$ to $D[t,\SS,\PP]$.
		
		\item \textbf{Join node.}  When $t$ is a join node, there are two children $t_1$ and $t_2$ of $t$, such that $X_t = X_{t_1} = X_{t_2}$. Let $\hat{\SS}$ be a set for \sor attained in the definition of $D[t,\SS,\PP]$. Let $\hat{\SS}_1 = \hat{\SS} \cap V_{t_1}$ and $\hat{\SS}_2 = \hat{\SS} \cap V_{t_2}$. Observe that $\hat{\SS}_1$ is a solution to \sor in $\GG_{t_1}$ and $\hat{\SS}_1 \cap X_{t_1} = \SS_1$, so this is considered in the definition of $D[t_1, \SS, \PP]$ and similarly, $\hat{\SS}_2 \cap X_{t_2} = \SS_2$. From the definition of nice tree decomposition we know that there is no edge between the vertices of $V_{t_1} \setminus X_t$ and $V_{t_2} \setminus X_t$. Then we merge solutions from the two subgraphs and remove the over-counting. By the induction hypothesis, the computed entries in $D[t_1,\SS_1, \PP_1]$ and $D[t_2,\SS_2, \PP_2]$ where $\SS_1 \cup \SS_2 = \SS$ and $\PP_1 \cup \PP_2 = \PP$ are correct and store the feasible and undominated \sor solutions for the subgraph $G_{t_1}$ in $\SS_1$ and similarly, $\SS_2$ for $G_{t_2}$. Now we add ($w_1 + w_2 - w(\SS_1 \cup \SS_2),  p_1+p_2 -p(\PP_1 \cap \PP_2)$) to $D[t,\SS, \PP]$.
		
	\end{enumerate}
	
	What remains to be shown is that an undominated feasible solution $\VV^\pr$ of \sor in $\GG$ is contained in $D[r,\{v\},\{v\}] \cup D[t,\{v\}, \emptyset]$. Let $w$ be the weight of $\VV^\pr$ and $p$ be the value subject to \sor. Recall that $v \in \VV^\pr$. For each $t$, we consider the subgraph $\GG_t \cap \VV^\pr$. Since the DP state $D[t,\SS,\PP]$ is updated correctly for all subgraphs $\GG_t$, the bottom up dynamic programming approach ensures that all feasible and undominated pairs are correctly propagated. If $\VV^\pr$ is a valid solution, then the corresponding (weight, profit) must be stored in some DP state. Since $v \in \VV^\pr$, the optimal states where $v$ is selected will store the pair $(w,p)$. Therefore, $D[r,\{v\},\{v\}] \cup D[t,\{v\}, \emptyset]$ contains the pair $(w,p)$. 
	
	\textbf{Running time}: There are $n$ choices for the fixed vertex $v$. Upon fixing $v$ and adding it to each bag of $(\mathbb{T}, \mathcal{X})$ we consider the total possible number of states. For every node $t$, we have $2^{|X_t|}$ choices of ${\SS}$ and $2^{|X_t|}$ choices of ${\PP}$ for each choice of $\SS$. For each state, for each $w$, there can be at most one pair with $w$ as the first coordinate; similarly, for each $p$, there can be at most one pair with $p$ as the second coordinate. Thus, the number of undominated pairs in each $D[t,\SS,\PP]$ is at most ${\sf min}\{s,d\}$ time. Since the treewidth of the input graph \GG is at most \tw, it is possible to construct a data structure in time $\tw^{\OO(1)} \cdot n$ that allows performing adjacency queries in time $\OO(\tw)$.  For each node $t$, it takes time $\OO\left(4^{\tw} \cdot \tw^{\OO(1)} \cdot {\sf min}\{s^2,d^2\}\right)$ to compute all the values $D[t,\SS,\PP]$ and remove all undominated pairs. Since we can assume w.l.o.g that the number of nodes of the given tree decompositions is $\OO(\tw \cdot n)$, and there are $n$ choices for the vertex $v$, the running time of the algorithm is $\OO\left(4^{\tw}\cdot n^{\OO(1)} \cdot {\sf min}\{s^2,d^2\}\right)$.
\end{proof}

\begin{proof}[Proof of \Cref{sand-di-fpt}]
	We consider a nice tree decomposition of the underlying graph. Let $(\GG = (V,E),\{w_u\}_{u \in V_G}, \{p_u\}_{u \in V_G}, s,d)$ be an input instance of \sor such that $\tw=tw(\GG)$. Let $\VV^\pr$ be an optimal solution to \sand. For technical purposes, we guess a vertex $v \in \VV^\pr$ --- once the guess is fixed, we are only interested in finding solution subsets $\hat{\VV}$ that contain $v$ and $\VV^\pr$ is one such candidate. We also consider a nice edge tree decomposition $(\mathbb{T} = (V_{\mathbb{T}},E_{\mathbb{T}}),\mathcal{X})$ of $\GG$ that is rooted at a node $r$, and where $v$ has been added to all bags of the decomposition. Therefore, $X_{r} = \{v\}$ and each leaf bag is the singleton set $\{v\}$.
	
	We define a function $\ell: V_{\mathbb{T}} \rightarrow \mathbb{N}$ as follows.
	For a vertex $t \in V_\mathbb{T}$, $\ell(t) = \sf{dist}_\mathbb{T}(t,r)$, where $r$ is the root. Note that this implies that $\ell(r) = 0$. Let us assume that the values that $\ell$ take over the nodes of $\mathbb{T}$ are between $0$ and $L$. For a node $t\in V_\mathbb{T}$, we denote the set of vertices in the bags in the subtree rooted at $t$ by $V_t$ and $\GG_t=\GG[V_t]$. Now, we describe a dynamic programming algorithm over $(\mathbb{T},\mathcal{X})$. We have the following states.

	\textbf{States:} We maintain a DP table $D$ where a state has the following components:
	\begin{enumerate}
		\item $t$ represents a node in $V_\mathbb{T}$.
		\item $\SS$ represents a subset of the vertex subset $X_t$
		\item $\PP$ represents a subset of the vertex subset $\SS$ 
	\end{enumerate}
	\textbf{Interpretation of States}
	For each node \( t \in \mathbb{T} \), we maintain a list \( D[t, \SS, \PP] \) for every subset \( \SS \subseteq X_t \) and \( \PP \subseteq \SS \). Each entry in this list holds a set of feasible undominated (weight, profit) pairs corresponding to valid \sand solutions $\hat{\SS}$ in the graph \( \GG_t \), where:  
	
	\begin{itemize}  
		\item The selected vertices in \( \GG_t \) satisfy \(\hat{\SS} \cap X_t = \SS \). 
		\item The set of profit-contributing vertices is \( \PP \), meaning all vertices in \( \PP \) are part of \( \SS \) and each has selected all out-neighbors in \( \GG_t \).  
		\item If no such valid solution exists, we set \( D[t, \SS, \PP] = \emptyset \).  
	\end{itemize}  
	
	For each state $D[t,\SS,\PP]$, we initialize $D[t,\SS,\PP]$ to the list $\{(0,0)\}$.
	
	\textbf{Dynamic Programming on $D$}: We update the table $D$ as follows. We initialize the table of states with nodes $t\in V_\mathbb{T}$ such that $\ell(t)=L$. When all such states are initialized, then we move to update states where the node $t$ has $\ell(t) = L-1$, and so on, till we finally update states with $r$ as the node --- note that $\ell(r) =0$. For a particular $j$, $0\leq j< L$ and a state $D[t,\SS,\PP]$ such that $\ell(t) = j$, we can assume that $D[t',\SS',\PP']$ have been computed for all $t'$ such that $\ell(t')>j$ and all subsets $\SS'$ of $X_{t'}$. Now we consider several cases by which $D[t,\SS,\PP]$ is updated based on the nature of $t$ in $\mathbb{T}$:
	
	\begin{enumerate}
		\item \textbf{Leaf node.} Suppose $t$ is a leaf node with $X_{t} = \{v\}$. Then the list stored in $D[t,\SS,\PP]$ depends on the sets $\SS$ and $\PP$.  If $\PP$ is not a subset of $\SS$, we store $D[t,\SS,\PP] = \emptyset$; hence assume otherwise. The only possible cases that can arise are as follows:
		\begin{enumerate}
			\item $D[t,\phi,\phi]$ stores the pair $\{(0, 0)\}$
			\item $D[t,\{v\},\{v\}]$ = $\{(w_v, p_v)\}$, if $w_v \leq s$ 
			\item $D[t,\{v\},\phi]$ stores the pair $\{(w_v, 0)\}$, if $w_v \leq s$
		\end{enumerate} 
		
		\item  \textbf{Introduce node.} Suppose $t$ is an introduce node. Then it has only one child $t'$ where $X_{t'} \subset X_{t}$ and there is exactly one vertex $u \neq v$ that belongs to $X_{t}$ but not $X_{t'}$. Then for all $\SS \subseteq X_t$ and all $\PP \subseteq X_t$ , if $\PP$ is not a subset of $\SS$, we store $D[t,\SS,\PP] = \emptyset$; hence assume otherwise. 
		\begin{enumerate}
			\item If $u \not\in \SS$ and therefore $u \not\in \PP$, then 
			\begin{enumerate}
				\item if $\exists w : w \in \PP$ and $w \in N^-(u)$, we store for $D[t,\SS,\PP]$ = $\emptyset$.
				\item Otherwise, we copy all pairs of $D[t',\SS^\pr = \SS,\PP^\pr=\PP]$ to $D[t',\SS,\PP]$.
			\end{enumerate}
			
			\item If $u \in \SS$,
			\begin{enumerate}
				\item if $u$ does not have a out-neighbor $x$ in $X_t \setminus \SS$ i.e. \(N^+(u) \cap \VV_t = \emptyset\), then $u \in \PP$, and $\PP = \PP^\pr \cup \{u\}$, then for every pair $(w,p)$ in $D[t',\SS^\pr=\SS \setminus \{u\},\PP^\pr=\PP \setminus \{u\}]$ with $w + w_u \leq s$, we add $(w+w_u, p + p_u)$ and copy to $D[t,\SS,\PP]$.
				
				\item if $u$ has a out-neighbor $x$ in $X_t \setminus \SS$, then $u \not\in \PP$, and $\PP = \PP^\pr$, then for every pair $(w,p)$ in $D[t',\SS^\pr=\SS \setminus \{u\},\PP^\pr=\PP \setminus \{u\}]$ with $w + w_u \leq s$, we add $(w+w_u, p + 0)$ and copy to $D[t,\SS,\PP]$.
			\end{enumerate}
			\item Otherwise, we store $D[t,\SS,\PP] = \emptyset$
		\end{enumerate}
		
		\item  \textbf{Forget node.} Suppose $t$ is a forget vertex node. Then it has only one child $t'$, and there is a vertex $u \neq v$ such that $X_t\cup\{u\} = X_{t'}$. Then for all $\SS \subseteq X_t$ and all $\PP \subseteq X_t$ , if $\PP$ is not a subset of $\SS$, we store $D[t,\SS,\PP] = \emptyset$; hence assume otherwise. Then for all $\SS \subseteq X_t$:
		\begin{enumerate}
			\item we copy all feasible undominated pairs stored in $D[t',\SS\cup \{u\},\PP\cup \{u\}]$ to $D[t,\SS,\PP]$ if $u \in \SS^\pr$ and $u \in \PP^\pr$
			\item or we copy all feasible undominated pairs stored in $D[t',\SS\cup \{u\},\PP]$ to $D[t,\SS,\PP]$ if $u \in \SS^\pr$ and $u \not\in \PP^\pr$,
			\item or we copy all feasible undominated pairs stored in $D[t',\SS,\PP]$ to $D[t,\SS,\PP]$ otherwise. 
		\end{enumerate}

		\item \textbf{Join node.} Suppose $t$ is a join node. Then it has two children $t_1,t_2$ such that $X_t = X_{t_1} = X_{t_2}$. Then for all $\SS \subseteq X_t$, let $(w(\SS), p(\SS))$ be the total weight and value of the vertices in $\SS$ Consider a pair $(w_1,p_1)$ in $D[t_1,\SS_1,\PP_1]$ and a pair $(w_2,p_2)$ in $D[t_2,\SS_2,\PP_2]$ where $\SS_1 \cup \SS_2 = \SS$ and $\PP_1 \cap \PP_2 = \PP$. Suppose $w_1 + w_2 - w(\SS_1 \cup \SS_2) \leq s$, then we add $ \left( w_1 + w_2 - w(\SS_1 \cup \SS_2), p_1+p_2- p(\PP_1 \cap \PP_2) \right)$ to $D[t,\SS,\PP]$.
	\end{enumerate}
	
	Finally, in the last step of updating $D[t,\SS,\PP]$, we go through the list saved in $D[t,\SS,\PP]$ and only keep undominated pairs. 
	
	The output of the algorithm is a pair $(w,p)$ that is maximum of those stored in $D[r,\{v\},\{v\}]$ and $D[r,\{v\},\emptyset]$ such that $w \leq s$ and $p$ is the maximum value over all pairs in $D[r,\{v\},\{v\}]$ and $D[r,\{v\},\emptyset]$.
	
	\textbf{Proof of correctness}: 
	Recall that we are looking for a solution that is a set of vertices $\VV^\pr$ that contains the fixed vertex $v$ that belongs to all bags of the nice tree decomposition. In each state we maintain a list of feasible and undominated (weight, profit) pairs that correspond to the solution $\hat{\SS}$ of $\GG_t$ and  satisfy \( \hat{\SS} \cap X_t = \SS \) and the set of profit contributing vertices is $\PP$.
	
	We now show that the update rules holds for each $X_t$. To prove this formally, we need to consider the cases of what $t$ can be:
	
	\begin{enumerate}
		\item \textbf{Leaf node.} Recall that in our modified nice tree decomposition we have added a vertex $v$ to all the bags. Suppose a leaf node $t$ contains a single vertex $v$, $D[t,\phi,\phi]$ stores the pair $\{(0, 0)\}$, $D[t,\{v\},\{v\}]$ = $\{(w_v, p_v)\}$ if $w_v \leq s$, $D[t,\{v\},\phi]$ stores the pair $\{(w_v, 0)\}$ if $w_v \leq s$, otherwise $\emptyset$. This is true in particular when $j = L$, the base case. From now we can assume that for a node $t$ with $\ell(t) = j < L$ and all subsets $\SS, \PP \subseteq X_t$, $D[t',\SS^\pr,\PP^\pr]$ entries are correct and correspond to a \sor solution in $\GG_t$. when $\ell(t') > j$.
		
		\item \textbf{Introduce node.} When $t$ is an introduce node, there is a child $t'$ such that $X_t = X_{t'} \cup \{u\}$. We are introducing a vertex $u$ and the edges associated with it in $\GG_t$. When $\PP$ is not a subset of $\SS$, the state $D[t,\SS,\PP] = \emptyset$ because it violates the definition of $\PP$; hence assume otherwise. Let us prove for each case.
		\begin{enumerate}
			\item When $u$ is not included in $\SS$, it is also not in $\PP$ and the families of sets $\hat{\SS}$ considered in the definition of $D[t,\SS,\PP]$ and of $D[t',\SS,\PP]$ are equal. Notice that there may be a vertex $w \in X_{t}$ that belongs to $\PP$ such that $N^+(w) \cap V_{t'} = \emptyset$, i.e. $w$ did not have any out-neighbor in the graph $\GG_{t'}$. However, with the introduction of the vertex $u$ and the edges associated to it, if $N^+(w) \cap V_{t} = \{u\}$ i.e. $u$ is the only out-neighbor of $w$ in $\GG_t$, since $u$ does not belong to $\SS$, the \sand constraint is violated and hence the state $D[t,\SS,\PP]$ stores $\emptyset$ as it is no longer feasible. Otherwise we copy all pairs of $D[t',\SS^\pr = \SS,\PP^\pr=\PP]$ to $D[t,\SS,\PP]$.
			
			\item Now consider the case when $u$ is part of $\SS$. Let $\hat{\SS}$ be a feasible solution for \sand attained in the definition of $D[t,\SS,\PP]$. Then it follows that $\hat{\SS} \setminus \{u\}$ is one of the sets considered in the definition of $D[t',\SS \setminus \{v\},\PP \setminus \{v\}]$. Now we must compute entries depending on whether $u$ contributes to profit or not. It suffices to just check if $u$ has a neighbor in $X_t$ or not because the nice tree decomposition ensures that the newly introduced vertex $u$ can have neighbors only in $X_t$. There are two subcases:
			\begin{enumerate}
				\item \textbf{Subcase 1: $u\in \PP$}. When $u$ has all out-neighbors $x$ selected in $\SS$ or $u$ does not have any out-neighbor in $\GG_{t}$, then $u \in \PP$ and $\PP = \PP^\pr \cup \{u\}$, and then for every pair $(w,p)$ in $D[t',\SS^\pr=\SS \setminus \{u\},\PP^\pr=\PP \setminus \{u\}]$ with $w + w_u \leq s$, we add $(w+w_u, p + p_u)$ and copy to $D[t,\SS,\PP]$.
				
				\item \textbf{Subcase 2: $u \notin \PP$}. When $u$ has at least one out-neighbor in $\GG_t$ that is not selected are selected in $\SS$, then $u$ does not become profitable and thus does not belong to $\PP$. For then for every pair $(w,p)$ in $D[t',\SS^\pr=\SS \setminus \{u\},\PP^\pr=\PP \setminus \{u\}]$ with $w + w_u \leq s$, we add $(w+w_u, p + 0)$ and copy to $D[t,\SS,\PP]$.
			\end{enumerate}
		\end{enumerate}
		Since $\ell(t') > \ell(t)$, by induction hypothesis all entries in $D[t', \SS'= \SS,\PP'=\PP]$, $D[t', \SS'= \SS,\PP'=\PP \setminus \{u\}]$ and $D[t', \SS'= \SS \setminus \{u\},\PP'=\PP \setminus \{u\}]$ $\forall$ $\SS', \PP' \subseteq X_{t'}$ are already computed. We update pairs in $D[t,\SS,\PP]$ depending on the cases discussed above.
		
		\item \textbf{Forget Node.}  When $t$ is a forget node, there is a child $t'$ such that $X_t = X_{t'} \setminus \{u\} $. Let $\hat{\SS}$ be a set for which the \sand solution is attained in the definition of $D[t,\SS,\PP]$. If $u \not\in \hat{\SS}$, then $\hat{\SS}$ is one of the sets considered in the definition of $D[t',\SS,\PP]$. And if $u \in \hat{\SS}$, then $\hat{\SS}$ is one of the sets considered in the definition of $D[t',\SS \cup \{u\},\PP]$ and $D[t',\SS \cup \{u\},\PP \cup \{u\}]$. Since $\ell(t') > \ell(t)$, by induction hypothesis all entries in $D[t',\SS' = \SS,\PP' = \PP]$,  $D[t',\SS' = \SS \cup \{u\},\PP' = \PP]$, and $D[t',\SS' = \SS\cup \{u\} ,\PP' = \PP\cup \{u\}]$ $\forall$ $\SS', \PP' \subseteq X_{t'}$ are already computed and feasible. We copy each undominated $(w,p)$ pair stored in $D[t',\SS' = \SS,\PP' = \PP]$,  $D[t',\SS' = \SS \cup \{u\},\PP' = \PP]$, and $D[t',\SS' = \SS\cup \{u\} ,\PP' = \PP\cup \{u\}]$ to $D[t,\SS,\PP]$ depending on whether $u$ belonged to $\SS$ or $\SS \cup \{u\}$.
		
		\item \textbf{Join node.}  When $t$ is a join node, there are two children $t_1$ and $t_2$ of $t$, such that $X_t = X_{t_1} = X_{t_2}$. Let $\hat{\SS}$ be a set for \sor attained in the definition of $D[t,\SS,\PP]$. Let $\hat{\SS}_1 = \hat{\SS} \cap V_{t_1}$ and $\hat{\SS}_2 = \hat{\SS} \cap V_{t_2}$. Observe that $\hat{\SS}_1$ is a solution to \sand in $\GG_{t_1}$ and $\hat{\SS}_1 \cap X_{t_1} = \SS_1$, so this is considered in the definition of $D[t_1, \SS, \PP]$ and similarly, $\hat{\SS}_2 \cap X_{t_2} = \SS_2$. From the definition of nice tree decomposition we know that there is no edge between the vertices of $V_{t_1} \setminus X_t$ and $V_{t_2} \setminus X_t$. Then we merge solutions from the two subgraphs and remove the over-counting. By the induction hypothesis, the computed entries in $D[t_1,\SS_1, \PP_1]$ and $D[t_2,\SS_2, \PP_2]$ where $\SS_1 \cup \SS_2 = \SS$ and $\PP_1 \cap \PP_2 = \PP$ are correct and store the feasible and undominated \sand solutions for the subgraph $G_{t_1}$ in $\SS_1$ and similarly, $\SS_2$ for $G_{t_2}$. Now we add ($w_1 + w_2 - w(S_1 \cup S_2),  p_1+p_2 -p(\PP_1 \cap \PP_2)$) to $D[t,\SS, \PP]$.
	\end{enumerate}
	
	What remains to be shown is that an undominated feasible solution $\VV^\pr$ of \sand in $\GG$ is contained in $D[r,\{v\},\{v\}] \cup D[t,\{v\}, \emptyset]$. Let $w$ be the weight of $\VV^\pr$ and $p$ be the value subject to \sand. Recall that $v \in \VV^\pr$. For each $t$, we consider the subgraph $\GG_t \cap \VV^\pr$. Since the DP state $D[t,\SS,\PP]$ is updated correctly for all subgraphs $\GG_t$, the bottom up dynamic programming approach ensures that all feasible and undominated pairs are correctly propagated. If $\VV^\pr$ is a valid solution, then the corresponding (weight, profit) must be stored in some DP state. Since $v \in \VV^\pr$, the optimal states where $v$ is selected will store the pair $(w,p)$. Therefore, $D[r,\{v\},\{v\}] \cup D[t,\{v\}, \emptyset]$ contains the pair $(w,p)$. 
	
	\textbf{Running time}: There are $n$ choices for the fixed vertex $v$. Upon fixing $v$ and adding it to each bag of $(\mathbb{T}, \mathcal{X})$ we consider the total possible number of states. For every node $t$, we have $2^{|X_t|}$ choices of ${\SS}$ and $2^{|X_t|}$ choices of ${\PP}$ for each choice of $\SS$. For each state, for each $w$, there can be at most one pair with $w$ as the first coordinate; similarly, for each $p$, there can be at most one pair with $p$ as the second coordinate. Thus, the number of undominated pairs in each $D[t,\SS,\PP]$ is at most ${\sf min}\{s,d\}$ time. Since the treewidth of the input graph \GG is at most \tw, it is possible to construct a data structure in time $\tw^{\OO(1)} \cdot n$ that allows performing adjacency queries in time $\OO(\tw)$.  For each node $t$, it takes time $\OO\left(4^{\tw} \cdot \tw^{\OO(1)} \cdot {\sf min}\{s^2,d^2\}\right)$ to compute all the values $D[t,\SS,\PP]$ and remove all undominated pairs. Since we can assume w.l.o.g that the number of nodes of the given tree decompositions is $\OO(\tw \cdot n)$, and there are $n$ choices for the vertex $v$, the running time of the algorithm is $\OO\left(4^{\tw}\cdot n^{\OO(1)} \cdot {\sf min}\{s^2,d^2\}\right)$.
\end{proof}

}
\end{document}